\documentclass[11pt]{article}

\usepackage{lmodern}

\usepackage[margin=1in]{geometry}
\geometry{letterpaper}

\usepackage{amsfonts,amssymb,amsmath,amsthm,mathtools}
\usepackage{thmtools}

\usepackage{graphicx,latexsym,lpic,bm,xspace,booktabs,array,multirow,changepage}

\usepackage{microtype}

\usepackage[bf]{caption}

\usepackage[usenames,dvipsnames]{xcolor}

\usepackage[bottom]{footmisc}
\usepackage{todonotes}

\usepackage{hyperref}

\usepackage{algorithmicx,algorithm}
\usepackage[noend]{algpseudocode}
\algnewcommand\algorithmicinput{\text{Input:}}
\algnewcommand\algorithmicoutput{\text{Output:}}
\algnewcommand\Input{\item[\algorithmicinput]}
\algnewcommand\Output{\item[\algorithmicoutput]}
\algrenewcommand\algorithmicindent{1.5em}

\usepackage{dsfont}
\usepackage[greek,english]{babel}
\addto\extrasenglish{}
\addto\extrasenglish{}

\usepackage[nottoc]{tocbibind}
\usepackage{tocloft}

\usepackage{enumitem}
\setlist[itemize]{noitemsep,label=$-$}
\setlist[enumerate]{noitemsep}

\hypersetup{
	colorlinks=true,
	linkcolor=Sepia,
	citecolor=Sepia,
	filecolor=Sepia,
	urlcolor=Sepia
}

\usepackage{tikz}
\usepackage[framemethod=tikz]{mdframed}
\mdfsetup{
	roundcorner=4pt,
	nobreak=true,
	linewidth=.5,
	innerleftmargin=10pt,
	innerrightmargin=10pt,
	innertopmargin=10pt,
	innerbottommargin=10pt,
	skipabove=12,skipbelow=5pt
}
\definecolor{myGold}{RGB}{231,141,20}
\definecolor{myBlue}{rgb}{0.19,0.41,.65}
\definecolor{myPurple}{RGB}{175,0,124}
\definecolor{myGreen}{RGB}{14,126,6}

\usetikzlibrary{arrows.meta,positioning} 


\declaretheorem[name=Theorem]{theorem}
\declaretheorem[name=Lemma,sibling=theorem]{lemma}
\declaretheorem[name=Claim,sibling=theorem]{claim}
\declaretheorem[name=Fact,sibling=theorem]{fact}

\declaretheorem[name=Definition,style=definition]{definition}
\declaretheorem[name=Conjecture,style=definition]{conjecture}

\declaretheorem[name={Simplex\hspace{.4em}Lemma},numbered=no]{sim-lemma}
\newcommand{\simlemma}{\hyperref[lem:sim-lemma]{Simplex Lemma}\xspace}

\declaretheorem[name={Round~Lemma},numbered=no]{round-lemma}
\newcommand{\roundlemma}{\hyperref[lem:round-lemma]{Round Lemma}\xspace}

\declaretheorem[name={Large~Index~Lemma},numbered=no]{index-lemma}
\newcommand{\indexlemma}{\hyperref[lem:index-lemma]{Large Index Lemma}\xspace}


\DeclareMathOperator{\poly}{poly}

\DeclareMathOperator{\Cube}{Cube}
\DeclareMathOperator{\free}{free}
\DeclareMathOperator{\fix}{fix}
\DeclareMathOperator{\err}{err}

\DeclareMathOperator{\myRef}{Ref}

\newcommand{\Alice}{\text{\slshape Alice}}
\newcommand{\Bob}{\text{\slshape Bob}}

\newcommand{\smallFunction}[2]{\newcommand{#1}{{\textsc{#2}}}}
\smallFunction{\Ind}{Ind}
\smallFunction{\PMInd}{PMInd}
\smallFunction{\Xor}{Xor}
\smallFunction{\GHS}{Gap-Hitting-Set}

\newcommand{\newsftext}[2]{\newcommand{#1}{{\text{\upshape\sffamily #2}}}}
\newsftext{\cc}{cc}
\newsftext{\dt}{dt}

\newcommand{\newclass}[2]{\newcommand{#1}{{\text{\upshape\sffamily #2}}\xspace}}
\renewcommand{\P}{{\text{\upshape\sffamily P}}\xspace}
\renewcommand{\d}{{\text{\upshape\sffamily d}}\xspace}
\newclass{\NP}{NP}
\newclass{\coNP}{coNP}

\newclass{\rectDag}{rect-dag}
\newclass{\decDag}{dec-dag}
\newclass{\ltfDag}{ltf-dag}
\newclass{\simDag}{sim-dag}

\newclass{\ltfTree}{ltf-tree}
\newclass{\decTree}{dec-tree}

\newclass{\res}{res}
\newclass{\cut}{cut}

\newclass{\resTree}{res-tree}
\newclass{\cutTree}{cut-tree}

\newclass{\w}{w}
\newclass{\bw}{bw}

\newclass{\rcc}{rCC}



\DeclareMathOperator*{\Exp}{\mathbf{E}}

\newcommand{\set}[1]{\left\{ #1 \right\}}

\newcommand{\R}{\mathbb{R}}

\newcommand{\calA}{\mathcal{A}}
\newcommand{\calB}{\mathcal{B}}

\newcommand{\calF}{\mathcal{F}}
\newcommand{\calI}{\mathcal{I}}
\newcommand{\calP}{\mathcal{P}}
\newcommand{\calO}{\mathcal{O}}
\newcommand{\calT}{\mathcal{T}}

\newcommand{\calS}{\mathcal{S}}

\newcommand{\calX}{\mathcal{X}}
\newcommand{\calY}{\mathcal{Y}}

\renewcommand{\Pr}{\mathbf{Pr}}
\newcommand{\Ent}{\mathbf{H}}
\newcommand{\Hmin}{\mathbf{H}_\infty}
\newcommand{\Dmin}{\mathbf{D}_\infty}

\newcommand{\x}{\bm{x}}
\newcommand{\y}{\bm{y}}

\newcommand{\X}{\bm{X}}
\newcommand{\Y}{\bm{Y}}

\usepackage[toc]{multitoc}

\setlength{\columnsep}{25pt}
\setlength{\columnseprule}{0.2pt}



%

\begin{document}

\newgeometry{margin=1in,top=1.6in,bottom=1in}

\begin{center}
{\LARGE Automating Cutting Planes is $\NP$-Hard}
\\[10mm]

\large
\setlength\tabcolsep{0em}
\newcommand{\myPad}{\hspace{2.5em}}
\begin{tabular}{c@{\myPad}c@{\myPad}c@{\myPad}c}
Mika G\"o\"os$^\dagger$ &
Sajin Koroth &
Ian Mertz &
Toniann Pitassi\\[-.5mm]
\small\slshape Stanford University&
\small\slshape Simon Fraser University &
\small\slshape Uni.\ of Toronto&
\small\slshape Uni.\ of Toronto \& IAS
\end{tabular}

\vspace{8mm}

\large\today

\vspace{8mm}

\normalsize
\bf Abstract
\end{center}
\begin{adjustwidth}{7.5mm}{7.5mm}
We show that Cutting Planes (CP) proofs are hard to find: Given an unsatisfiable formula~$F$,
\begin{enumerate}[label=(\arabic*),itemsep=1mm]
	\item it is $\NP$-hard to find a CP refutation of $F$ in time polynomial in the length of the shortest such refutation; and
	\item unless {\scshape Gap-Hitting-Set} admits a nontrivial algorithm, one cannot find a \emph{tree-like} CP refutation of $F$ in time polynomial in the length of the shortest such refutation.
\end{enumerate}
The first result extends the recent breakthrough of Atserias and M\"uller ({\small FOCS 2019}) that established an analogous result for Resolution. Our proofs rely on two new lifting theorems: (1)~Dag-like lifting for gadgets with \emph{many output bits}. (2)~Tree-like lifting that simulates an $r$-round protocol with gadgets of query complexity $O(\log r)$ independent of input length.
\end{adjustwidth}

\vspace{4mm}
\setlength{\cftbeforesecskip}{8pt}
\setcounter{tocdepth}{2}
\tableofcontents

\renewcommand*{\thefootnote}{\fnsymbol{footnote}}
\footnotetext[2]{Part of the work done while at Institute for Advanced Study.}
\renewcommand*{\thefootnote}{\arabic{footnote}}
\setcounter{footnote}{0}

\thispagestyle{empty}
\setcounter{page}{0}
\newpage
\restoregeometry

\section{Introduction}

Propositional proof systems are by nature \emph{non-deterministic}: a short refutation of a formula~$F$ in a particular proof system constitutes an easy-to-check certificate (an $\NP$-witness) of~$F$'s unsatisfiability (which is a $\coNP$-property). The question of efficiently finding such refutations is the foundational problem of \emph{automated theorem proving} with applications to algorithm design, e.g., for combinatorial optimization~\cite{Fleming2019}. The following definition is due to Bonet et al.~\cite{bonet00interpolation}.

\begin{quote}
{\bf\itshape Automatability.}~
A proof system $\calP$ is \emph{automatable} if there is an algorithm that on input an unsatisfiable CNF formula $F$ outputs some $\calP$-refutation of $F$ in time polynomial in the length (or size) of the shortest $\calP$-refutation of $F$.
\end{quote}

\emph{Algorithms.}
Several basic propositional proof systems are automatable when restricted to proofs of bounded \emph{width} or \emph{degree}. For example, Resolution refutations of width $w$ can be found in time $n^{O(w)}$ for $n$-variate formulas~\cite{ben-sasson01short}. Efficient algorithms also exist for finding bounded-degree refutations in algebraic proof systems such as Nullstellensatz, Polynomial Calculus~\cite{cei}, Sherali--Adams, and Sum-of-Squares (under technical assumptions)~\cite{ODonnell2017,RaghavendraW17}.

\emph{Hardness.}
Without restrictions on width or degree, many of these systems are known \emph{not} to be automatable. For the most basic system, Resolution, a long line of work~\cite{Iwama1997,Alekhnovich2001,Alekhnovich2008,Mertz2019} recently culminated in an optimal non-automatability result by Atserias and M\"uller~\cite{Atserias2019}. They showed that Resolution is not automatable unless $\P=\NP$. Under stronger hardness assumptions non-automatability results are known for Nullstellensatz and Polynomial Calculus~\cite{Galesi2010,Mertz2019} as well as for various Frege systems~\cite{KrajicekP98,BonetPR97,BonetDGMP04}.

\paragraph{This work.}
The above list conspicuously omits to mention any hardness results for the Cutting Planes (CP) proof system (defined in \autoref{sec:cp} below). Indeed, we show the first such results:
\begin{itemize}[leftmargin=15mm,itemsep=1mm]
	\item[(\S\ref{sec:dag-result})] It is $\NP$-hard to automate CP. This is an Atserias--M\"uller style result for CP.
	\item[(\S\ref{sec:tree-result})] Under a stronger assumption, it is hard to automate \emph{tree-like} CP.
\end{itemize}
One reason Cutting Planes has been lacking non-automatability results is because of the shortage of techniques to prove lower bounds on CP refutation length. Virtually the only known method has been to find reductions to \emph{monotone circuit} lower bounds (for example, via monotone feasible interpolation). Our proofs rely on two new \emph{lifting theorems}, one of which bypasses the need for monotone circuit lower bounds. See \autoref{sec:overview} for an overview of our techniques.

\subsection{Cutting Planes} \label{sec:cp}

Cook, Coullard, and Tur\'an~\cite{cook87complexity} introduced Cutting Planes as a propositional proof system inspired by a like-named method to solve integer linear programs. The method uses rounding of linear inequalities (Chv\'atal--Gomory cuts) to reason about the integral solutions to a linear program.

The proof system version of CP is defined as follows. Suppose we are given a CNF formula $F$ over variables $x_1,\ldots,x_n$. A (dag-like) \emph{Cutting Planes refutation} of $F$ is a sequence of \emph{lines} $\ell_1,\ldots,\ell_m$ (where $m$ is the \emph{length}), each line being a linear inequality, $\sum_i a_ix_i\geq b$, with integer coefficients, $a_i,b\in\mathbb{Z}$. We require that the sequence ends with the contradictory inequality $\ell_m\coloneqq [\,0\geq 1\,]$ and that each $\ell_i$ satisfies one of the following:
\begin{itemize}[itemsep=2mm]
\item \emph{Axiom.}~ Line $\ell_i$ is either a boolean axiom ($x_i\geq 0$ or $-x_i\geq -1$) or an encoding of a clause of~$F$ (for example, clause $(x_1\lor \bar{x}_2)$ gets encoded as $x_1+(1-x_2)\geq 1$).
\item \emph{Derivation.}~ Line $\ell_i$ is deduced from two \emph{premises} $\ell_j, \ell_{j'}$ where $j,j'<i$ (perhaps $j=j'$) by an application of a \emph{sound rule}. (A refutation is \emph{tree-like} if each line appears at most once as a premise.)
\end{itemize}
In the original paper~\cite{cook87complexity} the rules were: (1) deriving from $\ell_j,\ell_{j'}$ any nonnegative integer linear combination of them, and (2) deriving from $\sum a_ix_i\geq b$ the line $\sum (a_i/c)x_i\geq \lceil b/c\rceil$ where $c\coloneqq \operatorname{gcd}(a_1,\ldots,a_n)$. Stronger rules have also been studied, e.g.,~\cite{Cook1990,Balas1993}, the most general being the \emph{semantic rule}, which allows \emph{any} sound inference: $\ell_i$ can be derived from $\ell_j,\ell_{j'}$ provided every boolean vector $x\in\{0,1\}^n$ that satisfies both $\ell_j$ and $\ell_{j'}$ also satisfies $\ell_i$. In this paper, we adopt the best of all possible worlds: our lower bounds on CP refutation length will hold even against the semantic system and our upper bounds use the weakest possible rules (in fact, our upper bounds hold for Resolution, which is simulated by every variety of CP).

\subsection{Dag-like result} \label{sec:dag-result}

Our first main result is a CP analogue of the Atserias--M\"uller theorem~\cite{Atserias2019}.
\begin{theorem}[Dag-like] \label{thm:main-dag}
There is a polynomial-time algorithm $\calA$ that on input an $n$-variate 3-CNF formula $F$ outputs an unsatisfiable CNF formula $\calA(F)$ such that:
\begin{itemize}
\item If $F$ is \emph{satisfiable}, then $\calA(F)$ admits a CP refutation of length at most $n^{O(1)}$.
\item If $F$ is \emph{unsatisfiable}, then $\calA(F)$ requires CP refutations of length at least $2^{n^{\Omega(1)}}$.
\end{itemize}
\end{theorem}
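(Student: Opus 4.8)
The plan is to prove \autoref{thm:main-dag} by lifting the Atserias--M\"uller reduction from the Resolution setting to the Cutting Planes setting, using the dag-like lifting theorem for many-output-bit gadgets announced in the abstract. Fix a length parameter $s=n^{\Theta(1)}$ and let $\myRef(F)$ be the Atserias--M\"uller ``refutation statement'' CNF: a $\poly(n)$-size, block-structured formula whose variables describe a purported length-$s$ Resolution refutation $C_1,\dots,C_s$ of $F$ --- $\Theta(n)$ membership bits per line $C_a$, plus pointer and bookkeeping variables --- and whose clauses assert that each $C_a$ is an axiom of $F$ or a resolvent/weakening of earlier lines and that $C_s$ is the empty clause. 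By the Atserias--M\"uller construction this formula is unsatisfiable for every $F$; I would then set $\calA(F)\coloneqq \myRef(F)\circ g$, where $g$ is the many-output-bit gadget and the composition replaces each line-block by a single copy of $g$.

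The easy direction is the upper bound for satisfiable $F$: $\calA(F)$ should have a $\poly(n)$-length CP (in fact Resolution) refutation. The witness is the ``line-by-line evaluation'' refutation of $\myRef(F)$: fixing a satisfying assignment $\alpha\models F$, derive in order, for $a=1,\dots,s$, the clause asserting ``$C_a$ contains a literal satisfied by $\alpha$'' --- using the axioms of $F$ for lines that are axioms, and soundness of the resolution rule for the inductive step --- and reach a contradiction at the empty line $C_s$. I would check that, because each line is exactly one block, hence one copy of $g$, this proof composes with $g$ with only $\poly(n)$ overhead rather than the usual $2^{(\text{input length of }g)}$ blow-up, \emph{provided} $g$ is chosen CP-friendly (e.g.\ built from parities/partial sums, whose relations CP manipulates with polynomially many lines). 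This is precisely where the many-output-bit feature earns its keep on the upper-bound side: a whole clause description is a single gadget, so ``evaluating one line'' costs a bounded number of CP lines no matter how long $g$'s input is.

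For the hard direction --- unsatisfiable $F$ forcing CP refutations of length $2^{n^{\Omega(1)}}$ --- I would invoke the dag-like lifting theorem for many-output-bit gadgets: any semantic CP refutation of $\myRef(F)\circ g$ has length at least $2^{\Omega(c)}$, where $c$ measures the dag-like (resolution-like, decision-tree-style) complexity of the canonical ``find a falsified clause'' search problem of $\myRef(F)$. It then remains to show $c=n^{\Omega(1)}$ whenever $F$ is unsatisfiable: this is the Atserias--M\"uller length lower bound for $\myRef(F)$, whose bottleneck/adversary proof is naturally a statement about exactly this search problem, so I would recast (and where necessary adapt) their argument into the clean dag-like query lower bound that the lifting theorem consumes. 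Combining the two directions gives the desired polynomial-time reduction $\calA$; and since the upper bound only uses Resolution while the lower bound holds against semantic CP, the statement is insensitive to the precise set of CP rules.

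The step I expect to be the main obstacle is forcing a single gadget $g$ to meet three competing demands at once: (i) $g$ must be ``hard enough'' --- large inputs, enough stifling / low discrepancy --- for the lifting theorem to amplify $c=n^{\Omega(1)}$ to a CP length bound of $2^{n^{\Omega(1)}}$; (ii) $g$ must be ``tame enough'' that block-composition of the line-by-line evaluation proof stays polynomial, so the upper bound survives; and (iii) the entire lifting argument must be redeveloped for \emph{blocks} of correlated variables replaced by one multi-output gadget rather than the standard one-bit-per-gadget setting --- in particular one must show that a short semantic CP refutation of $\myRef(F)\circ g$ induces a cheap dag-like communication protocol for the lifted search problem, which is the genuinely new ingredient. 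Reconciling (i) with (ii) is exactly what forces ``many output bits'', and establishing (iii) is where the real work lies; everything else is the Atserias--M\"uller construction or routine bookkeeping.
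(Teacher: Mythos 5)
Your top-level reduction is the same as the paper's: set $\calA(F)$ to be the Atserias--M\"uller formula $\myRef(F)$ composed block-wise with a multi-output gadget, get the satisfiable case by composing their short refutation with the gadget, and get the unsatisfiable case from a dag-like lifting theorem for many-output gadgets. But two points in your plan are genuine gaps rather than routine bookkeeping. First, the quantity that must be lifted is Resolution \emph{block-width}, and it must appear on both sides of the lifting statement: the paper's \autoref{thm:block-lift-cp} gives $\cut(F\circ g_\ell^n)\geq m^{\Omega(\bw(F))}$ together with an upper bound $m^{O(\bw(\Pi))}\cdot|\Pi|$ that depends on the block-width of the given refutation, not on its width (lifting ordinary width would destroy the satisfiable case, since $\myRef(F)$ has width $n^{\Omega(1)}$ even when its block-width is $O(1)$). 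You gesture at this (``a whole clause description is a single gadget''), but your hardness step proposes to ``recast the Atserias--M\"uller length lower bound'' into a query-type bound; no recasting is needed or wanted --- what \autoref{lem:am} supplies, and what the reduction consumes, is exactly the block-width dichotomy ($O(1)$ versus $n^{\Omega(1)}$). Likewise, your requirement that $g$ be ``CP-friendly, e.g.\ built from parities'' is unnecessary (and potentially dangerous for the lower bound): the paper's upper bound is carried out entirely in Resolution, at cost $m^{O(b)}$ clauses per block-width-$b$ clause, using the column-index gadget $\Ind_{\ell\times m}$, so no CP arithmetic over the gadget is ever used.

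Second, and more seriously, the heart of the theorem is the lifting lower bound itself, which you invoke but do not prove, and your one-sentence plan for it --- extract from a short semantic CP refutation of the composed formula a cheap dag-like \emph{communication} protocol for the lifted search problem --- is precisely the two-party, monotone-interpolation-style route that the paper explicitly could not make work for multi-output gadgets. The paper instead bypasses two-party protocols and proves the bound against a multi-party model, simplex-dags, with one player per output row of each gadget; the key structural fact (\autoref{lem:rho-like}: in a $\rho$-structured box one can fix Alice's input and still realize every gadget output pattern consistent with $\rho$, because distinct output bits are controlled by distinct players) has a short union-bound proof that is not known to have any two-party analogue. Without this multi-party idea, or some substitute for it, the unsatisfiable direction of your reduction is unsupported.
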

Consequently, it is $\NP$-hard to approximate the minimum CP proof length up to a factor of $2^{n^{\varepsilon}}$ for some $\varepsilon > 0$. In particular, CP is not automatable unless $\P=\NP$.

\subsection{Tree-like result} \label{sec:tree-result}

Our second result is a similar theorem for \emph{tree-like} Cutting Planes. However, we need a stronger hardness assumption (which is morally necessary; see \autoref{sec:tree-overview}) that we now formulate.

An \emph{$n$-set system} is a collection $\calS = \{S_1,\ldots,S_n\}$ where $S_i\subseteq[n]$ for each $i\in[n]$. A subset $H \subseteq [n]$ is a \emph{hitting set} for $\calS$ if $H \cap S_i \neq \emptyset$ for all~$i \in [n]$. The \emph{hitting set number} of $\calS$, denoted~$\gamma(\calS)$, is the minimum size of a hitting set for~$\calS$. The $k$-$\GHS$ promise problem is to distinguish between the cases $\gamma(\calS) \leq k$ versus $\gamma(\calS) \geq k^2$. A trivial algorithm can solve this problem in time $n^{O(k)}$. It is conjectured that there are no nontrivial algorithms for $k$ as large as $(1-\epsilon)\log n$. Under the Exponential-Time Hypothesis~\cite{eth}, the problem is known to be hard up to $k \leq (\log\log n)^{1-o(1)}$~\cite{Lin2019}. We need an assumption that is stronger by a hair's breadth.

\begin{conjecture} \label{conj:hitting-set}
The~$k$-$\GHS$ problem requires time $n^{\Omega(k)}$ for some $k=k(n)$ with
\begin{equation} \label{eq:k}
\omega(\log\log n)~\leq~k(n)~\leq~\log^{1/3}n. \tag{$\dagger$}
\end{equation}
\end{conjecture}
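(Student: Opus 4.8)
Since the paper adopts \autoref{conj:hitting-set} as a working hypothesis rather than proving it, a genuine ``proof'' would be a new contribution to parameterized complexity; here is the route I would attempt. The natural starting point is the known ETH-hardness of gap Hitting Set (equivalently gap Set Cover / Dominating Set): Lin's gap-producing reduction~\cite{Lin2019} turns a $3$-SAT instance on $N$ variables into an $n$-set system whose hitting-set number is either $\le k$ or $\ge k^2$, and the lower bound it yields is $n^{\Omega(k)}$ only for $k$ up to about $(\log\log n)^{1-o(1)}$. The reason $k$ is capped there is purely quantitative: the amplification from a constant gap to the $k$-versus-$k^2$ gap inflates the universe super-polynomially in the parameter (the exponent of $N$ grows with $k$), so after choosing $N$ to make $2^{\Omega(N)}$-time ETH bite, one is forced to $k=\Theta(\log\log n)$. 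To reach the window $\omega(\log\log n)\le k\le \log^{1/3}n$ of~\eqref{eq:k} I would try to shave this blow-up.

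The plan splits into two concrete lines. \textbf{(a)~Sharpen the composition:} reexamine the gap-amplification gadget (the threshold-graph / Reed--Solomon-coded product) and try to replace its $N^{\Theta(k)}$-type blow-up by, say, $N^{O(\log k)}$ or $N^{O(1)}\cdot 2^{O(k)}$; even a quasi-polynomial dependence on $N$ together with a sub-exponential dependence on $k$ would push the attainable parameter up to $\log^{\Omega(1)}n$, comfortably inside the window. \textbf{(b)~Start from Gap-ETH:} invoke the Karthik--Laekhanukit--Manurangsi ``distributed PCP'' machinery for parameterized Dominating Set, which under Gap-ETH already rules out $F(k)$-approximation in $f(k)\cdot n^{o(k)}$ time for \emph{every} $F$; specialize it to the fixed $k$-versus-$k^2$ gap, translate Dominating Set to Hitting Set, and verify the resulting $n$ and $k$ can be tuned so that $k(n)$ lands in $[\omega(\log\log n),\log^{1/3}n]$. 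Line (b) is the more realistic of the two, since the hard PCP-with-near-linear-blow-up work is already done there; the price is upgrading the hypothesis from ETH to Gap-ETH (or a quantitatively strong Parameterized Inapproximability Hypothesis) --- exactly the ``hair's breadth'' strengthening the authors allude to.

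\textbf{The hard part} is that every known gap-producing reduction for Hitting Set inflates the universe super-polynomially in $k$ through error-correcting-code or hypercube-product gadgets, and it is precisely this inflation that pins the ETH-optimal bound at $k\approx\log\log n$. Producing a gap reduction with near-polynomial total blow-up while preserving the $k$-versus-$k^2$ gap is the crux, and is why the statement is (correctly) phrased as a conjecture; absent such a reduction, the pragmatic route is to concede a slightly stronger starting assumption and carry out line (b) carefully enough to certify the parameter regime in~\eqref{eq:k}.
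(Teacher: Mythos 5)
There is nothing to compare here: the statement you were given is a \emph{conjecture}, and the paper does not prove it --- it is adopted as an unproven hardness hypothesis (playing the same role that ETH or the non-collapse of the fixed-parameter hierarchy plays in Alekhnovich--Razborov and its follow-ups), and Theorem~\ref{thm:main-tree} is proved \emph{conditionally} on it. You correctly recognize this, and your write-up is a research program rather than a proof, which is the honest framing. As such it contains no error, but note that neither of your two lines would actually establish \autoref{conj:hitting-set} as stated: line~(a) --- removing the super-polynomial-in-$k$ blow-up of known gap-producing reductions so that ETH alone pushes the parameter past $\log\log n$ --- is precisely the open problem that forces the paper to posit the conjecture, and line~(b) does not prove the conjecture but replaces it by a different (Gap-ETH-style) hypothesis, i.e., it would change the assumption underlying Theorem~\ref{thm:main-tree} rather than discharge it. Your description of the state of the art (hardness up to $k \leq (\log\log n)^{1-o(1)}$ under ETH via \cite{Lin2019}, with the window \eqref{eq:k} just beyond it) matches the paper's own discussion, so the only caveat is to keep clear that what you have written is a plausible attack plan on an open question, not a proof that could be checked against anything in the paper.
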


Our second main result says that tree-like CP is not automatable under \autoref{conj:hitting-set}.
\begin{theorem}[Tree-like] \label{thm:main-tree}
Let $k=k(n)$ satisfy \eqref{eq:k}. There is an $n^{o(k)}$-time algorithm ${\cal A}$ that on input an $n$-set system $\calS$,
outputs a CNF formula $\calA(\calS)$ such that:
\begin{itemize}
\item If $\gamma(\calS) \leq k$, then ${\cal A}(\calS)$ admits a tree-like CP refutation of length at most 
$n^{o(k)}$.
\item If $\gamma(\calS) \geq k^2$, then ${\cal A}(\calS)$ requires tree-like CP refutations
of length at least $n^{\omega(k)}$.
\end{itemize}
\end{theorem}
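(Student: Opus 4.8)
The plan is to combine a reflection-principle construction in the style of Atserias--M\"uller with the tree-like lifting theorem, using Gap-Hitting-Set in place of $3$-SAT as the source of hardness. Given an $n$-set system $\calS$, let $\mathrm{HS}_\calS$ be the CNF over indicator variables of a subset $H\subseteq[n]$ asserting that $H$ hits every $S_i$ and $|H|<k^2$; thus $\mathrm{HS}_\calS$ is satisfiable iff $\gamma(\calS)<k^2$, hence satisfiable when $\gamma(\calS)\le k$ and unsatisfiable when $\gamma(\calS)\ge k^2$. Let $g$ be the gadget of the tree-like lifting theorem, of query complexity $O(\log r)$ and --- crucially --- of size $\poly(r)$ \emph{independent of $n$}; taking $r=\poly(k)$, set
\[
  \calA(\calS)\ \coloneqq\ \myRef_s\bigl(\mathrm{HS}_\calS\circ g\bigr),
\]
the CNF asserting that $\mathrm{HS}_\calS\circ g$ has a tree-like CP refutation of length at most $s$, where the threshold $s$ (roughly $n^{1/\log\log n}$) is small enough that $\calA$ runs in time $n^{o(k)}$ yet large enough to make the reflection formula meaningful. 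In the yes case $\mathrm{HS}_\calS\circ g$ is satisfiable, and in the no case (shown below) its shortest refutation far exceeds $s$; either way $\calA(\calS)$ is unsatisfiable, as the automatability reduction needs.

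\emph{Upper bound.} If $\gamma(\calS)\le k$, a hitting set of size $\le k$ gives a satisfying assignment $\alpha$ of $\mathrm{HS}_\calS$, which lifts through $g$ to a satisfying assignment of $\mathrm{HS}_\calS\circ g$. Following Atserias--M\"uller: every axiom line of a purported refutation is satisfied by $\alpha$ and every sound step preserves satisfaction by $\alpha$, whereas the claimed final line $0\ge 1$ is not, so binary-searching along the claimed derivation locates a line violated by $\alpha$ with satisfied premises and yields a contradiction in $\poly(s)$ lines, the search structure keeping the refutation tree-like. With $s$ as above this is a tree-like CP refutation of $\calA(\calS)$ of length $\poly(s)=n^{o(k)}$.

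\emph{Lower bound --- the crux.} If $\gamma(\calS)\ge k^2$, everything reduces to the implication
\[
  \gamma(\calS)\ge k^2\quad\Longrightarrow\quad\mathrm{HS}_\calS\circ g\ \text{requires tree-like CP refutations of length}\ n^{\omega(k)}.
\]
Granting this: $\calA(\calS)$ is unsatisfiable (its inner formula has no size-$s$ refutation), and a refutation of $\calA(\calS)$ shorter than $n^{\omega(k)}$ could be unwound --- via the tree-like analogue of the Atserias--M\"uller argument, routed through the falsified-clause search problem rather than through width --- into a comparably short refutation of $\mathrm{HS}_\calS\circ g$, a contradiction. To prove the displayed implication I would (a) establish a combinatorial \emph{round lower bound}: the falsified-clause search problem of $\mathrm{HS}_\calS$ admits no $r$-round protocol with $r$ below a sufficiently large fixed power of $k$, because certifying a violated set-constraint essentially forces the protocol to pin down a hitting set's worth of the assignment, which $\gamma(\calS)\ge k^2$ forbids --- handling \emph{arbitrary} $\calS$ here plausibly requires first padding $\calS$ to enlarge its sets without changing $\gamma(\calS)$; and (b) feed this bound into the tree-like lifting theorem with gadget $g$ of query complexity $O(\log r)$, obtaining the stated $n^{\omega(k)}$. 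Condition \eqref{eq:k} is exactly what aligns the two sides: $\log k=o(\log n)$ collapses the yes-case count $2^{O(k\log k)}\cdot\poly(n)$ to $n^{o(k)}$, while the input-length-independent gadget limits the loss in (b) to $\poly(k)$, so a $\poly(k)$ round gap survives as $n^{\omega(k)}$; and $k\to\infty$ forbids brute force.

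With the dichotomy $n^{o(k)}$ versus $n^{\omega(k)}$ established, non-automatability follows in the usual way: running a putative automating algorithm on $\calA(\calS)$ and outputting ``$\gamma(\calS)\le k$'' iff it halts with a valid refutation within $n^{o(k)}$ steps would decide $k$-$\GHS$ in time $n^{o(k)}$, contradicting \autoref{conj:hitting-set}. I expect the main obstacles to be the combinatorial round lower bound of step~(a) --- it must be strong enough to survive the lift \emph{and} valid for every set system with $\gamma(\calS)\ge k^2$, which is delicate given how little room the $n^{O(k)}$-time algorithm for $\GHS$ leaves --- and making the Atserias--M\"uller unwinding go through for tree-like CP, where the resolution-style width argument is unavailable.
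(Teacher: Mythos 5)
Your architecture is genuinely different from the paper's, and the difference is where the gaps lie. The paper does \emph{not} use a reflection principle in the tree-like case at all: it takes the formula $\calB(\calS)$ of Mertz et al.\ (\autoref{lem:mpw}), which is \emph{always unsatisfiable} and whose Resolution \emph{depth} already exhibits the gap $O(\log n)$ vs.\ $\Omega(k\log n)$ depending on $\gamma(\calS)$, and simply composes it with the index gadget of size $m=\log^2 n$, invoking the small-gadget lifting theorem (\autoref{thm:small-lift}) to turn the depth gap into a tree-like CP length gap $n^{O(\log\log n)}$ vs.\ $n^{\Omega(k\log\log n)}$. In your plan, the entire burden is shifted onto two steps you do not prove. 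First, the claimed round/depth lower bound for the naive formula $\mathrm{HS}_\calS$ (for \emph{every} $\calS$ with $\gamma(\calS)\ge k^2$) is exactly the content of the Alekhnovich--Razborov/MPW machinery; it is obtained there for a carefully engineered formula, not for the naive encoding with a cardinality constraint, and your heuristic (``the protocol must pin down a hitting set's worth of the assignment'') would at best give depth about $k^2$, far short of the $\Omega(k\log n)$ the parameters require. Second, your lower-bound transfer for the reflection layer---``unwinding'' a short tree-like CP refutation of $\myRef_s(\mathrm{HS}_\calS\circ g)$ into a short refutation of the inner formula---has no known analogue: in Atserias--M\"uller (and in this paper's dag-like part) hardness of the reflection formula is proved \emph{directly}, via block-width and block lifting, not by unwinding; also encoding and verifying CP lines inside a CNF reflection formula (unbounded coefficients, coNP-hard semantic inferences) is itself a substantial obstacle you elide.

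There is also a quantitative incompatibility in your parameter choice. You insist on a gadget of size $\poly(k)$ ``independent of $n$,'' but \autoref{thm:small-lift} lifts only $\min(\d(F),m)$ with base $m$, so with $m=\poly(k)$ the largest lower bound it can ever yield is $m^{O(m)}=2^{\poly(k)\log k}$; for $k$ near the lower end of \eqref{eq:k}, i.e.\ $k=\omega(\log\log n)$ barely, this is $2^{\polylog\log n}$, which is nowhere near the required $n^{\omega(k)}=2^{\omega(k\log n)}$. The paper's gadget is independent of the \emph{number of variables} but not of $n$: it has size $\log^2 n$, chosen so that the yes case gives $m^{O(\log n)}=n^{O(\log\log n)}\le n^{o(k)}$ (this is where $k=\omega(\log\log n)$ is used) while the no case gives $m^{\Omega(k\log n)}\ge n^{\omega(k)}$. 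As written, your plan cannot be repaired by tuning constants; it would need both the MPW-style depth lower bound and a gadget growing with $\log n$, at which point it collapses back to the paper's construction without the (unsupported) reflection layer.
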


\section{Overview of proofs} \label{sec:overview}

In this section, we explain why both of our main results (dag-like and tree-like) follow from appropriate kinds of \emph{lifting theorems}. Abstractly speaking, a \emph{lifting theorem} is a tool that translates a lower-bound result for a weak model of computation (for us, Resolution) into an analogous lower-bound result for a strong model of computation (for us, Cutting Planes). Starting with Raz and McKenzie~\cite{raz99separation} such theorems now exist for an enormous variety of computational models. In proof complexity alone, prior examples of lifting applications include~\cite{bonet00relative,huynh12virtue,Goos2018,rezende16limited,garg18monotone,goos19monotone,Rezende2019}. We provide two more.

\subsection{Dag-like case}
Our proof of \autoref{thm:main-dag} builds directly on top of the breakthrough of Atserias and M\"uller~\cite{Atserias2019}. Given an $n$-variate 3-CNF formula $F$, they construct a formula $\myRef(F)$, which is an intricate CNF encoding of the claim ``$F$ admits a short Resolution refutation.'' Luckily, the exact details of~$\myRef(F)$ are not important for us. We only need a few high-level properties of their construction.

\paragraph{Block-width.}
The variables of $\myRef(F)$ come partitioned into some number of \emph{blocks}. Given a clause $D$ over the variables of $\myRef(F)$, we define its \emph{block-width} as the number of blocks that $D$ \emph{touches}, that is, contains a variable (or its negation) from that block. The \emph{block-width} of a Resolution refutation is the maximum block-width of any of its clauses.

\begin{lemma}[Atserias--M\"uller~\cite{Atserias2019}] \label{lem:am}
There is a polynomial-time algorithm that on input an $n$-variate 3-CNF formula $F$ outputs an unsatisfiable\footnote{Strictly speaking, $\myRef(F)$, as defined in \cite{Atserias2019}, may sometimes be satisfiable, in which case its Resolution width/length complexity is understood as $\infty$. However this case is equivalent to our reformulation, as we can guarantee that $\myRef(F)$ is always unsatisfiable by consider instead the CNF formula $\myRef(F)\land T$ where $T$ is some formula over disjoint variables known to require large width (e.g., Tseitin contradictions~\cite{urquhart87hard}).} CNF formula $\myRef(F)$ such that
\begin{itemize}
	\item If $F$ is \emph{satisfiable}, then $\myRef(F)$ admits a $n^{O(1)}$-length $O(1)$-block-width Resolution refutation.
	\item If $F$ is \emph{unsatisfiable}, then $\myRef(F)$ requires Resolution refutations of block-width at least $n^{\Omega(1)}$.
\end{itemize}
\end{lemma}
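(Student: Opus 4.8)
The plan is to obtain this lemma from the construction of Atserias and M\"uller~\cite{Atserias2019} essentially off the shelf, once we (i)~fix the right block partition of the variables of $\myRef(F)$ and (ii)~upgrade their ordinary-width statement to a block-width statement. Recall that $\myRef(F)$ is a CNF encoding of ``$F$ has a Resolution refutation of length $s=n^{O(1)}$''; its variables split naturally into $s$ \emph{line blocks} $B_1,\ldots,B_s$, where $B_i$ collects all variables describing the $i$-th line of the putative refutation (the literals it contains, together with a pointer recording how that line was derived). We take this as our block partition, placing the auxiliary variables of the Tseitin factor from the footnote into singleton blocks. It then remains to re-verify the two bullets with block-width in place of width.

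For the satisfiable case, fix a satisfying assignment $\alpha\models F$ and write down the standard ``semantic evaluation'' refutation of $\myRef(F)$. Sweeping through $i=1,\ldots,s$ in order, derive the clause $E_i$ asserting ``$\alpha$ satisfies the $i$-th line'': a disjunction over the literal-indicator variables of $B_i$ corresponding to literals true under $\alpha$, together with the pointer variables of $B_i$. Then $E_1$ follows from the axioms forcing line $1$ to be an initial clause of $F$ (which $\alpha$ satisfies); each later $E_i$ follows from $E_j,E_k$ for the two premises $j,k<i$ of line $i$ together with the soundness-of-resolution axioms attached to $B_i$; and $E_s$, combined with the axiom ``line $s$ is the empty clause'', produces the contradiction. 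Every clause in this derivation reads variables from at most the three blocks $B_i,B_j,B_k$, so the refutation has block-width $O(1)$ and length $n^{O(1)}$ --- even though its ordinary width is as large as $\Omega(n)$, which is precisely why passing to block-width matters here.

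For the unsatisfiable case we appeal to the Atserias--M\"uller lower bound, and this is the step I expect to carry the real weight. Their theorem is stated for ordinary width, and a single wide clause of $\myRef(F)$ could a priori be supported inside one block, so a raw width bound does not formally imply a block-width bound. I would therefore reopen their argument rather than invoke only its statement: whichever route one follows (a Ben-Sasson--Wigderson-style ``intermediate complexity'' argument or a Prover--Adversary game), the quantity being lower-bounded is the number of \emph{distinct lines} of the encoded refutation about which some clause of any candidate refutation must simultaneously record consistent information, and this number is $n^{\Omega(1)}$ when $F$ is unsatisfiable. Since a clause that commits to $t$ lines touches $t$ blocks, the identical argument yields block-width $\ge n^{\Omega(1)}$; carrying it out amounts to checking that every quantity their proof bounds from below is a count of blocks (lines) and never a count of literals within a block. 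The only remaining loose end is the benign degenerate case of the footnote, where $\myRef(F)$ is already satisfiable: there any refutation, restricted by a satisfying assignment of the reflection part, becomes a refutation of the Tseitin factor, whose width hardness is a block-width hardness because its variables sit in singleton blocks.
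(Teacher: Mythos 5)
The paper does not reprove this lemma: it is imported from Atserias--M\"uller essentially as a citation, the only added content being the footnote's trick of conjoining a Tseitin formula $T$ over fresh variables (which indeed must be viewed as spread over many small blocks, e.g.\ singletons, for the trick to yield a block-width bound --- you are right to make this explicit) so that $\myRef(F)$ is always unsatisfiable. Your choice of block partition (one block per encoded line), your satisfiable-case refutation deriving, line by line, the wide-but-single-block clauses ``$\alpha$ satisfies line $i$'', and your restriction argument for the degenerate case are all consistent with how the construction is actually used, so those parts are fine.

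The problem is the unsatisfiable case, which you yourself identify as carrying the real weight. Your premise --- that Atserias and M\"uller only establish an ordinary-width lower bound, which would not formally imply a block-width bound since width always dominates block-width --- is a sound caution but does not reflect their paper: the block-width lower bound for $\myRef(F)$ when $F$ is unsatisfiable is precisely their intermediate result, the very quantity that their relativization step then amplifies into a $2^{\Omega(b)}$ length bound (this is how the present paper describes their proof, and why it can invoke \autoref{lem:am} as a black box). So no ``reopening'' of their argument is needed. Conversely, if you do insist on re-deriving it, your proposal does not actually do so: ``checking that every quantity their proof bounds from below is a count of blocks (lines) and never a count of literals within a block'' is a promissory note, not an argument --- no adversary strategy, restriction argument, or clause measure is exhibited --- so the one claim the lemma hinges on is left unestablished in your write-up. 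Either cite the block-width lemma of \cite{Atserias2019} directly, as the paper does, or supply the lower-bound argument in full; as written, the proposal does neither.
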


Atserias and M\"uller finish their proof by modifying $\myRef(F)$ slightly via \emph{relativization}, an operation due to Danchev and Riis~\cite{Dantchev2003} (see also~\cite{Garlik2019}). What this operation achieves is to turn a formula requiring block-width $b$ into a formula requiring Resolution length $2^{\Omega(b)}$. If $F$ is unsatisfiable, relativized-$\myRef(F)$ will have exponential length complexity. On the other hand, if $F$ is satisfiable, relativized-$\myRef(F)$ continues to have a short Resolution refutation, inherited from $\myRef(F)$.

In this paper, in order to make $\myRef(F)$ hard for Cutting Planes (when $F$ is unsatisfiable), we will modify the formula by block-wise \emph{composing (aka lifting)} it with a small gadget, an operation similar to relativization.

\paragraph{Lifting width.}
Recently, Garg et al.~\cite{garg18monotone} introduced a new lifting-based lower-bound technique for Cutting Planes: they showed how to lift Resolution width to Cutting Planes length. Namely, if $F$ is an $n$-variate formula requiring Resolution width $w$, then for a careful choice of a \emph{gadget} $g\colon\{0,1\}^m\to\{0,1\}$, $m=n^{O(1)}$, the composed formula $F\circ g^n$---obtained from $F$ by substituting each of its variables with a copy of $g$---has Cutting Planes length complexity $n^{\Theta(w)}$.

What would happen if we tried to apply the lifting result of~\cite{garg18monotone} to the formula $\myRef(F)$? When $F$ is unsatisfiable, we indeed do get (using width $\geq$ block-width) that $\myRef(F)\circ g^n$ requires exponential-length CP refutations. However, when $F$ is satisfiable, even though $\myRef(F)$ is promised to have block-width $O(1)$, its usual width still turns out to be $n^{\Omega(1)}$. Therefore the composition with $g$ would blow up the length complexity, not creating the desired gap in CP proof length.

\paragraph{Lifting block-width.}

Our idea, in short, is to build on~\cite{garg18monotone} and prove a lifting theorem for block-width (instead of width). Suppose $F$ is a formula whose $n\ell$ variables are partitioned into $n$ many blocks of $\ell$ variables each (typically $\ell=n^{\Theta(1)}$). We will consider compositions $F\circ g_\ell^n$ with a \emph{multi-output} gadget $g_\ell\colon\{0,1\}^m\to\{0,1\}^\ell$, one gadget for each block; see \autoref{sec:blocks} for the formal definition. Below, $\res(\,\cdot\,)$ denotes Resolution length complexity, $\cut(\,\cdot\,)$ denotes Cutting Planes length complexity, and $\bw(\,\cdot\,)$ denotes Resolution block-width complexity.
\begin{theorem}[Block lifting] \label{thm:block-lift-cp}
Fix an unsatisfiable CNF formula $F$ having $n$ many blocks of $\ell$ variables each. There is a gadget $g_\ell\colon\{0,1\}^m\to\{0,1\}^\ell$ where $m\coloneqq (n\ell)^{\Theta(1)}$ such that
\[
m^{\Omega(\bw(F))}
~\leq~
\cut(F\circ g_\ell^n)
~\leq~
\res(F\circ g_\ell^n)
~\leq~
m^{O(\bw(\Pi))}\cdot |\Pi|,
\]
where $\Pi$ is any Resolution refutation of $F$ of length $|\Pi|$ and block-width $\bw(\Pi)$.
\end{theorem}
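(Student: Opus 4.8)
The plan is to prove the two-sided bound by combining a query-to-communication lifting theorem for the multi-output gadget $g_\ell$ with the ``block-wise'' structure, mimicking the strategy of Garg et al.~\cite{garg18monotone} but tracking blocks rather than individual variables. The natural intermediate model is the standard DAG-like communication / ``query-dag'' framework: a CP refutation of $F\circ g_\ell^n$ of length $s$ yields a real-communication (or ``$\rcc$'') refutation DAG of size $s$ solving the falsified-clause search problem of $F\circ g_\ell^n$; this is the standard CP-to-communication simulation (Impagliazzo--Pitassi--Urquhart / Kraj\'i\v cek), and it is exactly where the multi-output nature of $g_\ell$ forces the gadget's input length $m$ to be polynomially large so that the communication cost per bit of the gadget is still affordable. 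Dually, a Resolution refutation $\Pi$ of $F$ of block-width $\bw(\Pi)$ can be lifted to a short Resolution (hence CP) refutation of $F\circ g_\ell^n$ by simulating each clause of $\Pi$, which touches at most $\bw(\Pi)$ blocks, using $m^{O(\bw(\Pi))}$ clauses to ``spell out'' the gadget inputs on those blocks; this gives the upper bound $\res(F\circ g_\ell^n)\le m^{O(\bw(\Pi))}\cdot|\Pi|$, and $\cut\le\res$ is free. So the real content is the lower bound $\cut(F\circ g_\ell^n)\ge m^{\Omega(\bw(F))}$.

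For the lower bound, first I would fix the gadget: take $g_\ell$ to be (a suitable number $\ell$ of independent copies of) the standard index/inner-product-type gadget $\Ind_m\colon\{0,1\}^{\log m}\times\{0,1\}^m\to\{0,1\}$ — i.e. $g_\ell=\Ind_m^{\,\ell}$ laid out on a shared address block plus $\ell$ target blocks, or whatever variant makes the block structure clean — with $m=(n\ell)^{\Theta(1)}$ chosen large enough to satisfy the hypotheses of the underlying query-to-communication lifting theorem with the needed parameters. Then the argument is: suppose $F\circ g_\ell^n$ has a CP refutation of length $s$; convert it to a size-$s$ real-communication refutation DAG for the search problem $\mathrm{Search}(F\circ g_\ell^n)$ under the block-wise partition of inputs between the two players; apply the lifting theorem (in the DAG-like, ``block'' form) to extract from this a Resolution refutation of $F$ of block-width $O(\log s/\log m)$; since $F$ requires block-width $\bw(F)$, we get $\log s/\log m=\Omega(\bw(F))$, i.e. $s\ge m^{\Omega(\bw(F))}$.

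The main obstacle I expect is the lifting step itself in the \emph{block-wise, DAG-like} setting with a \emph{multi-output} gadget: existing lifting theorems (e.g. for the index gadget) are stated for single-bit gadgets and track ordinary width, and one must verify that the structural / prover-rectangle argument survives when each ``variable'' of $F$ is replaced by a whole block fed to a gadget with $\ell$ outputs, so that a protocol learning a few blocks corresponds to a decision-tree / Resolution prover querying a few blocks of $F$. Concretely I would need (i) a min-entropy / ``thickness'' restriction-tracking argument showing that unless the communication transcript is long, the induced distribution on each touched block stays structured enough to simulate by a block-query, and (ii) a careful accounting that the extra $\ell$ output bits per block cost only an additive $O(\log\ell)\subseteq O(\log m)$ in the per-block communication, which is absorbed into the $m^{\Omega(\cdot)}$. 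I would either cite a black-box ``block lifting for DAGs'' statement if one is available in the literature in this form, or prove it by adapting the Garg et al.\ argument; the rest (the CP-to-real-communication simulation and the upper-bound direction) is routine and follows standard templates.
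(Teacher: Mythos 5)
Your upper bound and the general reduction skeleton match the paper, but your plan for the lower bound has a genuine gap at exactly the step you flag as the ``main obstacle,'' and it is not a gap that can be closed by ``adapting the Garg et al.\ argument'' in the form you describe. You propose to pass from a CP refutation of $F\circ g_\ell^n$ to a \emph{two-party} real-communication dag (under a bipartition of the variables) and then invoke a dag-like lifting theorem for the multi-output gadget in that two-party setting. No such theorem is available, and the paper states explicitly that the authors do not know how to prove the required structural statement for two parties: the crux is a lemma asserting that a ``$\rho$-structured'' product set has full image $\Cube(\rho)$ under the block gadget, and already for single-output gadgets the two-party version of this in \cite{garg18monotone} needed a long Fourier-analytic argument; with $\ell$ output bits per block all determined by a single Bob (who holds the whole matrix), that argument is not known to go through. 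Your accounting point (ii) --- that the extra $\ell$ outputs cost only $O(\log\ell)$ communication per block --- misidentifies the difficulty: the problem is not communication cost but the image/structure lemma for rectangles under a multi-output gadget.

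The paper's route is different precisely to sidestep this. It replaces the two-party model by a \emph{multi-party} model, simplex-dags: Alice holds all $n$ pointers and there are $n\ell$ Bobs, each holding one row of one matrix, and every LTF (hence every semantic CP line) is a monotone, i.e.\ simplex-indicator-bounded, function under this $(1+n\ell)$-partition, so $\simDag \le \ltfDag = \cut$. In this partition each of the $\ell$ output bits of a block is determined by a \emph{different} Bob, so the output bits of a structured box are independent and the key lemma (``$\rho$-structured boxes are $\rho$-like,'' \autoref{lem:rho-like}) follows from a short union bound over bad pointer values --- no Fourier analysis, and this is what makes multi-output block lifting possible. The rest of the paper's argument (the Simplex Lemma partition, the simulation extracting a block-width-$O(d)$ decision-dag, and the error bookkeeping) is then a multi-party analogue of \cite{garg18monotone}. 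So to repair your proposal you would either have to prove the currently unknown two-party multi-output structured-rectangle lemma, or switch, as the paper does, to the simplex-dag lower bound model.
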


Our main dag-like theorem (\autoref{thm:main-dag}) now follows immediately by combining \autoref{lem:am} and \autoref{thm:block-lift-cp}. Namely, consider the algorithm $\calA$ that on input an $n$-variate $3$-CNF formula $F$ outputs the CNF formula $\calA(F)\coloneqq \myRef(F)\circ g_\ell^k$ where $\myRef(F)$ has $k\leq n^{O(1)}$ many blocks with $\ell\leq n^{O(1)}$ variables each. We only need to note that this composed formula is constructible in polynomial time, which will be evident from the formal definition; see \autoref{fact:ref} in \autoref{sec:cnf-enc}. Therefore, to prove \autoref{thm:main-dag} it remains to prove \autoref{thm:block-lift-cp}, which we do in \autoref{sec:dag-lifting}.

\paragraph{Relation to monotone circuits.}

To conclude this subsection, we offer some philosophical musings on the techniques used to prove \autoref{thm:block-lift-cp}. Non-automatability results for Cutting Planes have been elusive in part because of the limitations of existing techniques to prove lower bounds on refutation length (as required by the second item in \autoref{thm:main-dag}). The only technique available for some twenty years has been \emph{monotone feasible interpolation}~\cite{bonet97lower,krajicek97interpolation,hrubes18note}, which translates lower bounds for (real) monotone circuits to lower bounds on Cutting Planes length. Historically, the downside with the technique was that it only seemed to apply to highly specialized formulas (e.g., clique-vs-coloring). However, the technique was recently extended to handle a more general class of formulas, random $\Theta(\log n)$-CNFs~\cite{hrubes17randomformulas,fleming17randomcnf}. The only other available lower-bound technique is the aforementioned lifting theorem~\cite{garg18monotone}. That technique is also powerful enough to prove lower bounds not only on CP length, but also on monotone circuit size. (Whether lifting should be classified under monotone interpolation is up for debate, since this depends on how broadly one defines monotone interpolation.)

In contrast, our \autoref{thm:block-lift-cp} is \emph{not} proved through monotone circuit lower bounds, but through a new weaker model of computation, dubbed \emph{simplex-dags} in \autoref{sec:sim-dag}. At the heart of monotone interpolation is a characterization of monotone circuits by a two-party communication game~\cite{Razborov1995,pudlak10extracting,sokolov17dag}. In this language, our \autoref{thm:block-lift-cp} is obtained not by studying a two-party communication model, but rather a \emph{multi-party} model. Considering a large number of communicating parties is what allows us to analyze multi-output gadgets; we do not know how to do this with only two parties.

\subsection{Tree-like case} \label{sec:tree-overview}

Our proof of \autoref{thm:main-tree} builds on the important paper by Alekhnovich and Razborov~\cite{Alekhnovich2008} (which has been followed up by \cite{Galesi2010,Mertz2019}). They show that tree-like Resolution is not automatable assuming the fixed parameter hierarchy does not collapse (which is implied by the Exponential-Time Hypothesis). Since tree-like Resolution proofs {\it can} be found in quasipolynomial-time (we say tree-like Resolution is \emph{quasipolynomially} automatable), they need to assume more than $\NP$-hardness. Our results will inherit this need for a stronger assumption (namely, \autoref{conj:hitting-set}), even though tree-like CP is not known to be quasipolynomially automatable.

The reduction of Alekhnovich and Razborov is somewhat complicated, but luckily we will only need as our starting point the following lemma from the follow-up work~\cite{Mertz2019}.

\begin{lemma}[Mertz et al.~\cite{Mertz2019}]\label{lem:mpw}
Let $k \leq \log^{1/3} n$. There is a polynomial-time algorithm $\calB$ that on input a $n$-set system $\calS$ outputs an unsatisfiable $O(\log n)$-CNF formula $\calB(\calS)$ such that
\begin{itemize}
\item If $\gamma(\calS) \leq k$, then $\calB(\calS)$ admits a Resolution refutation of depth $O(\log n)$.
\item If $\gamma(\calS) \geq k^2$, then $\calB(\calS)$ requires Resolution refutations of depth $\Omega(k \log n)$.
\end{itemize}
\end{lemma}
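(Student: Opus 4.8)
The plan is to realise $\calB(\calS)$ as the CNF encoding of a parameterised combinatorial principle attached to $\calS$ and to control its Resolution depth through the classical game characterisation. The first step is to recall that for any unsatisfiable CNF $\phi$ the minimum depth of a Resolution refutation of $\phi$---which coincides for tree-like and dag-like proofs, since unfolding a dag into a tree preserves path lengths---equals the decision-tree complexity of the search problem $\mathrm{Search}(\phi)$ (on input an assignment, output a falsified clause), equivalently the least $d$ for which the Prover wins the $d$-round Prover--Delayer game on $\phi$. Hence it suffices to design $\calB(\calS)$ so that this game value is $O(\log n)$ when $\gamma(\calS)\le k$ and $\Omega(k\log n)$ when $\gamma(\calS)\ge k^2$.

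For the construction I would follow the Alekhnovich--Razborov template, streamlined as in \cite{Mertz2019}: start from an auxiliary principle over a small number of logical ``selector'' variables, each ranging over the universe $[n]$, whose clauses force any consistent play of the game to exhibit a hitting set of $\calS$, and then substitute for each $[n]$-valued selector its $\lceil\log n\rceil$-bit binary encoding. Binary encoding keeps every clause of width $O(\log n)$ and multiplies the game value by $\Theta(\log n)$. The normalisation by $k$---so that $\gamma(\calS)\le k$ yields $O(\log n)$ rather than $O(k\log n)$---would be built into the principle by letting each logical round commit a whole block of up to $k$ universe elements at once, so that a hitting set of size $\le k$ can be planted in a single block.

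The upper bound is then the easy half: given a hitting set $H$ with $|H|\le k$, I would have the decision tree query only the $O(\log n)$ bits of the first block, fix them to the elements of $H$, and thereby immediately falsify one of the ``some set is unhit'' axioms; correctness is a direct inspection of the clause set. The lower bound is the crux and, I expect, the main obstacle. Here one must equip the Delayer with an explicit strategy and an invariant ensuring that the set $C$ of universe elements ``fully committed'' so far stays small---of size $<k^2$ for as long as fewer than $k$ blocks have been completed---so that, because $\gamma(\calS)\ge k^2$, some $S_j$ avoids $C$, no clause has yet been falsified, and the Delayer can keep answering. Upgrading this to a bound of $\Omega(k\log n)$ rounds requires a potential argument that accounts for the partial information a single bit-query reveals about a selector and shows the invariant persists for that many steps; this bookkeeping is the technical heart of \cite{Mertz2019}, whose argument we would invoke for the details.
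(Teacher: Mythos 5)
First, a point of comparison: the paper does not prove this statement at all --- it is imported verbatim as a black box from Mertz et al.~\cite{Mertz2019} (just as \autoref{lem:am} is imported from Atserias--M\"uller), so the ``paper's own proof'' is simply the citation. Your final move --- invoking the bookkeeping of \cite{Mertz2019} for the lower bound --- is therefore, in spirit, what the paper itself does. But judged as a proof, your proposal has genuine gaps beyond that deferral. The $\Omega(k\log n)$ Delayer/potential argument that you postpone \emph{is} the entire content of the lemma; and appealing to the analysis in \cite{Mertz2019} is only legitimate if the formula you construct is (essentially) theirs, which your sketch does not establish --- you describe your own variant (``blocks committing up to $k$ elements at once'') and then borrow the analysis of a different construction.

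Moreover, the sketch does not deliver even the easy direction as stated. A block encoding up to $k$ elements of $[n]$ in binary consists of $\Theta(k\log n)$ Boolean variables, so ``query only the $O(\log n)$ bits of the first block'' does not typecheck for $k=\omega(1)$; Resolution depth counts one Boolean query per step, so ``letting each round commit a whole block'' cannot normalize the depth, and the natural outcome of your construction is a gap of $O(k\log n)$ versus $\Omega(k^2\log n)$, not $O(\log n)$ versus $\Omega(k\log n)$. This matters: in the paper's application (composing with a gadget of size $m=\log^2 n$), an upper bound of $O(k\log n)$ would give tree-like CP length $n^{O(k\log\log n)}$ in the satisfiable case, which is not $n^{o(k)}$ and so would not contradict \autoref{conj:hitting-set}. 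The upper-bound argument is also not a valid decision-tree argument as written: a refutation cannot ``fix'' the queried bits to the elements of $H$ --- the adversary answers the queries, and the tree must reach a falsified clause on \emph{every} branch; the role of a small hitting set in the actual construction is to guarantee that a shallow tree exists for all answers, not to let the Prover plant $H$. So the correct resolution here is either to cite \cite{Mertz2019} outright (as the paper does) or to reproduce their specific formula and both directions of their depth analysis; the hybrid in your proposal does neither.
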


To prove \autoref{thm:main-tree}, our plan is once again to compose the formula $\calB(\calS)$ with a (single-output-bit) gadget in order to lift the Resolution depth gap in \autoref{lem:mpw} into tree-like CP length gap. To this end, we develop a new lifting theorem for ``small'' gadgets.

\paragraph{Limitations of existing methods.}
Let $F$ be an unsatisfiable $n$-variate formula with Resolution depth complexity $\d(F)$. Existing lifting theorems~\cite{bonet00relative,rezende16limited} when applied to $F$ would require a gadget $g\colon\{0,1\}^{\poly(n)}\to\{0,1\}$ that can be computed by a decision tree of depth $\Theta(\log n)$ and hence of size $\poly(n)$. Writing $\resTree(\,\cdot\,)$ for tree-like Resolution length complexity, and $\cutTree(\,\cdot\,)$ for tree-like CP length complexity, the lifting theorems~\cite{bonet00relative,rezende16limited} show
\begin{equation} \label{eq:prior}
\cutTree(F\circ g^n)
~=~
\resTree(F\circ g^n)^{\Theta(1)}
~=~
n^{\Theta(\d(F))}.
\end{equation}
The base of the exponent above (namely, $\poly(n)$) is the decision tree size of~$g$. If we applied \eqref{eq:prior} to \autoref{lem:mpw}, we would only end up with a length gap of $n^{O(\log n)}$ versus $n^{\omega(\log n)}$. But these lengths---and hence running times for the automating algorithm---are enough to solve the $k$-$\GHS$ problem, which prevents us from getting a hardness result.

\paragraph{Small gadget lifting.}
What we need is a lifting theorem for small gadgets, that is, gadgets computed by small decision trees. It is an important open problem whether tree-like lifting is possible with a \emph{constant-size} gadget. In this paper, we are able to use a gadget of decision-tree size depending only on the quantity we want to lift, namely $\d(F)$, and not depending on the number of variables $n$ of $F$. Our lifting theorem can be seen as a generalization of previous ones, which handled the case $\d(F)=n^{\Omega(1)}$, and can also be viewed as a step towards proving a lifting theorem for significantly smaller gadgets (eventually, constant-size).
\begin{theorem}[Small gadget lifting] \label{thm:small-lift}
For every $m$ there exists a gadget $g\colon\{0,1\}^{\poly(m)}\to\{0,1\}$ of query complexity $O(\log m)$ such that for every unsatisfiable $n$-variate CNF formula $F$,
\[
m^{\Theta(\min(\d(F), m))}
~\leq~
\cutTree(F\circ g^n)
~\leq~
\resTree(F\circ g^n)
~\leq~
m^{O(\d(F))}.
\]
\end{theorem}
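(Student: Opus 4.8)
The plan is to instantiate the gadget $g$ as the index function $\Ind_m\colon[m]\times\{0,1\}^m\to\{0,1\}$, $(a,z)\mapsto z_a$, which reads $\lceil\log m\rceil+m=\poly(m)$ bits and has query complexity $\lceil\log m\rceil+1=O(\log m)$. The middle inequality $\cutTree(F\circ g^n)\le\resTree(F\circ g^n)$ is immediate, since every tree-like Resolution refutation is also a (tree-like) Cutting Planes refutation. For the right inequality I would use the textbook composition of decision trees: a Resolution refutation of $F$ of depth $d\coloneqq\d(F)$ is, read from the empty clause downward, a depth-$d$ decision tree solving $\mathrm{Search}(F)$; replacing each query ``$x_i=?$'' by the $O(\log m)$ queries that evaluate $\Ind_m$ on block~$i$ yields a depth-$O(d\log m)$ decision tree solving $\mathrm{Search}(F\circ g^n)$, hence one with at most $m^{O(d)}$ leaves, each naming a falsified clause of $F\circ g^n$. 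Propagating these clauses upward gives a tree-like Resolution refutation of length $m^{O(d)}$.

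The substance of the theorem is the left (lower-bound) inequality. First I would invoke the standard ``binary search on the proof tree'' argument: given a tree-like CP refutation of $F\circ g^n$ of length $L$, repeatedly test (as a linear-threshold query on the input) the line sitting at a centroid of the current proof tree and recurse either into the subtree whose root line is falsified, or into the complementary proof with that subtree contracted to a leaf; since the proof size shrinks by a constant factor each step, this produces a depth-$O(\log L)$ linear-threshold decision tree (each internal node a test $\sum a_ix_i\ge t$) solving $\mathrm{Search}(F\circ g^n)$. It therefore suffices to prove the lifting bound: every threshold decision tree solving $\mathrm{Search}(F\circ\Ind_m^n)$ has depth $D\ge\Omega(\log m)\cdot\min(\d(F),m)$; then $O(\log L)\ge\Omega(\log m)\cdot\min(\d(F),m)$ yields $L\ge m^{\Omega(\min(\d(F),m))}$.

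For the lifting bound I would run the usual simulation that converts a depth-$D$ threshold decision tree for $\mathrm{Search}(F\circ\Ind_m^n)$ into a decision tree for $\mathrm{Search}(F)$. The simulation maintains a set $I\subseteq[n]$ of already-queried coordinates, a partial assignment $\rho\in\{0,1\}^I$, and a large, structured set of gadget inputs consistent with the conversation so far, under which each block $i\in I$ is fixed compatibly with $\rho_i$ and each block $i\notin I$ is still ``dense'' enough that $\Ind_m$ on it remains essentially unbiased. A threshold query is processed using the Simplex Lemma (with the Round Lemma and the Large Index Lemma as the key sub-steps): either one of its two sides carries negligible measure, in which case we descend on the other side without touching $I$; or, conditioning on the query, the set stays structured on all but a few blocks, and those few spoiled blocks~$i$ are repaired by querying $x_i$ in the decision tree (learning $\rho_i$) and restoring density on block~$i$ conditioned on $\Ind_m=\rho_i$. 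At a leaf, the clause of $F\circ\Ind_m^n$ that the threshold tree outputs can only mention blocks in~$I$ (a block outside $I$ still has undetermined $\Ind_m$, so such a clause could not be certified falsified), so the decision tree outputs the corresponding clause of $F$, which $\rho$ falsifies. The accounting: each threshold query leaks $O(1)$ bits, and a block absorbs $\Omega(\log m)$ bits before being spoiled, so $D$ threshold queries trigger at most $D/\Omega(\log m)$ decision-tree queries; moreover the structural invariant survives on the order of $m$ queries. Hence if $D<\Omega(\log m)\cdot\min(\d(F),m)$ the simulation finishes within budget and returns a valid decision tree for $\mathrm{Search}(F)$ of depth $<\min(\d(F),m)\le\d(F)$, contradicting either the definition of $\d(F)$ (when the minimum equals $\d(F)$) or the fact that every decision tree for $\mathrm{Search}(F)$ has depth exceeding $m$ (when the minimum equals $m<\d(F)$).

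The step I expect to be the real obstacle is making the Simplex Lemma and the accompanying density-restoration step go through for a \emph{small} gadget, i.e.\ one whose size $m$ does not grow with $n$. Earlier tree-like lifting theorems (e.g.\ \cite{rezende16limited,bonet00relative}) take $m=\poly(n)$ precisely so that each of the up to $n$ simulated queries can afford to erode the ``thickness''/min-entropy parameters by a $\poly(n)$ factor; here the total budget is only $\Theta(m)$ queries, which is the source of the $\min(\d(F),m)$ cap and forces a tight accounting of the entropy deficit over the roughly $m$ queries. A second, more technical difficulty is that a single linear-threshold query couples all $n$ blocks simultaneously (unlike a single-variable query), so the Simplex Lemma must certify that conditioning on an arbitrary halfspace over a structured set of gadget inputs spoils only $O(1)$ blocks on average; this is exactly the multi-party/simplex machinery built for the dag-like case, now specialized to the single-sink, depth setting.
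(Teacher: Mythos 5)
Your overall architecture does match the paper's: the gadget is an index gadget of size $\poly(m)$, the upper bound is the straightforward composition of decision trees, and the lower bound converts a length-$L$ tree-like CP refutation into a depth-$O(\log L)$ shallow model and then simulates it by a shallow decision tree for $S_F$ while maintaining a structured rectangle and restoring blockwise min-entropy. (The paper routes through Kraj\'i\v{c}ek's real communication protocols rather than your threshold decision trees, but since a CP line is a single threshold test this difference is cosmetic.) Where your sketch goes wrong is the per-query mechanism: you propose to ``condition on the query'' (or descend on the side of non-negligible measure), but conditioning a rectangle $X\times Y$ on a halfspace destroys the product structure that the Round Lemma and its Fourier machinery require, and no ``one side is negligible'' dichotomy is available. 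The paper's mechanism is different and essential: with Alice holding $x\in[m]^n$ and Bob holding $y\in(\{0,1\}^m)^n$, every halfspace (indeed every node of a real protocol) is a combinatorial triangle, so any rectangle contains a quadrant subrectangle of relative size at least $1/4$ lying entirely inside or entirely outside it; the simulation passes to that subrectangle, losing only $O(1)$ bits of min-entropy, and only then invokes the Round Lemma. Your appeal to the Simplex Lemma is also off-target: that is the dag-like partitioning lemma, and its proof (via $\rho$-structured boxes) needs gadget size $m=(n\ell)^5$, i.e.\ a large gadget, so it cannot be ``specialized'' to the small-gadget, tree-like setting.

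The deeper problem is that the actual content of the theorem---making the entropy-restoring round work when $m\ll n$---is exactly the step you flag as ``the real obstacle'' and then leave unsupplied. The paper's new idea lives inside the proof of the Round Lemma: after the blockwise-min-entropy-restoring partition of $X$, the $N$ coordinates are grouped into $d^3$ random megacoordinates of equal size, so that with high probability each part's fixed set $I_j$ meets each megacoordinate at most once; the Fourier/uniform-marginals argument (the Large Index Lemma) is then applied to the induced gadget $\Ind_{mN/d^3}$ over only $d^3$ coordinates, so the union bound costs $(d^3)^{|I|}$, which $m^{-\Omega(|I|)}$ can absorb because $m\ge\poly(d)$---whereas over the original $N$ coordinates the bound $m^{-\Omega(|I|)}$ is useless when $m\ll N$, which is precisely why the prior arguments you cite need $m=\poly(n)$. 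Without this megacoordinate step (or a substitute), your lower bound is not established. A minor parameter point in the same vein: with the gadget $\Ind_m$ the paper's accounting only yields the cap $\min(\d(F),m^{1/1000})$; the stated $\min(\d(F),m)$ is obtained by instead taking the gadget $\Ind_{m^{1000}}$, which still has query complexity $O(\log m)$ and input size $\poly(m)$.
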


Our main tree-like theorem (\autoref{thm:main-tree}) now follows by combining \autoref{lem:mpw} and \autoref{thm:small-lift}. Indeed, choose $m\coloneqq \log^2 n$ and consider the algorithm $\calA$ that on input an $n$-set system $\calS$ outputs the formula $\calA(\calS)\coloneqq \calB(\calS)\circ g^{n'}$ where $\calB(\calS)$ has $n'=n^{O(1)}$ variables. We have
\begin{align*}
\begin{tabular}{lll}
$\gamma(\calS)~\leq~k$
&$\implies$
&$\cutTree(\calA(\calS))~\leq~m^{O(\log n)}~=~n^{O(\log\log n)}~\leq~n^{o(k)},$\\
$\gamma(\calS)~\geq~k^2$
&$\implies$
&$\cutTree(\calA(\calS))~\geq~m^{\Omega(k \log n)}~=~n^{\Omega(k\log\log n)} ~\geq~n^{\omega(k)}.$
\end{tabular}
\end{align*}
Finally, we note that the composed formula $\calB(\calS)\circ g^{n'}$ can be constructed in time $n^{o(k)}$. This will be evident from the formal definition (see \autoref{fact:ind}), but the intuition is as follows. Each $O(\log n)$-width clause of $\calB(\calS)$ will turn into a whole family $O(\log m\log n)$-width clauses for $\calB(\calS)\circ g^{n'}$. The family for a particular clause $D$ is obtained by replacing each literal of $D$ in all possible ways by an $O(\log m)$-length root-to-leaf path (of which there are $2^{O(\log m)}$ many) in the decision tree for $g$. Altogether this will yield $|\calB(\calS)|\cdot (2^{O(\log m)})^{O(\log n)} = n^{O(1)}\cdot n^{o(k)}=n^{o(k)}$ many clauses. Therefore, to prove \autoref{thm:main-tree} it remains to prove \autoref{thm:small-lift}, which we do in \autoref{sec:tree-lifting}.

\paragraph{Relation to real protocols.}
The lower bound in \autoref{thm:small-lift} holds not only for tree-like Cutting Planes but also for a stronger model of computation, \emph{real communication protocols}~\cite{Kraj98}. This is not surprising: all existing lower bounds on tree-like CP length have been proved through real protocols (or the even more powerful model of randomized protocols). In a nutshell, our proof of \autoref{thm:small-lift} extends the techniques in a long line of work on tree-like lifting~\cite{raz99separation,bonet00relative,goos15deterministic,rezende16limited,goos17bpp,CFKMP}, optimizing the argument in order to get rid of the dependence on the input size~$n$. A detailed overview is given in \autoref{sec:tree-lifting}.

\section{Dag-like definitions} \label{sec:dag-definitions}

In this paper, we adopt the standard \emph{top-down} view of proofs~\cite{pudlak00proofs,atserias08combinatorial}. Namely, we interpret a refutation of an $n$-variate CNF formula $F\coloneqq \land_{i\in[m]} D_i$ as a way of solving the associated \emph{falsified-clause search problem} $S_F\subseteq\{0,1\}^n\times [m]$. The problem $S_F$ is, on input a truth assignment $x\in\{0,1\}^n$, to find a clause $D_j$, $j\in[m]$, falsified by $x$, that is, $D_j(x)=0$. For example, tree-like Resolution refutations of $F$ are equivalent to decision trees solving $S_F$~\cite{lovasz95search}. We proceed to formalize this for dag-like models. The material in \autoref{sec:dag-models} is standard. \autoref{sec:sim-dag} introduces a novel model, \emph{simplex-dags}, for which we develop a lifting theorem in \autoref{sec:dag-lifting}.

\subsection{Standard models} \label{sec:dag-models}

\paragraph{Abstract dags.}
Fix an abstract search problem $S\subseteq\calI\times\calO$, that is, on input $x\in\calI$ the goal is to find some $o\in S(x)\coloneqq\{o\in \calO: (x,o)\in S\}$. We always work with \emph{total} search problems where $S(x)\neq\emptyset$ for all $x\in\calI$. Fix also a family $\calF$ of functions $\calI\to\{0,1\}$. An \emph{$\calF$-dag} solving $S$ is a directed acyclic graph of out-degree $\leq 2$ where each vertex $v$ is associated with a function $f_v\in \calF$ (here $f^{-1}(1)$ is sometimes called the \emph{feasible set} for $v$) satisfying the following.
\begin{itemize}
\item \emph{Root.} There is a designated root vertex $v$ (in-degree $0$) that satisfies $f_v\equiv 1$.
\item \emph{Non-leaf.} Every non-leaf $v$ with children $u,u'$ (perhaps $u=u'$) has $f_v^{-1}(1)\subseteq f_u^{-1}(1)\cup f_{u'}^{-1}(1)$.
\item \emph{Leaves.} For every leaf $v$ there is some output $o\in\calO$ such that $f_v^{-1}(1)\subseteq S^{-1}(o)$.
\end{itemize}
The \emph{size} of an $\calF$-dag is its number of vertices.

\paragraph{Decision-dags and Resolution.}
Consider instantiating the above template with the $n$-bit input domain $\calI\coloneqq\{0,1\}^n$ and taking $\calF$ to be the set of all conjunctions over the literals $x_1,\bar{x}_1,\ldots,x_n,\bar{x}_n$. We call such $\calF$-dags simply \emph{decision-dags}. Apart from the size of a decision-dag another important measure is its \emph{width}: the maximum width of a conjunction used. We define
\begin{align*}
\decDag(S)~&\coloneqq~\text{least \emph{size} of a decision-dag solving $S$},\\
\w(S)~&\coloneqq~\text{least \emph{width} of a decision-dag solving $S$}.
\end{align*}
When specialized to unsatisfiable CNF search problems $S=S_F$, we recover the usual Resolution proof system. Indeed, $\decDag(S_F)$ equals $\res(F)$, the length required to refute $F$ in Resolution, and $\w(S_F)$ equals the Resolution width complexity of $F$ (famously studied in~\cite{ben-sasson01short}).

\paragraph{LTF-dags and Cutting Planes.}
Consider instantiating $\calI\coloneqq\{0,1\}^n$ and taking $\calF$ to be the set of all $n$-bit \emph{linear threshold functions} (LTFs). Recall that an $f\in\calF$ is defined by a vector $a\in \R^{n+1}$ such that $f(x)=1$ iff $\sum_{i\in[n]} a_ix_i \geq a_{n+1}$. We call such $\calF$-dags simply \emph{LTF-dags}, and define
\[
\ltfDag(S)~\coloneqq~\text{least \emph{size} of an LTF-dag solving $S$}.
\]
When specialized to $S=S_F$, we recover the semantic Cutting Planes proof system. Indeed, $\ltfDag(S_F)$ equals $\cut(F)$, the length required to refute $F$ in semantic Cutting Planes.

\subsection{Simplex-dags} \label{sec:sim-dag}
We now introduce a new type of dag, for which our dag-like lifting theorem is formulated (\autoref{sec:dag-lifting}). Let $k\geq 1$ and consider a fixed $k$-partite input domain $\calI\coloneqq \calI_1 \times\cdots\times \calI_k$. We say that a function $f\colon \calI_1 \times\cdots\times \calI_k \to\{0,1\}$ is \emph{monotone} (up to an ordering of the parts $\calI_i$; aka \emph{unate}) iff each set $\calI_i$ admits a total order $\preceq_i$ such that $f(x)\leq f(y)$ for every pair $x\preceq y$ (meaning $x_i\preceq_i y_i$ for all $i\in[k]$). For example, every $n$-bit LTF is monotone as an $n$-partite function: the orderings are determined by the signs of the coefficients appearing in the linear form defining $f$. We also say that a subset $A\subseteq \calI_1 \times\cdots\times\calI_k$ is a \emph{(combinatorial) $k$-simplex} if its indicator function is monotone. Let $\calF$ be the set of monotone functions over $\calI_1 \times\cdots\times\calI_k$; we emphasize that any two $f,f'\in\calF$ may not agree on the ordering of any part $\calI_i$. We call such $\calF$-dags simply \emph{simplex-dags}, and define
\[
\simDag(S)~\coloneqq~\text{least \emph{size} of a simplex-dag solving $S$}.
\]

\paragraph{Relation to other models.}
Simplex-dags are a natural $k$-party generalization of the bipartite case $k=2$, which was called \emph{triangle-dags} in \cite{garg18monotone}. Triangle-dags in turn are equivalent to real circuits and real dag-like protocols~\cite{haken99exponential,pudlak97lower,hrubes18note}. Our motivation to consider multi-party models is that they can be vastly weaker than two-party models. Hence one expects it to be easier to prove lower bounds for $k$-simplex-dags when $k$ is large. For a toy example, consider the $n$-bit $\Xor_n$ function. It is easy to compute for traditional two-party communication protocols regardless of how the $n$ bits are split between the two players. By contrast, for $n$ parties, each holding one input bit, $\Xor_n$ is hard to compute.

\subsection{Relationships}
The complexity measures introduced so far are related as follows:
\[
\simDag(S_k)~\leq~\ltfDag(S_n)~\leq~\decDag(S_n)~\leq~n^{O(\w(S_n))}.
\]
Here $S_n\subseteq\{0,1\}^n\times\calO$ is any $n$-bit search problem, and $S_k\subseteq \{0,1\}^{I_1} \times\cdots\times\{0,1\}^{I_k}\times\calO$ is a $k$-partite version of $S_n$ obtained from an arbitrary partition $I_1\sqcup\cdots\sqcup I_k=[n]$. The first inequality follows by noting that each LTF $f$, defined by $\sum_i a_ix_i \geq a_{n+1}$, is a monotone $k$-partite function when the $i$-th part $\{0,1\}^{I_i}$ is ordered according to the partial sum $\sum_{i\in I_i} a_ix_i$ (breaking ties arbitrarily). The second inequality follows since every conjunction is an LTF. The last inequality is standard: the length of any width-$w$ Resolution refutation can be made $n^{O(w)}$ by eliminating repeated clauses (and the same construction works for arbitrary search problems).

\subsection{Blocks} \label{sec:blocks}

\paragraph{Block width.}
Let $S\subseteq (\{0,1\}^\ell)^n \times\calO$ be any search problem whose $n\ell$ input bits are partitioned into $n$ blocks of $\ell$ bits each. For every conjunction $C$ over the variables of $S$, we define the \emph{block-width} of $C$ as the maximum number of blocks that $C$ \emph{touches}, that is, contains a variable (or its negation) from a block. We define the \emph{block-width} of a decision-dag solving $S$ as the maximum block-width over all conjunctions in the dag. Finally, we define
\[
\bw(S)~\coloneqq~\text{least \emph{block-width} of a decision-dag solving $S$}.
\]

\paragraph{Block composition.}
The \emph{column-index} gadget $\Ind_{\ell\times m}\colon[m]\times \{0,1\}^{\ell\times m}\to\{0,1\}^\ell$ is defined by $\Ind_{\ell\times m}(x,y) \coloneqq \text{``$x$-th column of $y$''}$. We call $y\in\{0,1\}^{\ell\times m}$ the \emph{matrix} and $x\in[m]$ the \emph{pointer} (for decision-dags, we tacitly encode the elements of $[m]$ in binary as $\log m$-bit strings.). Letting $S\subseteq(\{0,1\}^\ell)^n\times\calO$ be as above, we define a composed search problem
\begin{equation} \label{eq:s-ind}
S\circ \Ind_{\ell\times m}^n ~\subseteq~ [m]^n\times(\{0,1\}^{\ell\times m})^n\times\calO.
\end{equation}
Namely, on input $(x,y)\in [m]^n\times(\{0,1\}^{\ell\times m})^n$ the goal is to find an output $o\in S(z)$ for $z \coloneqq (\Ind_{\ell\times m}(x_1,y_1),\ldots,\Ind_{\ell\times m}(x_n,y_n))\in(\{0,1\}^\ell)^n$. We shall view the composition~\eqref{eq:s-ind} as an $(1+n\ell)$-partite search problem by repartitioning the input domain as
\[\textstyle
[m]^n\times(\{0,1\}^{\ell\times m})^n~=~
\calX\times\prod_{(i,j)\in[n]\times[\ell]} \calY^{ij}
\enspace\qquad\text{where}\quad
\setlength\tabcolsep{.3em}
\left\{
\begin{tabular}{ll}
$\calX$&$\coloneqq~[m]^n$\\
$\calY^{ij}$&$\coloneqq~\{0,1\}^m$.
\end{tabular}\right.
\]
Here we think of player $\Alice$ as holding $x\in \calX$, and for $(i,j)\in[n]\times[\ell]$, player $\Bob^{ij}$ as holding $(y_i)_j\in \calY^{ij}$, that is, the $j$-th row of the $i$-th matrix $y_i$.

\begin{center}
\begin{lpic}[]{figs/index(.35)}
\large
\lbl[c]{31,49;\color{myBlue}$x_i$}
\normalsize
\lbl[c]{30,30;\color{myBlue} ($\Alice$)}
\lbl[c]{-40,50;\itshape $i$-th gadget:}
\lbl[c]{65,50;$\ell$}
\lbl[c]{130,5;$m$}
\small
\lbl[l]{185,70;$=(y_i)_1 \in \calY^{i1}$ \color{myGold} ~($\Bob^{i1}$)}
\lbl[l]{185,50;$=(y_i)_2 \in \calY^{i2}$ \color{myGold} ~($\Bob^{i2}$)}
\lbl[l]{185,30;$=(y_i)_3 \in \calY^{i3}$ \color{myGold} ~($\Bob^{i3}$)}
\small
\lbl[c]{90,30;\color{myGold} 0}
\lbl[c]{90,50;\color{myGold} 1}
\lbl[c]{90,70;\color{myGold} 1}
\lbl[c]{110,30;\color{myGold} 1}
\lbl[c]{110,50;\color{myGold} 0}
\lbl[c]{110,70;\color{myGold} 0}
\lbl[c]{130,30;\color{myGold} 1}
\lbl[c]{130,50;\color{myGold} 0}
\lbl[c]{130,70;\color{myGold} 1}
\lbl[c]{150,30;\color{myGold} 0}
\lbl[c]{150,50;\color{myGold} 1}
\lbl[c]{150,70;\color{myGold} 1}
\lbl[c]{170,30;\color{myGold} 0}
\lbl[c]{170,50;\color{myGold} 0}
\lbl[c]{170,70;\color{myGold} 1}
\end{lpic}
\end{center}

\subsection{CNF encoding} \label{sec:cnf-enc}

We just defined block-composed search problems $S\circ \Ind^n_{\ell\times m}$, but how can we translate such objects back to CNF formulas? The standard recipe is as follows. Fix any search problem $S\subseteq \{0,1\}^n\times\calO$ (not necessarily of a composed form). A \emph{certificate} for $(x,o)\in S$ is a partial assignment $\rho\in\{0,1,*\}^n$ consistent with $x$ such that for any $y$ consistent with $\rho$ we have $(y,o)\in S$. The size of $\rho$ is the number of its fixed (non-$*$) coordinates. The \emph{certificate complexity} of~$S$ is the maximum over all inputs $x\in\{0,1\}^n$ of the minimum over all $o\in S(x)$ of the least size of a certificate for $(x,o)$. For example, if $F$ is an unsatisfiable $k$-CNF formula, then $S_F$ has certificate complexity at most $k$. Conversely, any total search problem $S$ of certificate complexity $k$ contains the search problem $S_F$ associated with some unsatisfiable $k$-CNF formula $F$ as a subproblem ($S$ is at least as hard as $S_F$). Namely, consider $F\coloneqq\bigwedge_x \neg C_x$ where $C_x$ is the conjunction that checks if the input is consistent with some fixed size-$k$ certificate for $x$. Note that $F$ is unsatisfiable because $S$ is total.

For unbounded-width CNF formulas (such as Atserias--M\"uller's $\myRef(F)$), we need to interpret the above recipe with care. Indeed, fix any unsatisfiable (unbounded-width) CNF formula $F$ with~$|F|$ many clauses and such that its $n\ell$ variables are partitioned into $n$ blocks of $\ell$ variables each. Denote by $b$ the maximum block-width of a clause of $F$. Then every clause $D$ of $F$ gives rise to a family of certificates for $S_F\circ \Ind^n_{\ell\times m}$. Namely, a certificate in the family for $D$ consists of at most $b\log m$ bits (reading $b$ many pointer values associated with the blocks of $D$) together with $|D|$ many bits read from the pointed-to columns. Thus, altogether, we get at most $|F| m^b$ many certificates, at least one for each input to $S_F\circ \Ind^n_{\ell\times m}$. We define $F\circ \Ind^n_{\ell\times m}$ as the formula obtained by listing all these certificates (more precisely, the disjunctions that are the negations of the certificates).

The formula $\myRef(F)$ of Atserias and M\"uller is such that its clauses have block-width~$3$~\cite[Appendix~A]{Atserias2019}. Hence $\myRef(F)\circ \Ind^n_{\ell\times m}$ has size $n^{O(1)}$ and moreover it is polynomial-time constructible.%
\begin{fact} \label{fact:ref}
Given an $n$-variate $3$-CNF $F$, we can construct $\myRef(F)\circ \Ind^n_{\ell\times m}$ in polynomial time.
\end{fact}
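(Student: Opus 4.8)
The plan is to unwind the CNF-encoding recipe of \autoref{sec:cnf-enc} for the particular formula $\myRef(F)$ and check that every quantity appearing in it is polynomially bounded in the size of the input. First I would invoke the Atserias--M\"uller construction (\autoref{lem:am}; see also \cite[Appendix~A]{Atserias2019}): given an $n$-variate $3$-CNF $F$, in time $n^{O(1)}$ one produces the unsatisfiable CNF $\myRef(F)$, which has $|\myRef(F)| \le n^{O(1)}$ clauses, whose variables come partitioned into at most $n^{O(1)}$ blocks of $\ell \le n^{O(1)}$ variables each, and --- crucially --- such that every clause of $\myRef(F)$ touches at most $b = 3$ blocks. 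Note this block-width bound also forces each clause $D$ of $\myRef(F)$ to have ordinary width $|D| \le b\ell \le n^{O(1)}$. I would also fix the parameter $m := (n\ell)^{\Theta(1)} = n^{O(1)}$ as in \autoref{thm:block-lift-cp}, and record that its binary encoding uses $\log m = O(\log n)$ bits per pointer.

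Second, I would describe the enumeration explicitly. For each clause $D$ of $\myRef(F)$, list the at most $b$ blocks $i_1,\dots,i_b$ that $D$ touches, and then iterate over all $m^b \le n^{O(1)}$ tuples of pointer values $(p_1,\dots,p_b) \in [m]^b$. Each such tuple, together with $D$, singles out exactly one certificate for the composed search problem $S_{\myRef(F)} \circ \Ind^n_{\ell\times m}$: it pins down the $b\log m$ bits encoding $p_1,\dots,p_b$ and the $|D|$ matrix bits that sit at the rows of $D$ in the columns indexed by $p_1,\dots,p_b$, all set to the unique values that falsify $D$ after the index substitution. Negating this certificate gives one clause of $\myRef(F)\circ\Ind^n_{\ell\times m}$, of width $b\log m + |D| \le n^{O(1)}$, and it is clearly computable in time $n^{O(1)}$ from $D$ and $(p_1,\dots,p_b)$. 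The algorithm outputs the conjunction of all clauses obtained this way over all $D$ and all pointer tuples (together with the boolean axioms); by the definition in \autoref{sec:cnf-enc} this is precisely $\myRef(F)\circ\Ind^n_{\ell\times m}$, it has at most $|\myRef(F)|\cdot m^b \le n^{O(1)}$ clauses, and the whole procedure halts in time $n^{O(1)}$.

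The one point worth care --- and it is exactly what the ``interpret with care'' caveat in \autoref{sec:cnf-enc} is about --- is that $\myRef(F)$ has \emph{unbounded} ordinary width, so one must not enumerate certificates by brute force over inputs (of which there are exponentially many); instead one enumerates clause-by-clause, and then a single clause $D$ contributes $m^{(\text{block-width of }D)}$ certificates, which would be superpolynomial were block-width unbounded but is $m^3 = \poly(n)$ here. I do not anticipate a genuine obstacle: once the constant block-width bound $b=3$ from \cite{Atserias2019} is in hand, the entire statement reduces to the polynomial count $|\myRef(F)|\cdot m^b$, and the remaining steps (reading off the block partition in polynomial time, accounting for the boolean axioms, and fixing the binary encoding of pointers) are routine bookkeeping that each cost only a further $\poly(n)$ factor.
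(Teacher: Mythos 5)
Your proposal is correct and follows essentially the same route as the paper: the paper's justification is precisely the preceding discussion in Section 3.5, which uses the block-width-$3$ property of $\myRef(F)$ from \cite[Appendix~A]{Atserias2019} to bound the number of certificates (hence clauses of the composed formula) by $|\myRef(F)|\cdot m^{3}=n^{O(1)}$, each trivially computable from a clause $D$ and a pointer tuple. The only cosmetic difference is your inclusion of boolean axioms, which the paper's definition of $F\circ\Ind^n_{\ell\times m}$ does not list; this is harmless for polynomial-time constructibility.
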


\section{Dag-like lifting} \label{sec:dag-lifting}

The purpose of this section is to prove our block-lifting theorem (\autoref{thm:block-lift-cp}), which would complete the proof of our main dag-like result (\autoref{thm:main-dag}). We restate the block-lifting theorem using the search-problem-centric language of \autoref{sec:dag-definitions}. Then \autoref{thm:block-lift-cp} is the special case $S\coloneqq S_F$.
\begin{theorem}[Block lifting] \label{thm:block-lift}
Let $S\subseteq(\{0,1\}^\ell)^n\times\calO$ be any search problem. For $m\coloneqq (n\ell)^5$ we have
\[
m^{\Omega(\bw(S))}
~\leq~
\simDag(S\circ\Ind_{\ell\times m}^n)
~\leq~
\decDag(S\circ\Ind_{\ell\times m}^n)
~\leq~
m^{O(\bw(\Pi))}\cdot|\Pi|,
\]
where $\Pi$ is any decision-dag solving $S$ of size $|\Pi|$ and block-width $\bw(\Pi)$.
\end{theorem}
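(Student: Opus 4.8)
The plan is to prove the three inequalities separately, with the left-most (the lower bound) being the heart of the matter. The right-most inequality is essentially a ``honest simulation'': given a decision-dag $\Pi$ solving $S$ of size $|\Pi|$ and block-width $b := \bw(\Pi)$, I would replace each conjunction $C$ appearing in $\Pi$ — which touches at most $b$ blocks, hence mentions at most $b\log m$ pointer-bit positions and some set of matrix-bit positions — by the family of conjunctions over the lifted variables obtained by fixing the $b$ relevant pointers $x_i \in [m]$ in all $m^b$ possible ways and then translating ``$(z_i)_j = c$'' into ``$(y_i)_{j,x_i} = c$''. This blows up each node into at most $m^b$ nodes and one checks the root/non-leaf/leaf conditions are preserved (the non-leaf condition uses that $f_C^{-1}(1) \subseteq f_{C_0}^{-1}(1)\cup f_{C_1}^{-1}(1)$ is inherited branch-by-branch on the pointer values). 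This yields a decision-dag for $S\circ\Ind_{\ell\times m}^n$ of size $m^{O(b)}|\Pi|$, giving the third inequality; the middle inequality $\simDag \le \decDag$ is immediate from the relationships already recorded in the excerpt (every conjunction is monotone, indeed an LTF, on the $(1+n\ell)$-partite domain).

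For the lower bound $\simDag(S\circ\Ind_{\ell\times m}^n) \ge m^{\Omega(\bw(S))}$, I would argue by contradiction: a small simplex-dag for the composed problem is to be converted into a low-block-width decision-dag for $S$ itself. The natural framework is a projection/restriction argument in the style of Raz--McKenzie and its descendants, adapted to the multi-party setting. Given a simplex-dag of size $s = m^{o(\bw(S))}$, I would process its vertices and maintain, for each vertex $v$, a ``structured'' family of inputs to the composed problem that are routed to $v$; structured should mean that on some set of ``free'' blocks the pointer $x_i$ is unfixed and the matrix rows $(y_i)_j$ are unfixed, while on the complementary ``fixed'' blocks the underlying value $z_i \in \{0,1\}^\ell$ has been pinned down. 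The key measure is how many blocks get fixed: I want to show that to route inputs into the feasible set (a $k$-simplex $A$) of a node, one need only fix $O(\log_m s)$ additional blocks, because a combinatorial simplex — being monotone, i.e. a product-of-orders downward/upward closed structure after reordering each coordinate — cannot ``spread out'' across many free blocks without either containing a large structured subfamily or being pinned on $O(\log_m s)$ coordinates. Concretely, I expect to need a \emph{simplex lemma} (indeed the excerpt pre-declares a ``Simplex Lemma''): a statement saying that if $A$ is a $k$-simplex and $\calR$ is a sufficiently ``dense/blockwise-structured'' rectangle-like family of inputs, then either $A$ contains a still-structured subfamily on nearly all the same free blocks, or $A$ is contained in a union of few ``$O(\log_m s)$-block-fixed'' pieces. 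Iterating this along a root-to-leaf path, a leaf of the simplex-dag — whose feasible set lies inside $S^{-1}(o)$ for some fixed $o$ — then certifies $(z,o)\in S$ after fixing only $O(\log_m s \cdot \text{depth})$ blocks, which one packages into a decision-dag for $S$ of block-width $O(\log_m s \cdot \log s)$ or so; choosing $m = (n\ell)^5$ makes $\log_m s$ small enough that $s = m^{o(\bw(S))}$ forces block-width $< \bw(S)$, a contradiction.

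The main obstacle I anticipate is precisely the \simlemma: proving that combinatorial $k$-simplices behave, with respect to the block structure, like low-complexity objects. In the two-party ($k=2$) triangle-dag world of Garg et al., the analogous statement is about triangles (monotone bipartite sets) and rectangles; here the gadget has $\ell$ output bits and the ``Bob side'' is split into $n\ell$ tiny one-row players, so the relevant sets are $(1+n\ell)$-partite monotone sets, and I must exploit that a total order on each single row $\calY^{ij} = \{0,1\}^m$ is very restrictive — a monotone set that is non-trivial on a free block must, via the $\Ind$-gadget's pointer, effectively fix the column $x_i$, and fixing $x_i$ fixes only one block. Making this quantitative — controlling how many blocks must be sacrificed per simplex, uniformly over all the $2^{n\ell}$ possible orderings the different nodes may use — is the crux. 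I would also need to set up the notion of a ``structured family'' (a product of full matrix-rows over free blocks, with min-entropy or density guarantees) carefully enough that it is preserved under intersection with a simplex after paying the $O(\log_m s)$ block cost, and that it remains non-empty down to the leaves; this bookkeeping, together with the \simlemma, is where essentially all the work lies, the rest being the routine dag-surgery sketched above.
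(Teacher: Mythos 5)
Your overall strategy does match the paper's: the upper bound by fixing the $O(\bw(\Pi))$ relevant pointers in all $m^{O(\bw(\Pi))}$ ways (the paper additionally inserts small decision trees that query the pointer of a block touched by a child but not by the parent, a case your sketch glosses over), the middle inequality as you say, and, for the lower bound, a conversion of a size-$s$ simplex-dag into a low-block-width decision-dag for $S$ via structured boxes and a Simplex Lemma that splits each simplex into $\rho$-structured/$\rho$-like pieces plus small error sets. This is exactly the paper's route; its multi-party punchline is that a $\rho$-structured box contains a single pointer value $x^*$ whose slice is $\rho$-like, proved by a union bound over the $n\ell$ output bits, which are independent because each is owned by a different row-player.

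There is, however, a genuine gap in your quantitative accounting for the lower bound. You let the fixed blocks accumulate along a root-to-leaf path, arriving at block-width $O(\log_m s\cdot\mathrm{depth})$, estimated as $O(\log_m s\cdot\log s)$. Two problems: a dag of size $s$ can have depth as large as $s$, so $\log s$ is not a legitimate depth bound; and even granting depth $\log s$, with $s=m^{o(\bw(S))}$ your bound is $o(\bw(S)^2)\cdot O(\log m)$, which is not below $\bw(S)$, so no contradiction follows --- this accumulation scheme would at best yield something like $m^{\Omega(\sqrt{\bw(S)/\log m})}$ rather than $m^{\Omega(\bw(S))}$. The missing idea is that the block-width must be bounded \emph{per vertex}, independently of depth: in the paper, each vertex of the extracted decision-dag is labeled by the assignment $\rho^r$ coming from the partition of \emph{that vertex's own} simplex, hence fixes only $O(d)=O(\log_m s)$ blocks, and the passage from a parent part to parts of its children is made locally by a shallow decision tree querying only $O(d)$ further blocks; soundness of this local step rests on the query-alignment property of the Simplex Lemma together with the $\rho$-like point $x^*$ from the structured-box lemma, after which the previously fixed blocks are simply forgotten (legitimate because the decision-dag condition $f_v^{-1}(1)\subseteq f_u^{-1}(1)\cup f_{u'}^{-1}(1)$ is purely local). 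This forgetting mechanism is what makes dag-lifting of a width-like measure possible at all, and it is absent from your plan; relatedly, you leave open how error sets interact across vertices, which the paper handles by processing simplices in reverse topological order and pruning accumulated error boxes so that the errorless box only grows when moving to children.
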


\paragraph{Upper bound.}
The last inequality is the trivial part of \autoref{thm:block-lift}. We only sketch it here. Given a decision-dag $\Pi$ for $S$, we construct a decision-dag $\Pi'$ for $S\circ\Ind_{\ell\times m}^n$. For every block-width-$b$ conjunction $C$ in $\Pi$, there corresponds a family of exactly $m^b$ many conjunctions in~$\Pi'$. Namely, the family is constructed by replacing each positive literal $x_{ij}$ (resp.\ negative literal $\bar{x}_{ij}$) of $C$ with a sequence of $\log m+1$ many literals that witness the $j$-th output bit of the $i$-th gadget being $1$ (resp.~$0$). If $C$ has children $C'$, $C''$ that only touch blocks touched by $C$, then every conjunction in the family for $C$ can be directly connected to the families of $C'$, $C''$. However, if $C'$, $C''$ touch some block $i$ (there can be at most one) that is untouched by $C$, then the family for $C$ is connected to the families of $C'$, $C''$ via decision trees that query the pointer value of the $i$-th gadget. We have $|\Pi'|\leq m^{O(\bw(\Pi))}\cdot|\Pi|$, as desired.

\paragraph{Lower bound.}
The first inequality is the nontrivial part of \autoref{thm:block-lift}. Our proof follows closely the plan from~\cite{garg18monotone}. However, the proof here is in many ways simpler than the original one. The reason is that we work with multi-party objects (high-dimensional boxes and simplices) rather than two-party objects (rectangles and triangles). For example, one of the key technical lemmas, \autoref{lem:rho-like} (``$\rho$-structured boxes are $\rho$-like''), admits a short proof in our multi-party setting, whereas the original lemma for two parties required a long proof involving Fourier analysis. The rest of this section is concerned with proving the simplex-dag lower bound.

\subsection{Subcubes from simplices}

Let $\rho\in(\{0,1\}^\ell\cup\{*\})^n$ be a partial assignment that assigns each of the $n$ blocks either an $\ell$-bit string or the star symbol. We denote by $\free(\rho)\subseteq[n]$ the subset of blocks assigned a star, and define $\fix(\rho)\coloneqq [n]\smallsetminus\free(\rho)$. The subcube of strings consistent with $\rho$ is $\Cube(\rho) \coloneqq \{ z\in (\{0,1\}^\ell)^n : z_i=\rho_i,\forall i\in\fix(\rho)\}$. For any set $R\subseteq \calX\times\prod \calY^{ij}$ we say that
\[
\textit{``$R$ is $\rho$-like''}
\qquad \text{iff} \qquad
\Ind_{\ell\times m}^n(R)~=~\Cube(\rho).
\]

We formulate a sufficient condition for $R$ to be $\rho$-like in case $R$ is a \emph{box}, that is a product set.
\begin{definition}[Random variables] \label{def:entropy}
For a random variable $\x\in \calX$ we define its \emph{min-entropy} by $\Hmin(\x)\coloneqq\min_x \log(1/\Pr[\,\x=x\,])$. When $\x$ is chosen from a set $\calX^k$ that is partitioned into $k$ blocks, we define its \emph{blockwise min-entropy} by $\min_{\emptyset\neq S \subseteq [k]} \frac{1}{|S|}\Hmin(\x_S)$ where $\x_S$ is the marginal distribution of $\x$ over blocks $S$. We also define the \emph{deficiency} of $\x\in\calX$ by $\Dmin(\x)\coloneqq \log|\calX| - \Hmin(\x)\geq 0$. For convenience, if $X$ is a set, we denote by $\bm{X}\in X$ the random variable that is uniform over~$X$. In particular, for $X\subseteq\calX^n$ the notation $\bm{X}_I$ for $I\subseteq[n]$ means ``the marginal distribution over coordinates $I$ of the uniform distribution over $X$''. We use $X_I\coloneqq\{x_I:x\in X\}$ to mean the set that is the projection of $X$ onto coordinates $I$; thus $X_I$ is the support of $\bm{X}_I$.

\end{definition}

\begin{definition}[Structured boxes] \label{def:structured}
Let $R\coloneqq X\times\prod_{ij} Y^{ij}\subseteq \calX\times\prod_{ij}\calY^{ij}$ be a box and $\rho\in(\{0,1\}^\ell\cup\{*\})^n$ a partial assignment. We say $R$ is \emph{$\rho$-structured} if
\begin{enumerate}
\item Gadgets are fixed according to $\rho$: ~$\Ind_{\ell\times m}^{\fix(\rho)}(R_{\fix(\rho)})=\{\rho_{\fix(\rho)}\}$.
\item $X$ has entropy on the free blocks: ~$\X_{\free(\rho)}$ has blockwise min-entropy $\geq 0.9\cdot\log m$.
\item $Y^{ij}$ are large: ~$\Dmin(\Y^{ij})\leq m^{1/2}$ for $i\in\free(\rho)$, $j\in[\ell]$.
\end{enumerate}
\end{definition}

The following key lemma is the reason our dag-lifting result is formulated for $k$-simplex-dags for large $k$---we do not know how to prove a multi-output gadget lemma like this for $k=2$. (The paper \cite{garg18monotone} did it for $k=2$ and single-output gadgets.)

\begin{lemma} \label{lem:rho-like}
Let $R\coloneqq X\times\prod_{ij} Y^{ij}$ be $\rho$-structured. There is an $x\in X$ so that $\{x\}\times \prod_{ij} Y^{ij}$ is $\rho$-like.
\end{lemma}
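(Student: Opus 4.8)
The plan is to show that a uniformly random $\x \sim \X$ works with positive probability. Fix a free block $i \in \free(\rho)$ and a row $j \in [\ell]$. For $\x$ to be ``$\rho$-like'' we need two things: (i) for every free $i$ and every target bit $b \in \{0,1\}$, there is a choice of matrix column consistent with $Y^{ij}$ that places $b$ in row $j$ at the pointed column $\x_i$ — i.e. the output $\Ind_{\ell\times m}(\x_i, y_i)$ can realize \emph{any} value in $\{0,1\}^\ell$ as $y_i$ ranges over $\prod_j Y^{ij}$; and (ii) for the fixed blocks, condition 1 of $\rho$-structuredness already pins the output down to $\rho$. Point (ii) is immediate, so the whole content is point (i): after fixing the pointer to $\x_i$, each $Y^{ij}$ must still contain a string with a $0$ in coordinate $\x_i$ and one with a $1$ there. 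Equivalently, $\x_i$ must avoid the ``bad'' set $B^{ij} \coloneqq \{ c \in [m] : Y^{ij} \text{ is constant on coordinate } c \}$ for every $j$.

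The key step is to bound $|B^{ij}|$ using the size (deficiency) hypothesis on $Y^{ij}$. If $Y^{ij}$ is constant on $t$ coordinates, then $|Y^{ij}| \le 2^{m-t}$, so $\Dmin(\Y^{ij}) = m - \log|Y^{ij}| \ge t$; hence $t = |B^{ij}| \le \Dmin(\Y^{ij}) \le m^{1/2}$ by condition 3. Union over the $\ell$ rows: the set $B^i \coloneqq \bigcup_{j\in[\ell]} B^{ij}$ of pointer values that are bad for block $i$ has $|B^i| \le \ell\, m^{1/2}$. I then want to argue that a random $\x$ avoids $\bigcup_{i \in \free(\rho)} \{x : x_i \in B^i\}$. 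Here I use condition 2: $\X_{\free(\rho)}$ has blockwise min-entropy $\ge 0.9\log m$, so in particular for each individual free block $i$, $\Hmin(\X_i) \ge 0.9\log m$, meaning $\Pr[\x_i = c] \le m^{-0.9}$ for every $c$. Therefore $\Pr[\x_i \in B^i] \le |B^i|\cdot m^{-0.9} \le \ell m^{1/2} m^{-0.9} = \ell m^{-0.4}$. Summing over the at most $n$ free blocks, $\Pr[\exists i \in \free(\rho): \x_i \in B^i] \le n\ell\, m^{-0.4}$, and since $m = (n\ell)^5$ this is $(n\ell)^{1 - 2} = (n\ell)^{-1} < 1$. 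So some $x \in X$ avoids all bad sets.

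It remains to check that for such an $x$, the box $\{x\}\times\prod_{ij} Y^{ij}$ is genuinely $\rho$-like, i.e. $\Ind_{\ell\times m}^n(\{x\}\times\prod_{ij}Y^{ij}) = \Cube(\rho)$. The inclusion $\subseteq$ holds because on fixed blocks condition 1 forces the output to be $\rho_i$, and on free blocks the output is unconstrained by $\Cube(\rho)$; the inclusion $\supseteq$ is exactly what avoiding $B^i$ buys us — for each free block $i$ and each target string $w \in \{0,1\}^\ell$, we pick for each row $j$ some $y_{ij} \in Y^{ij}$ with $(y_{ij})_{x_i} = w_j$ (possible since $x_i \notin B^{ij}$), and these independent choices across rows assemble to a matrix $y_i$ with $\Ind_{\ell\times m}(x_i, y_i) = w$; on fixed blocks any choice gives $\rho_i$. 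Hence the image is all of $\Cube(\rho)$.

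The main obstacle — really the only subtle point — is making sure condition 2 is used correctly: blockwise min-entropy is a statement about \emph{every} subset of free blocks, but here only the single-block marginals $\X_i$ are needed, which is the $|S|=1$ case and gives the clean per-coordinate bound $\Pr[\x_i = c] \le m^{-0.9}$. Everything else is the short counting argument above; no Fourier analysis or two-party machinery is required, which (as the text notes) is precisely the simplification afforded by working with many parties and a product set $R$.
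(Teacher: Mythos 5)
Your proposal is correct and follows essentially the same route as the paper's proof: observe that the output bits are independent across the $(i,j)$ parts, bound the number of ``bad'' pointer values per $Y^{ij}$ by the deficiency $\Dmin(\Y^{ij})\leq m^{1/2}$, use the single-block case of blockwise min-entropy to get $\Pr[\x_i=c]\leq m^{-0.9}$, and union bound over all $n\ell$ conditions, with the same arithmetic $m=(n\ell)^5$ making the failure probability less than $1$. The only cosmetic differences are that you treat general $\rho$ explicitly rather than assuming $\rho=*^n$, and you spell out the two inclusions of the $\rho$-like property, which the paper leaves implicit.
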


\begin{proof}
Assume for simplicity that $\rho=*^n$. Thus our goal is to find an $x\in X$ such that $\Ind^n_{\ell\times m}(\{x\}\times \prod_{ij} Y^{ij})=(\{0,1\}^\ell)^n$. The key observation is that since each of the $n\ell$ output bits is determined by a different $\Bob^{ij}$, the output bits are independent: $\Ind^n_{\ell\times m}(\{x\}\times \prod_{ij} Y^{ij}) = \prod_{ij} \Ind_{1\times m}(\{x_i\}\times Y^{ij})$. Therefore it suffices to find an $x\in X$ such that for all $i\in[n]$, $j\in[\ell]$,
\begin{equation}\label{eq:one-output}
\textit{$x$ is ``good'' for $Y^{ij}$:} \qquad
\Ind_{1\times m}(\{x_i\}\times Y^{ij})~=~\{0,1\}.
\end{equation}
We claim that a uniform random choice $\x\in X$ satisfies all conditions \eqref{eq:one-output} with positive probability. Indeed, for a fixed $ij$, how many ``bad'' values $x_i\in[m]$ are there that fail to satisfy \eqref{eq:one-output}? Each bad value~$x_i$ implies that the $x_i$-th bit is fixed in $Y^{ij}$. But there can be at most $\Dmin(Y^{ij})\leq m^{1/2}$ fixed such bits. Using $\Hmin(\x_i)\geq 0.9\cdot\log m$ for $i\in[n]$ and recalling that $m = (n\ell)^5$ we have
\[
\Pr[\,\x_i \text{ is ``bad'' for } Y^{ij}\,]~\leq~ m^{1/2}\cdot 2^{-0.9\log m}~<~1/(n\ell).
\]
A union bound over all the $n\ell$ many conditions \eqref{eq:one-output} completes the proof.
\end{proof}

The following lemma is the culmination of this subsection: Every simplex can be partitioned into $\rho$-like pieces (and some error sets); see \autoref{fig:simlemma}. The lemma is a high-dimensional analogue of the \emph{Triangle Lemma} from~\cite{garg18monotone}. We defer the proof to \autoref{sec:proof-sim-lemma}.

\begin{figure}[t]
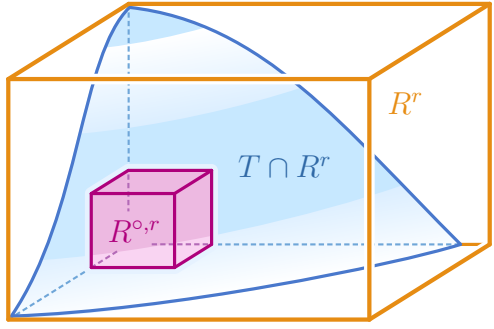

\begin{center}
\begin{lpic}[t(-6mm),b(-8mm)]{figs/sandwich(.4)}
\large
\lbl[c]{62,49.5;\color{myPurple}$R^{\circ,r}$}
\lbl[c]{112,71;\color{myBlue}$T\cap R^r$}
\lbl[c]{152,92;\color{myGold}$R^r$}
\end{lpic}
\end{center}
\caption{Structured case of \simlemma. The simplex $T$ is partitioned as $T=\bigsqcup_r T\cap R^r$ where each structured part is sandwiched between two $\rho^r$\!-structured boxes, $R^{\circ,r}\subseteq T\cap R^r\subseteq R^r$.}
\label{fig:simlemma}%
\end{figure}

\begin{sim-lemma}\label{lem:sim-lemma}
Let $T\subseteq \calX\times\prod_{ij}\calY^{ij}$ be a simplex and $k\geq 0$ an error parameter. There exists a disjoint box covering $\bigsqcup_r R^r\supseteq T$ and error sets $X^{\err}\subseteq\calX$, $Y^{\err,ij}\subseteq\calY^{ij}$, each of density $\leq 2^{-k}$, such that for each $r$ one of the following holds:
\begin{itemize}[label=$\bullet$,itemsep=1mm]
\item {\bf\slshape Structured case:}~ $R^r$ is $\rho^r$\!-structured for some $\rho^r$ that fixes $O(k/\log m)$ blocks. Moreover there exists an ``inner'' box $R^{\circ,r} \subseteq T\cap R^r$, which is also $\rho^r$\!-structured.
\item {\bf\slshape Error case:}~ $R^r$ is covered by error boxes: $R^r \subseteq X^{\err}\times\prod_{ij} \calY^{ij} \cup \bigcup_{ij} \calX\times Y^{\err,ij}\times\prod_{i'j'\neq ij}\calY^{i'j'}$.
\end{itemize}
Finally, a {\bf\itshape query alignment} property holds: for every $x \in \calX\smallsetminus X_{\err}$, there exists a subset $I_x \subseteq [n]$ with $|I_x| \le O(k/\log m)$ such that every ``structured'' $R^r$ intersecting $\{x\} \times \prod_{ij}\calY^{ij}$ has $\fix(\rho^r) \subseteq I_x$.
\end{sim-lemma}

\subsection{Simplified proof} \label{sec:simple-proof}

To prove (the first inequality of) \autoref{thm:block-lift}, fix a simplex-dag $\Pi$ solving $S\circ\Ind_{\ell\times m}^n$ of size $m^d$. Our goal is to construct a decision-dag $\Pi'$ solving $S$ that has block-width $O(d)$. We first present the proof under a simplifying assumption and then remove that assumption in \autoref{sec:errors}.
\begin{enumerate}[label=({\boldmath$*$})]
\item  \label{assumption}
\emph{Assumption:} If we apply \simlemma for $k\coloneqq 2d\log m$ to any simplex $T$ in $\Pi$, then each part in the produced partition $T=\bigsqcup_r T\cap R^r$ satisfies the ``structured case''.
\end{enumerate}

Using \ref{assumption}, apply \simlemma (for the above choice of $k$) to partition all simplicies $T$ in~$\Pi$. Each resulting structured part $T\cap R^r$ will correspond to a vertex in $\Pi'$ associated with the partial assignment (or conjunction)~$\rho^r$, that is, with feasible set $\Cube(\rho^r)$. Moreover, we will let the type (root/internal/leaf) of a vertex $T$ in $\Pi$ dictate the type of the resulting vertices $T\cap R^r$ in $\Pi'$. We will add more vertices to $\Pi'$ shortly in order to connect all the internal vertices, but so far $\Pi'$ already meets the \emph{root} and \emph{leaf} conditions of a decision-dag solving $S$, as we note next.

\paragraph{Step 1: Root and leaves.}
We may assume that for the root of $\Pi$, which is associated with the simplex $T\coloneqq \calX\times\prod_{ij}\times\calY^{ij}$, the \simlemma produces the trivial partition consisting of just one $*^n$-structured part, $T$ itself. Hence, the designated root of $\Pi'$ is defined as the sole part $T$ with an associated feasible set $\Cube(*^n)=(\{0,1\}^\ell)^n$. This meets the \emph{root} condition of a decision-dag.

Consider any part $R^{\circ,r}\subseteq T \cap R^r\subseteq R^r$ with an associated assignment $\rho^r$, arising from a \emph{leaf} $T$ of $\Pi$. Suppose $o\in\calO$ is a valid solution for $T$ in $\Pi$, that is, $T\subseteq (S\circ \Ind_{\ell\times m}^n)^{-1}(o)$, or equivalently, $\Ind_{\ell\times m}^n(T)\subseteq S^{-1}(o)$. We claim that $o$ is also a valid solution for the leaf $T \cap R^r$ in $\Pi'$:
\[
\Cube(\rho^r)
~=~ \Ind_{\ell\times m}^n(T\cap R^r)
~\subseteq~ \Ind_{\ell\times m}^n(T)
~\subseteq~ S^{-1}(o).
\]
Here the equality uses the fact that $T\cap R^r$ is $\rho^r$\!-like (it is sandwiched between two sets that are $\rho^r$\!-structured, and hence $\rho^r$\!-like by \autoref{lem:rho-like}). This meets the \emph{leaf} condition of a decision-dag.

\paragraph{Step 2: Internal.}
To complete the definition of $\Pi'$, consider a vertex associated with some part $R^{\circ}\subseteq T \cap R\subseteq R$, where $R^\circ$ and $R$ are $\rho$-structured, that arises from a \emph{non-leaf} simplex $T$ of $\Pi$. We connect this vertex to the vertices arising from $T$'s two children, $L$ and $L'$. The connections are made via a \emph{decision tree}~$\calT$, which we include in $\Pi'$. At a high level, the tree will satisfy the following.
\begin{enumerate}[label=(\arabic*)]
\item \emph{Root:} The root of the tree $\calT$ is identified with the vertex $T\cap R$ associated with $\rho$. That is, $\calT$ starts out with the bits in blocks $\fix(\rho)\subseteq[n]$ already queried.
\item \emph{Non-leaf:} The non-leaf vertices of $\calT$ query more bits, one block at a time.
\item \label{it:leaf}
\emph{Leaf:} Every leaf $\rho^*$ of $\calT$ extends some assignment $\tau$ that arises from the partitions of the children $L$, $L'$. Therefore, in $\Pi'$, we define $\rho^*$ to have $\tau$ as its unique child. (This way, the feasible sets satisfy $\Cube(\rho^*)\subseteq\Cube(\tau)$ as required in a decision-dag.)
\end{enumerate}

\begin{center}
\begin{lpic}[b(3mm)]{figs/dec-tree(.35)}
\lbl[c]{150,60;\scalebox{3}{\color{gray} $\leadsto$}}
\large
\lbl[c]{295,73;\color{myGold}$\calT$}
\normalsize
\lbl[c]{73,100;\color{myBlue}$T$}
\lbl[c]{33,20;\color{myPurple}$L$}
\lbl[c]{114,20;\color{myPurple}$L\smash{'}$}
\lbl[c]{60,-5;\itshape Simplex-dag $\Pi$}
\lbl[c]{260,-5;\itshape decision-dag $\Pi'$}
\small
\lbl[r]{231,100;\color{myBlue}$\bigsqcup_r T\cap R^r\colon$}
\lbl[c]{260,110;\color{myBlue}$\rho$}
\lbl[c]{310,48;\color{myGold}$\rho^*$}
\lbl[c]{331,20;\color{myPurple}$\tau$}
\end{lpic}
\end{center}

The tree $\calT$ is defined precisely as follows. Since $R^\circ\eqqcolon X\times\prod_{ij}Y^{ij}$ is $\rho$-structured, \autoref{lem:rho-like} produces an $x^*\in X$ such that $\{x^*\}\times\prod_{ij} Y^{ij}$ is $\rho$-like. Using the query alignment property for $L$ and $L'$, there are subsets $I,I'\subseteq[n]$, $|I\cup I'|\leq O(k/\log m)\leq O(d)$, such that any structured part in the partitions of $L$ and $L'$ that intersects the slice $\{x^*\}\times\prod_{ij}\calY^{ij}$ has their fixed blocks contained in~$I\cup I'$. We let $\calT$ query all bits in the blocks $(I\cup I')\smallsetminus \fix(\rho)$ in some order, and make the resulting vertices (having queried all bits in blocks $I\cup I'\cup \fix(\rho)$) the leaves of $\calT$.
\begin{claim}
Every leaf $\rho^*$ of $\calT$ satisfies \autoref{it:leaf}.
\end{claim}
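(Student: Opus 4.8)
The plan is to fix an arbitrary leaf $\rho^*$ of $\calT$ and, by chasing a single well-chosen input through the dag, exhibit an assignment $\tau$ arising from the partition of $L$ or of $L'$ with $\Cube(\rho^*)\subseteq\Cube(\tau)$ — which is exactly what \autoref{it:leaf} demands. Two features of the construction of $\calT$ are used throughout. First, $\rho^*$ extends $\rho$ and $\fix(\rho^*)=\fix(\rho)\cup I\cup I'$, since $\calT$ starts from $\rho$ and only queries the blocks of $(I\cup I')\smallsetminus\fix(\rho)$; in particular $\Cube(\rho^*)\subseteq\Cube(\rho)$. Second, the slice $\{x^*\}\times\prod_{ij}Y^{ij}$ produced by \autoref{lem:rho-like} is $\rho$-like, so $\Ind^n_{\ell\times m}$ maps it \emph{onto} $\Cube(\rho)$.

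I would then pick any $a\in\Cube(\rho^*)$; by the first feature $a\in\Cube(\rho)$, so by the second there is a point $w=(x^*,y)$ with $y\in\prod_{ij}Y^{ij}$ and $\Ind^n_{\ell\times m}(w)=a$. Since $x^*\in X$ and $R^\circ=X\times\prod_{ij}Y^{ij}$, this $w$ lies in $R^\circ\subseteq T$. As $T$ is a non-leaf of the simplex-dag $\Pi$ with children $L,L'$, the non-leaf condition gives $T\subseteq L\cup L'$, so $w$ lies in $L$ or in $L'$; assume $w\in L$, the other case being symmetric with $I'$ and the partition of $L'$ in place of $I$ and that of $L$. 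Under Assumption~\ref{assumption}, \simlemma applied to $L$ writes $L=\bigsqcup_s L\cap R^s$ with every part structured, so each $L\cap R^s$ is sandwiched between two $\rho^s$\!-structured boxes and is therefore $\rho^s$\!-like by \autoref{lem:rho-like}. Let $s$ be the unique index with $w\in L\cap R^s$ and set $\tau\coloneqq\rho^s$; this is an assignment arising from the partition of $L$.

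It then remains to verify $\Cube(\rho^*)\subseteq\Cube(\tau)$, and this is where the two halves of \simlemma meet. On the one hand $\Ind^n_{\ell\times m}(w)=a$ lies in $\Ind^n_{\ell\times m}(L\cap R^s)=\Cube(\tau)$, so $a\in\Cube(\rho^*)\cap\Cube(\tau)$ and hence $\rho^*$ and $\tau$ agree on $\fix(\rho^*)\cap\fix(\tau)$. On the other hand $R^s$ meets the slice $\{x^*\}\times\prod_{ij}\calY^{ij}$ — the point $w$ witnesses this — so the query-alignment clause of \simlemma at $x^*$ gives $\fix(\tau)=\fix(\rho^s)\subseteq I\cup I'\subseteq\fix(\rho^*)$. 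Combining the two, $\rho^*$ and $\tau$ agree on all of $\fix(\tau)$, i.e.\ $\rho^*$ extends $\tau$, and $\Cube(\rho^*)\subseteq\Cube(\tau)$ follows.

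The step I expect to be the main obstacle is making the single point $w$ do two jobs at once: it must lie in the slice $\{x^*\}\times\prod_{ij}\calY^{ij}$ so that query alignment pins down $\fix(\tau)$, while its $\Ind$-image is a \emph{prescribed} element of $\Cube(\rho^*)$ so that it certifies consistency of $\tau$ with $\rho^*$. Both requirements hold precisely because the slice is $\rho$-like and $\Cube(\rho^*)\subseteq\Cube(\rho)$ — surjectivity onto $\Cube(\rho)$ is what lets us target $a$ — and the argument would break if the slice were only $\rho$-structured. A secondary bookkeeping point is that query alignment needs $x^*\notin X^{\err}$ for the partitions of both children; this is already built into the choice of $x^*$ in the definition of $\calT$ (a $\rho$-like pointer avoiding the $2^{-k}$-density error sets of $L$ and $L'$), and under Assumption~\ref{assumption} nothing further about the error case is needed here.
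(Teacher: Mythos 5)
Your proof is correct and follows essentially the same route as the paper's: choose a point of the $\rho$-like slice $\{x^*\}\times\prod_{ij}Y^{ij}$ whose $\Ind$-image lies in $\Cube(\rho^*)$, push it through $R^\circ\subseteq L\cup L'$ into the unique structured part of a child, and combine query alignment ($\fix(\tau)\subseteq I\cup I'\subseteq\fix(\rho^*)$) with agreement on the common image point to conclude that $\rho^*$ extends $\tau$. The only differences are cosmetic (you fix a target $a\in\Cube(\rho^*)$ and use that the part is $\tau$-like, where the paper directly picks $y^*$ and uses $\tau$-structuredness), and your side remark about $x^*$ avoiding the error sets matches what the paper defers to its error-accounting section.
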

\begin{proof}
Since $\rho^*$ extends $\rho$, and $\{x^*\}\times \prod_{ij} Y^{ij}$ is $\rho$-like, there is some $y^*\in \prod_{ij} Y^{ij}$ such that $\Ind_{\ell\times m}^n(x^*,y^*)\in \Cube(\rho^*)$. Since $(x^*,y^*)\in R^{\circ}\subseteq L\cup L'$, we have $(x^*,y^*)\in L$ or $(x^*,y^*)\in L'$. Suppose wlog that $(x^*,y^*)\in L$. Let $L\cap R'$, where $R'$ is $\tau$-structured, be the unique part of $L$ containing $(x^*,y^*)$. But since $\fix(\tau)\subseteq I \subseteq \fix(\rho^*)$ and both $\tau$ and $\rho^*$ agree with $\Ind_{\ell\times m}^n(x^*,y^*)$, we conclude that $\rho^*$ extends $\tau$, as required.
\end{proof}

\paragraph{Efficiency.} 
We remark that all the assignments appearing in $\Pi'$ have block-width $O(d)$. This holds for the vertices coming from partitioning of the simplicies in $\Pi$ due to our choice of $k\leq O(d\log m)$, and it holds for the vertices in the decision trees as they query at most $|I\cup I'|\leq O(d)$ additional blocks. This concludes the (simplified) proof of \autoref{thm:block-lift}.

\subsection{Accounting for error} \label{sec:errors}

Removing the assumption~\ref{assumption} is done virtually in the same way as in~\cite{garg18monotone}. We briefly recall the outline (and refer to \cite[\S5.3]{garg18monotone} for more details if necessary). Instead of partitioning each simplex independently from one another, we instead process them in reverse topological order, $T_1,\ldots,T_{m^d}$ (i.e., if $T_i$ is a descendant of $T_j$ then $i<j$), and before partitioning $T_i$ we first remove all error sets resulting from partitioning of its descendants. More precisely, we initialize an ``errorless'' box $B\coloneqq \calX\times\prod_{ij}\calY^{ij}$ and then process the simplicies as follows.

\medskip\noindent
{\slshape Iterate for $i=1,\ldots,m^d$:}
\vspace{-2mm}
\begin{enumerate}[label=(\arabic*)]
	\item Update $T_i$ by removing all the errors accumulated so far: $T_i~\leftarrow~T_i \cap B$. Note that $T_i$ continues to be a simplex.
	\item Apply \simlemma to obtain a box covering $\bigsqcup_r R^r\supseteq T_i$ with error sets $X^{\err}\subseteq\calX$, $Y^{\err,ij}\subseteq\calY^{ij}$. Output all the structured parts $R^r\cap T_i$ and discard the error parts.
	\item Update $B$ by removing all the error sets: $B\leftarrow B\smallsetminus (X^{\err}\times\prod_{ij} \calY^{ij} \cup \bigcup_{ij} \calX\times Y^{\err,ij}\times\prod_{i'j'\neq ij}\calY^{i'j'})$. Note that $B$ continues to be a box.
\end{enumerate}

We can now repeat the simplified proof of \autoref{sec:simple-proof} nearly verbatim \emph{using only the structured simplices output by the above process}. When processing $\Pi$'s root $T_{m^d}\cap B=B$, where $B\eqqcolon X\times\prod_{ij}Y^{ij}$ is the errorless box at the end of the process, we have that each of $X$, $Y^{ij}$ has density at least $1-m^d\cdot 2^{-k}=1-m^{-d}\geq 99\%$ by our choice of $k$. Hence $B$ is $*^n$-structured and we may assume that \simlemma produces the trivial partition for $B$. This yields the unique root for $\Pi'$ as before. Another key observation is that the associated errorless box $B$ \emph{grows} when we step from a simplex $T_i$ in $\Pi$ to either one of its children. Thus every structured part $R^r\cap T_i$ that is output by the above process is wholly covered by the structured parts of $T_i$'s children. This means that our discarding of error sets does not interfere with the construction of the internal trees in Step 2.


\section{Tree-like definitions} \label{sec:tree-like-definitions}

As in the dag-like case, we use the search-problem-centric view of proofs. Namely, recall from \autoref{sec:dag-definitions} that with any unsatisfiable $n$-variate CNF formula $F\coloneqq \land_{i\in[m]} D_i$ we associate a \emph{falsified-clause search problem} $S_F\subseteq\{0,1\}^n\times [m]$ where, on input a truth assignment $z \in\{0,1\}^n$, the goal is to find a clause $D_i$, $i\in[m]$, falsified by $z$, that is, $D_i(z)=0$.

Given a CNF search problem $S_F\subseteq\{0,1\}^n\times\calO$, we consider compositions with the (single-output) \emph{index gadget} $\Ind_m\colon[m]\times\{0,1\}^m\to\{0,1\}$ defined by $\Ind_m(x,y)\coloneqq y_x$; this is simply the column-index gadget of \autoref{sec:blocks} specialized to $\ell=1$. The discussion in \autoref{sec:cnf-enc} implies that $S_F\circ\Ind^n_m$ is at least as hard as $S_{F'}$ where $F'\coloneqq F\circ \Ind_m^n$ admits an efficient encoding:

\begin{fact} \label{fact:ind}
For any $n$-variate $k$-CNF $F$ with $|F|$ clauses, $F\circ \Ind_m^n$ is constructible in time $|F|m^k$.
\end{fact}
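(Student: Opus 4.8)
\textbf{Proof plan for \autoref{fact:ind}.}

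The plan is to exhibit the construction of $F \circ \Ind_m^n$ explicitly by unwinding the CNF-encoding recipe from \autoref{sec:cnf-enc} specialized to $\ell = 1$, and to observe that each step is carried out within the claimed time bound. Recall from \autoref{sec:cnf-enc} that for an unsatisfiable $k$-CNF $F$, the search problem $S_F$ has certificate complexity at most $k$, and each clause $D$ of $F$ corresponds to a certificate for $S_F$ of size at most $k$ (namely, the partial assignment that falsifies $D$). After composing with $\Ind_m^n$, each such certificate lifts to a \emph{family} of certificates for $S_F \circ \Ind_m^n$: since $D$ touches at most $k$ variables, say with indices $i_1, \dots, i_t$ for $t \le k$, a certificate in the family is obtained by first fixing the $t$ pointer values $x_{i_1}, \dots, x_{i_t} \in [m]$ (which takes $t \log m$ bits, or equivalently one of $m^t$ choices) and then, for each $j \le t$, reading the single bit $(y_{i_j})_{x_{i_j}}$ of the matrix pointed to, set so as to falsify the corresponding literal of $D$. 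Thus each clause $D$ of $F$ yields at most $m^k$ disjunctions (the negations of these certificates), each of width at most $2k$ (at most $k$ pointer-bit groups of $\log m$ bits — but recall pointers are encoded in binary, so more precisely $k \log m + k$ literals; in any case polynomially bounded).

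First I would argue totality and soundness: since $F$ is unsatisfiable, $S_F$ is total, hence $S_F \circ \Ind_m^n$ is total (every input $(x, y)$ determines $z = \Ind_m^n(x,y)$, and $z$ falsifies some clause of $F$, which the corresponding lifted certificate detects). Therefore the formula $F \circ \Ind_m^n := \bigwedge \neg C$, where $C$ ranges over all the lifted certificates described above, is unsatisfiable, and its falsified-clause search problem is (at least as hard as — in fact here exactly) $S_F \circ \Ind_m^n$ as desired. This matches the definition of block composition given in \autoref{sec:cnf-enc} with $\ell = 1$.

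Second, for the running time: $F$ has $|F|$ clauses, each producing at most $m^k$ lifted clauses, so the total number of clauses of $F \circ \Ind_m^n$ is at most $|F| \cdot m^k$. Enumerating them is direct — for each clause $D$ of $F$ we enumerate the $m^t \le m^k$ choices of pointer values and for each choice write down the $O(k \log m)$ literals of the resulting disjunction — so the whole construction runs in time $|F| \cdot m^k \cdot \mathrm{poly}(k, \log m) = |F| m^{O(k)}$, and with a slightly more careful bookkeeping (each lifted clause has size $O(k\log m)$ and there are $|F|m^k$ of them) one gets the stated bound of essentially $|F| m^k$ up to the polynomial overhead inherent in writing each clause.

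I do not expect any real obstacle here: the statement is a routine unpacking of the CNF-encoding machinery already set up in \autoref{sec:cnf-enc}, and the only thing to be slightly careful about is the binary encoding of the pointer coordinates $[m]$ as $\log m$-bit strings (so that ``reading a pointer value'' is really $\log m$ literal queries rather than one), which affects the clause \emph{width} by a $\log m$ factor but not the \emph{count} $m^k$ and hence not the claimed time bound. The mild imprecision of whether the exponent is exactly $k$ or $O(k)$ is immaterial for the application in \autoref{thm:main-tree}, where $m = \log^2 n$ and $k = O(\log n)$ make $m^k = n^{o(k)}$ either way.
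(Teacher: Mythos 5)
Your proposal is correct and follows essentially the same route as the paper, which treats this fact as a direct unpacking of the certificate-listing recipe of \autoref{sec:cnf-enc} specialized to $\ell=1$: each width-$\le k$ clause lifts to at most $m^k$ clauses of width $O(k\log m)$, giving $|F|m^k$ clauses enumerable directly (the paper's ``time $|F|m^k$'' is, as you note, up to the polynomial overhead of writing each clause).
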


\subsection{Tree-like dags/proofs}
The definitions of decision-dags and LTF-dags from \autoref{sec:dag-definitions} can be straightforwardly specialized to decision-trees and LTF-trees (underlying dag is a tree). We define for any $S\subseteq\{0,1\}^n\times\calO$,
\begin{align*}
\decTree(S)~&\coloneqq~\text{least \emph{size} of a decision-tree solving $S$},\\
\ltfTree(S)~&\coloneqq~\text{least \emph{size} of an LTF-tree solving $S$},\\
\d(S)~&\coloneqq~\text{least \emph{depth} of a decision-tree solving $S$}.
\end{align*}
We recover the usual proof systems by further specializing to $S=S_F$,
\setlength\tabcolsep{.3em}
\begin{align*}
\begin{tabular}{rrl}
$\resTree(F)~=$
&$\decTree(S_F)~=$
&least \emph{length} of a tree-like Resolution refutation of $F$,\\
$\cutTree(F)~=$
&$\ltfTree(S_F)~=$
&least \emph{length} of a tree-like Cutting Planes refutation of $F$,\\
$\d(F)~=$
&$\d(S_F)~=$
&least \emph{depth} of a tree-like Resolution refutation of $F$.
\end{tabular}
\end{align*}

\subsection{Real protocols}

Consider a bipartite search problem $S\subseteq \calX\times\calY\times\calO$. A \emph{real protocol} $\Pi$ (introduced in~\cite{Kraj98}) is a binary tree satisfying the following. Each non-leaf node $v$ is labeled with a \emph{(combinatorial) triangle} $T_v\subseteq \calX\times\calY$, that is, a two-dimensional simplex (equivalently, $T$ is a triangle if it can be written as $T=\{(x,y) \in \calX\times\calY : a_T(x)<b_T(y)\}$ for some labelings of the rows $a_T\colon \calX \rightarrow \R$ and columns $b_T\colon \calY \rightarrow \R$ by real numbers). Each leaf $u$ of $\Pi$ is labeled with a solution $o_u\in\calO$. The protocol $\Pi$ solves $S$ if, for any input $(x,y)\in\calX\times\calY$ the unique root-to-leaf path, generated by walking left at node $v$ if $(x,y)\in T_v$ (and right otherwise), terminates at a leaf $u$ with $o_u\in S(x,y)$. We define
\begin{align*}
\rcc(S)~&\coloneqq~\text{least \emph{depth} of a real protocol solving $S$.}
\end{align*}
Real protocols are an established method for proving length lower bounds for tree-like CP:
\begin{lemma}[Kraj{\'{\i}}{\v{c}}ek~\cite{Kraj98}] \label{thm:real-cc}
Let $S_F'\subseteq \{0,1\}^{n_1}\times\{0,1\}^{n_2}\times\calO$ be the search problem $S_F$ for an unsatisfiable CNF formula $F$ together with an arbitrary bipartition of its $n_1+n_2$ variables. Then
\[
\rcc(S_F')~\leq~O(\log\cutTree(F)).
\]
\end{lemma}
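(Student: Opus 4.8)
The plan is to simulate a short tree-like Cutting Planes refutation $\Pi$ of $F$ by a shallow real communication protocol for $S_F'$. Recall that a tree-like CP refutation is an LTF-tree solving $S_F$ of size $\cutTree(F)$; since a balanced reorganization of any tree of size $s$ yields an equivalent (semantically sound) tree of depth $O(\log s)$ — one repeatedly finds an edge whose removal splits the tree into two parts of comparable size and uses it as a "pivot" — we may assume $\Pi$ has depth $D = O(\log \cutTree(F))$. (The standard argument that tree-like length and depth are polynomially/logarithmically related for sound line-based systems applies verbatim here because soundness of each inference is a semantic condition preserved under rebalancing.) So it suffices to show: given a depth-$D$ LTF-tree solving $S_F$, the two players $\Alice$ (holding the first $n_1$ bits of $z$) and $\Bob$ (holding the remaining $n_2$ bits) can solve $S_F'$ with a real protocol of depth $O(D)$.

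The key observation is that each line of $\Pi$ is a single LTF $\ell \colon \{0,1\}^{n}\to\{0,1\}$, say $\ell(z) = [\,\sum_i a_i z_i \ge b\,]$, and its feasible set $f_\ell^{-1}(1)$ splits as a \emph{triangle} with respect to the bipartition: writing $a_\ell(z^{\Alice}) \coloneqq -\sum_{i \le n_1} a_i z_i$ and $b_\ell(z^{\Bob}) \coloneqq \sum_{i>n_1} a_i z_i - b$, we have $\ell(z)=1 \iff a_\ell(z^{\Alice}) \le b_\ell(z^{\Bob})$, i.e. (up to the strict/non-strict convention, which is handled by a harmless $\pm\epsilon$ perturbation of the reals) exactly the condition defining a combinatorial triangle $T_\ell \subseteq \{0,1\}^{n_1}\times\{0,1\}^{n_2}$. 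Thus I would build the real protocol whose tree is the \emph{same tree} as $\Pi$, labeling each internal node that holds line $\ell$ with the triangle $T_\ell$ (and each leaf with the output clause index that $\Pi$ assigns there). On input $(z^{\Alice},z^{\Bob})$, $\Alice$ privately evaluates $a_\ell(z^{\Alice})$, $\Bob$ privately evaluates $b_\ell(z^{\Bob})$, and the protocol's left/right step at this node — "go left iff $(z^{\Alice},z^{\Bob}) \in T_\ell$", i.e. iff $a_\ell \le b_\ell$ — is precisely the real-comparison primitive the model allows.

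It remains to check the protocol is \emph{correct}, i.e. the walk ends at a leaf giving a genuine falsified clause of $z \coloneqq (z^{\Alice},z^{\Bob})$. For this I would trace the walk and maintain the invariant that at every visited node $v$ we have $z \notin f_v^{-1}(1)$ — that is, line $\ell_v$ is \emph{falsified} by $z$. At the root, $\Pi$'s root line is $0 \ge 1$ (or the weakening rule's $f_{\rootnode}\equiv 1$ convention inverted in the top-down picture), which is falsified by every $z$, so the invariant holds initially. For the inductive step, an internal node $v$ of $\Pi$ with children $u,u'$ satisfies the soundness condition $f_v^{-1}(1) \subseteq f_u^{-1}(1)\cup f_{u'}^{-1}(1)$; contrapositively, if $z \notin f_v^{-1}(1)$ then $z \notin f_u^{-1}(1)$ \emph{or} $z \notin f_{u'}^{-1}(1)$ — and the protocol's branching rule must be chosen so that the branch it follows points to a child whose line is falsified. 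Here one has to be slightly careful: the branching "$z \in T_{\ell_v}$ vs.\ $z \notin T_{\ell_v}$" is a branch on the truth of $\ell_v$ at a \emph{premise}, not a child, in the bottom-up reading; the cleanest fix is to adopt the top-down/search-problem reading throughout (as the paper does): a node holds a conjunction/LTF $f_v$ with $z \notin f_v^{-1}(1)$, its two children $u,u'$ are its premises in the derivation of $v$, and the covering property is exactly $f_v^{-1}(0) \supseteq f_u^{-1}(0)\cap f_v^{-1}(0)$... — more precisely, in the top-down LTF-dag picture the \emph{non-leaf} condition reads $f_v^{-1}(1)\subseteq f_u^{-1}(1)\cup f_{u'}^{-1}(1)$, so $z\in f_v^{-1}(0)$ forces $z$ to lie outside at least one of $f_u^{-1}(1), f_{u'}^{-1}(1)$, and we recurse into that child, picking the side via the comparison $a_{\ell_u}(z^{\Alice}) \lessgtr b_{\ell_u}(z^{\Bob})$ (one real comparison, possibly both, i.e.\ $O(1)$ per step). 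Reaching a leaf $v$ of $\Pi$, the \emph{leaf} condition gives an output $o$ with $f_v^{-1}(1)\subseteq S_F^{-1}(o)$; since $z \notin f_v^{-1}(1)$ we need the slightly stronger fact that at a leaf of a \emph{refutation} the feasible set is all of $\{z : D_o(z)=0\}$, so $z\notin f_v^{-1}(1)$ together with how the leaf was generated (the last inference before the contradiction narrows to the clause $D_o$) yields $D_o(z)=0$, the desired solution to $S_F'$. I expect \textbf{the main obstacle} to be exactly this bookkeeping of the top-down invariant together with the strict-vs-non-strict inequality bookkeeping for triangles (the real model's "$<$" vs.\ CP's "$\ge$"); both are routine but must be stated carefully. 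The depth bound is immediate: the protocol tree is (a constant blow-up of) $\Pi$'s rebalanced tree, hence depth $O(D) = O(\log\cutTree(F))$, as claimed.
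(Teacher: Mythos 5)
The paper does not prove this lemma itself (it imports it from Kraj\'{\i}\v{c}ek), but your attempt has a genuine gap at its very first step: the claim that any tree-like CP refutation of size $s$ can be ``rebalanced'' into an equivalent sound tree of depth $O(\log s)$. This is not a standard fact about ``sound line-based systems,'' and the analogous statement is provably false for tree-like Resolution (for the formula with clauses $x_1,\dots,x_n$ and $\bar{x}_1\vee\cdots\vee\bar{x}_n$, tree-like Resolution size is $O(n)$ but every tree-like Resolution refutation has depth $n$). The generic Spira-style pivot you invoke requires replacing lines $\ell$ in one part of the proof by case-split lines of the form ``$\ell$ or $\neg\ell_w$,'' which is not expressible as a single linear inequality, so the rebalancing does not go through for (semantic) CP either; no such balancing theorem for tree-like CP is known. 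Since your protocol is built on ``the same tree as $\Pi$,'' without this step it only yields real-communication depth proportional to the \emph{depth} of $\Pi$, which can be $\Omega(s)$ (path-like proofs), not $O(\log \cutTree(F))$.

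The correct (Kraj\'{\i}\v{c}ek-style) argument performs the balanced pivoting \emph{inside the protocol} rather than on the proof: maintain the invariant that the root line of the current subtree is falsified by $z$ (true at the root $0\geq 1$); at each round pick a node $w$ whose subtree contains between one third and two thirds of the current tree's nodes, and with a single real comparison (Alice and Bob compare their partial sums of $\ell_w$, exactly the triangle step you describe) test whether $\ell_w$ is falsified by $z$. If yes, recurse into the subtree rooted at $w$; if no, prune that subtree, treat $w$ as a satisfied leaf, and recurse into the remainder. Each round shrinks the current tree by a constant factor, giving $O(\log s)$ real comparisons, and at the end the falsified leaf must be an original axiom---boolean axioms and pruned (satisfied) lines are excluded---hence an encoded clause of $F$, which is the desired output. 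Your per-node triangle simulation and the contrapositive-of-soundness invariant are the right local ingredients; what is missing is exactly this separator walk, which is the content of the lemma and cannot be replaced by an appeal to proof rebalancing.
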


\section{Tree-like lifting} \label{sec:tree-lifting}

The purpose of this section is to prove our small gadget lifting theorem (\autoref{thm:small-lift}), which would complete the proof of our main tree-like result (\autoref{thm:main-tree}). We restate this lifting theorem using the search-problem-centric language. \autoref{thm:small-lift} is an immediate corollary (replace $m$ by $m^{1000}$).

\begin{theorem}[Small gadget lifting] \label{thm:small-lift-s}
Let $F$ be an unsatisfiable CNF formula. For every $m$,
\[
m^{\Theta(\min(\d(F), m^{1/1000}))}
~\leq~
\ltfTree(S_F\circ \Ind_m^n)
~\leq~
\resTree(S_F\circ \Ind_m^n)
~\leq~
m^{O(\d(F))}.
\]
\end{theorem}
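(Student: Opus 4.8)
The plan is to prove the three inequalities separately, with the left-most (the lower bound) being the real content. The right-most inequality is the routine "simulation" direction: given a depth-$d$ decision tree solving $S_F$, one composes it with the canonical $O(\log m)$-depth decision tree for $\Ind_m$ — at each node where the original tree queries variable $z_i$, the new tree first reads the $\log m$ pointer bits of block $i$ (if not already read) and then the single addressed bit of the matrix block $i$ — producing a decision tree for $S_F\circ\Ind_m^n$ of depth $O(d\log m)$, hence size $m^{O(d)}$. Pointers, once read, can be cached so the bound does not compound. The middle inequality $\ltfTree \leq \resTree$ is immediate from the relationships recorded in Section~\ref{sec:dag-definitions} (every conjunction is an LTF, every decision-tree is an LTF-tree).

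For the lower bound, by \autoref{thm:real-cc} it suffices to show that any real communication protocol solving the bipartite search problem $S_F\circ\Ind_m^n$ (Alice holding all pointers, Bob holding all matrix rows — this is the bipartition under which $\Ind_m^n$ is a "lifting gadget") must have depth $\Omega(\min(\d(F),m^{1/1000})\cdot\log m)$, since $\rcc \geq \Omega(\log\cutTree)$ forces $\cutTree \geq 2^{\Omega(\rcc)} = m^{\Omega(\min(\d(F),m^{1/1000}))}$. The strategy is the standard real-protocol-to-decision-tree simulation from the tree-like lifting line \cite{raz99separation,bonet00relative,goos15deterministic,rezende16limited,goos17bpp,CFKMP}: given a shallow real protocol $\Pi$, one builds a shallow decision tree solving $S_F$, contradicting $\d(F)$. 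One maintains, along a root-to-node path of $\Pi$, a "structured" rectangle — a product set $X\times Y\subseteq[m]^n\times(\{0,1\}^m)^n$ in which Alice's marginal has large blockwise min-entropy on the still-"alive" (unrestricted) coordinates and each of Bob's row-sets is large — together with a partial assignment $\rho$ to $F$'s variables recording the coordinates that have been "killed." The decision-tree invariant is that $\Ind_m^n$ restricted to the current rectangle equals $\Cube(\rho)$ (a density/min-entropy argument of the same flavor as \autoref{lem:rho-like}, specialized to $\ell=1$: a value $x_i$ is "bad" only if it addresses a fixed bit of $Y^i$, and there are at most $\Dmin(\bm Y^i)$ such bits, so high min-entropy on $x_i$ makes a good pointer exist with room to spare).

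The protocol-simulation step is the crux and the main obstacle: when $\Pi$ branches at a triangle $T_v=\{(x,y):a_v(x)<b_v(y)\}$, the current rectangle $X\times Y$ need not split cleanly, so one must first "clean up" $X$ and $Y$ — partition off a small part and find large sub-rectangles $X'\subseteq X$, $Y'\subseteq Y$ on which the triangle degenerates to a sub-rectangle or an almost-monochromatic set, restoring structure while killing only $O(1)$ further blocks (charged to the depth). This is exactly the place where \cite{garg18monotone} and the earlier tree-like papers spend their effort; the novelty here is purely quantitative — the entropy thresholds ($0.9\log m$), the "large" threshold ($\Dmin \leq m^{1/2}$), and the error budget must be tuned so that the number of blocks one can afford to kill before the structure collapses is $\Theta(m^{1/1000})$ rather than $\Theta(n)$, i.e. it depends only on $m$ and not on $F$'s variable count $n$. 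Concretely, each round of $\Pi$ costs $O(1)$ killed blocks and shrinks Bob's sets by a $2^{-O(1)}$ density factor and Alice's min-entropy by $O(1)$; starting from full entropy $n\log m$ and sets of density $1$, we can sustain $\Omega(m^{1/1000})$ rounds before the min-entropy per alive block drops below $0.9\log m$ or some $\Dmin$ exceeds $m^{1/2}$, at which point the surviving structured rectangle is $\rho$-like and forces $S_F$ to be solved, so $|\fix(\rho)| \geq \d(F)$; balancing gives depth $\geq \Omega(\min(\d(F),m^{1/1000})\cdot\log m)$. (If the protocol has a "large index" the argument queries a whole block and proceeds — an \indexlemma-style case split — but this only helps.) Once that quantitative bookkeeping is done, feeding the resulting depth bound through \autoref{thm:real-cc} yields the claimed $m^{\Theta(\min(\d(F),m^{1/1000}))}$ lower bound on $\ltfTree(S_F\circ\Ind_m^n)$, completing the proof.
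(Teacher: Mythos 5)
Your outer framework matches the paper's: the upper bound by composing decision trees, the middle inequality for free, and the lower bound routed through Kraj\'{\i}\v{c}ek's real-protocol connection (\autoref{thm:real-cc}) followed by a decision-tree simulation of the real protocol that maintains a rectangle $X\times Y$ with blockwise min-entropy for Alice, bounded deficiency for Bob, and a partial assignment $\rho$ whose correctness is certified by a \autoref{lem:fixing}-style argument. One small correction of emphasis: the triangle step, which you flag as the crux requiring ``cleanup,'' is actually trivial in the tree-like setting --- after sorting by the triangle's orderings, one of the four quadrants has density $\geq 1/4$ and is either contained in or disjoint from $T_v$ (\autoref{fig:triangle}), costing only $O(1)$ bits of min-entropy.

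The genuine gap is that you locate the real difficulty in the wrong place and then dismiss it as ``purely quantitative'' tuning of thresholds. The crux is the entropy-restoration step (the \roundlemma): after Alice's blockwise min-entropy degrades below $0.95\log m$, one must find a set $I$ of blocks to fix, with a pointer assignment, such that the pointed-to bits of $\Y$ are (multiplicatively) near-uniform --- this is what lets the simulation realize every queried outcome $z_I$ while keeping $\Dmin(\Y)$ growing only like $O(|I|)$ per round. In all prior tree-like liftings this is done by a Fourier estimate of the form $|\Exp[\chi_I]|\leq m^{-\Omega(|I|)}$ plus a union bound over all $I\subseteq[n]$, and that union bound \emph{provably fails} when $m\ll n$, since there are $n^{|I|}$ candidate sets; your plan, which assumes each round costs $O(1)$ killed blocks ``with room to spare,'' silently relies on exactly this broken step, and your bookkeeping (Alice's total entropy $n\log m$ running out after $\Omega(m^{1/1000})$ rounds) is not where the cap comes from. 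The paper's new idea --- the blockwise-min-entropy-restoring partition of $X$ combined with grouping $[N]$ into $\poly(d)$ megacoordinates so the union bound ranges only over subsets of megacoordinates, and then the \indexlemma applied to the induced variable $\bm{\alpha}^h$ over alphabet $[m N/d^3]$ with $k=d^3$ blocks --- is precisely what removes the dependence on $n$; it also explains the exponent: one needs $m\geq\poly(d)$ (whence $\min(\d(F),m^{1/1000})$), not a budget of killed blocks versus $n$. Your parenthetical treating the \indexlemma as an optional ``large index'' side case shows this mechanism is missing from the proposal, so as written the lower-bound argument does not go through for gadgets of size independent of $n$.
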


\paragraph{Upper bound.}
The last inequality is the trivial part of \autoref{thm:small-lift-s}. Indeed, since $S_F$ admits a decision-tree of depth $\d(F)$, we have that $S_F\circ\Ind_m^n$ admits one of depth $\d(F)\d(\Ind_m)=\d(F)(\log m+1)$. Therefore $\resTree(F\circ\Ind_m^n)\leq \exp(O(\d(F\circ\Ind_m^n))) = m^{O(\d(F))}$.

\paragraph{Lower bound.}
The first inequality is the nontrivial part of \autoref{thm:small-lift-s}. We formulate a lifting theorem for real protocols, which implies the first inequality by virtue of \autoref{thm:real-cc}.
\begin{theorem}[Real lifting] \label{thm:real-lift}
Let $F$ be an unsatisfiable CNF formula. For every $m$,
\[
\rcc(S_F\circ \Ind_m)~\geq~\min(\d(F),m^{1/1000})\cdot \Omega(\log m).
\]
\end{theorem}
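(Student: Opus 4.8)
The plan is to prove \autoref{thm:real-lift} by a simulation/adversary argument that upgrades the classical deterministic Raz--McKenzie / Games lifting to the setting of a \emph{small} index gadget, paying careful attention never to let the gadget's small size ($m$ rather than $\poly(n)$) cost us a factor depending on $n$. Concretely, I would let $\Pi$ be a real protocol of depth $c$ solving $S_F \circ \Ind_m$, where $\Pi$ is played between $\Alice$ holding the pointers $x \in [m]^n$ and $\Bob$ holding the matrix rows $y \in (\{0,1\}^m)^n$. From $\Pi$ I would build a decision tree (or more precisely a Prover strategy in the Prover--Delayer/Spoiler game characterizing $\d(F)$) that solves $S_F$ using at most $O(c/\log m)$ queries to the underlying $z \in \{0,1\}^n$. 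Running $\Pi$ down a root-to-leaf path while maintaining an invariant that the current node's triangle $T_v$ is approximated by a structured box $X \times Y$ (product set with $Y = \prod_i Y^i$, each $Y^i \subseteq \{0,1\}^m$ of small deficiency, and $X$ of high blockwise min-entropy on the free coordinates) is the core loop; each time the entropy/deficiency budget on some block is exhausted, we \emph{query} the corresponding bit of $z$, and the main content is that this happens only $O(c/\log m)$ times.

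The key steps, in order, would be: (1) Set up the restriction/box invariant exactly as in \autoref{sec:dag-definitions}, but tuned to the tree-like two-party case: maintain a partial assignment $\rho \in (\{0,1\} \cup \{*\})^n$ together with a $\rho$-structured box $X \times \prod_i Y^i$ such that $\Ind_m^n(X\times\prod_i Y^i) = \Cube(\rho)$, i.e.\ the box is $\rho$-like in the sense of \autoref{lem:rho-like} (specialized to $\ell=1$). (2) Describe how to process one step of $\Pi$: the node $v$ has a triangle $T_v$, and we must decide whether the input goes left ($\in T_v$) or right; since a triangle is $\{(x,y): a(x) < b(y)\}$, for a structured box $X\times\prod Y^i$ the triangle restricted to the box is ``essentially one-sided'' unless the real values $a(\X)$ and $b(\bm{Y})$ are poorly separated, and a Chernoff/averaging argument (the real-protocol analogue of the partition-of-a-triangle argument behind the \simlemma) lets us pass to a large sub-box $X'\times\prod Y'^i$ lying almost entirely on one side, losing at most $O(1)$ bits of deficiency per block touched. (3) Maintain a potential function = (total deficiency of $X$ on free blocks) + $\sum_i$ (deficiency of $Y^i$); show each protocol step increases it by $O(1)$ and each query to $z$ is triggered exactly when some block's potential crosses $\Theta(\log m)$, at which point we fix that block of $\rho$ (querying $z_i$), reset that block's contribution, and use the one-output version of \autoref{lem:rho-like} to re-establish $\rho$-likeness. (4) Conclude: after the full path of length $c$, we have queried $q = O(c/\log m)$ bits of $z$ and arrived at a leaf labeled with some clause $D_j$ of $F$; $\rho$-likeness guarantees that \emph{every} $z$ consistent with $\rho$ falsifies $D_j$, so our $q$-query strategy actually solves $S_F$, hence $q \ge \d(F)$, giving $c \ge \d(F)\cdot\Omega(\log m)$. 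The $\min(\cdot, m^{1/1000})$ truncation enters because once $\d(F)$ exceeds roughly $m^{1/1000}$ we can only guarantee the invariant survives for that many block-fixings before the ambient min-entropy $\log m$ per block is used up; beyond that point we simply stop and the bound $c \ge m^{1/1000}\cdot\Omega(\log m)$ still holds.

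I expect the main obstacle to be step (2)--(3): making the deficiency/min-entropy bookkeeping work with a gadget of size only $m$, so that the \emph{number} of protocol steps chargeable to a single block is $\Theta(\log m)$ (not, say, $\Theta(\log n)$ or a constant). In the classical lifting arguments with $m = \poly(n)$ one is generous with constants; here every constant in the exponent of $m$ matters, which is why the theorem is stated with a crude exponent like $1/1000$. In particular one must argue that the "pass to a large one-sided sub-box" step (the tree-like, real-valued replacement for the \simlemma's triangle-partitioning) can be charged at $O(1)$ deficiency per step while keeping blockwise min-entropy of $X$ above, say, $0.9\log m$ on all still-free blocks, and keeping each $\Dmin(Y^i) \le m^{1/2}$ — exactly the hypotheses of \autoref{lem:rho-like}. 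This amounts to a quantitatively tight version of the density-restoration lemmas from \cite{goos15deterministic,rezende16limited}, and getting the constants to line up (so that $\Theta(\log m)$ steps per block, not $\Theta(1)$, is what the argument delivers) is the technical heart of the proof.
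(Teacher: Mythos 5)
Your high-level plan (simulate the real protocol by a shallow decision tree for $S_F$, maintaining a rectangle and a partial assignment $\rho$, querying $z$ only when blocks get fixed, and concluding depth $O(c/\log m)$) matches the paper's. But two of your central steps do not survive scrutiny. First, your invariant keeps Bob's side as a \emph{box} $\prod_i Y^i$ and claims that one protocol step can be absorbed by passing, via a ``Chernoff/averaging'' argument, to a large sub-box lying (almost) entirely on one side of the triangle $T_v=\{(x,y):a(x)<b(y)\}$. This fails: $b$ is an \emph{arbitrary} real labeling of the whole string $y\in(\{0,1\}^m)^n$, with no per-block structure, so the half $\{y: b(y)>t\}$ can be (say) a parity-like set of density $1/2$ that contains no sub-box in which more than one block is unfixed. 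Hence the product structure across Bob's blocks --- and with it your appeal to \autoref{lem:rho-like}, whose hypotheses are per-block largeness conditions --- cannot be maintained through a triangle cut. (An ``almost one-sided'' sub-box would also break soundness at the leaf, since the maintained set must consist of inputs that actually reach the current node.) The paper avoids this entirely: Bob's set $Y$ is kept unstructured with only its \emph{global} deficiency $\Dmin(\Y)\leq O(i)$ tracked, the protocol step is handled by the exact quadrant fact (every rectangle has a subrectangle of $1/4$ its size wholly inside or wholly outside a triangle, \autoref{fig:triangle}), and the per-block conclusions needed at the leaf come from \autoref{lem:fixing} ($\Dmin(\Y_i)\leq\Dmin(\Y)\leq O(d)\ll m^{0.95}$), not from a box structure.

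Second, the actual technical heart of the theorem is missing from your outline. With a gadget of size $m\ll n$, the standard way to restore Alice's blockwise min-entropy after a cut --- a Fourier bound of $m^{-\Omega(|I|)}$ on $\Exp[\chi_I]$ followed by a union bound over all $I\subseteq[n]$ --- collapses, because there are $n^{|I|}$ such sets and $m^{-\Omega(|I|)}$ cannot beat that. Your proposal acknowledges only that ``constants must line up,'' but the obstacle is structural, not a matter of constants. The paper's \roundlemma overcomes it with a new ingredient: perform the blockwise-min-entropy-restoring partition of $X$, group $[N]$ into $d^3$ megacoordinates (requiring $N\geq d^5$), discard the $\leq 1\%$ of $X$ whose fixed coordinates collide within a megacoordinate, and then apply the Fourier-based \indexlemma over the $d^3$ megacoordinates, where the union bound is affordable because $m\geq\operatorname{poly}(d)$. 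This is precisely where the exponent $1/1000$ and the truncation $\min(\d(F),m^{1/1000})$ come from (one needs $0.89\log m\geq 50\log d^3$ and $\Dmin(\Y)\leq O(d)\ll m^{0.95}$), rather than from the per-block entropy-exhaustion budget you describe. Without an argument replacing the megacoordinate step, your simulation cannot re-establish the $0.95\log m$ blockwise min-entropy invariant when $m$ is polylogarithmic in $n$, which is exactly the regime the theorem is for.
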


The rest of this section is dedicated to proving \autoref{thm:real-lift}, which would conclude the proof of \autoref{thm:small-lift-s} and thereby the proof of our main tree-like result. Our proof follows the general plan familiar from previous tree-like lifting theorems, especially the exposition in~\cite{goos15deterministic,goos17bpp}.

\subsection{Proof of real lifting} \label{sec:tree-proof}

To prove \autoref{thm:real-lift}, we start with a given real protocol $\Pi$ of depth $d\leq o(m^{1/1000}\log m)$ for the composed problem $S_F\circ\Ind^n_m$ and construct a decision-tree of depth $O(d/\log m)$ for $S_F$.

The decision-tree is naturally constructed by starting at the root of $\Pi$ and taking a walk down the protocol tree guided by occasional queries to the variables $z=(z_1,\ldots,z_n)$ of $S_F$. During the walk, we maintain a rectangle $R\subseteq[m]^n\times(\{0,1\}^m)^n$ consisting of inputs that reach the current node in the protocol tree. Our goal is to ensure that the image $\Ind_m^n(R)$ has some of its bits fixed according to the queries to $z$ made so far, and the remaining bits sufficiently unrestricted.  We start by formulating our main technical lemma, which handles a single step (aka \emph{round}) in this walk: How to update $R$ while controlling the gadget outputs.

For terminology, recall from \autoref{def:entropy} the notions of \emph{min-entropy} $\Hmin$, \emph{blockwise min-entropy}, \emph{deficiency} $\Dmin$, the notation $\X\in X$ for a set $X$, and the marginal distribution $\X_S$ supported on~$X_S$.

The purpose of the following \roundlemma is to start with an $\X$ of moderate blockwise min-entropy ($\geq 0.9\log m$) and, through fixing some more gadget output bits (namely, $I$), bump the blockwise min-entropy back up ($\geq 0.95\log m$).

\begin{round-lemma} \label{lem:round-lemma}
Let $R\coloneqq X\times Y \subseteq [m]^N\times(\{0,1\}^m)^N$ be a rectangle for some set $N$. Suppose $\X$ has blockwise min-entropy at least $0.9\log m$ and $\Hmin(\Y)\leq d\log m$. Then there exists $I \subseteq N$ (we write $\bar{I}\coloneqq N\smallsetminus I$) such that for all $z_I \in \{0,1\}^I$ there exists a subrectangle $R'\coloneqq X'\times Y'\subseteq R$ s.t.
\begin{enumerate}[label={\itshape (\alph*)}]
\item $X'_I$ and $Y'_I$ are fixed (singletons) to produce a gadget output $\Ind_m^I(X'_I,Y'_I) = \{z_I\}$. \label{it:round-1}
\item $\X'_{\bar{I}}$ has blockwise min-entropy at least $0.95\log m$. \label{it:round-2}
\item $\Dmin(\X'_{\bar{I}})~ \leq~ \Dmin(\X) - \Omega(|I| \log m) + O(1)$. \label{it:round-3}
\item $\Dmin(\Y'_{\bar{I}})~ \leq~ \Dmin(\Y) + O(|I|)$. \label{it:round-4}
\end{enumerate}
\end{round-lemma}

\newcommand{\alg}{\hyperref[algorithm]{Decision-tree}\xspace}

\alg describes our query simulation of the real protocol $\Pi$. It repeatedly invokes the \roundlemma in each step down the protocol tree. Below, we analyze its correctness and efficiency.

\floatname{algorithm}{Decision-tree}
\begin{algorithm}[bht]
	\begin{algorithmic}[1]
		\vspace{1mm}
		\Input $z\in \set{0,1}^n$
		\Output solution to $S_F$
		\vspace{3mm}
		\State initialize $v=$ root of $\Pi$, $\rho=*^n$, $R\coloneqq [m]^n\times(\{0,1\}^m)^n$
  \While{$v$ is not a leaf}
  	\State let $T_v$ be the triangle associated with $v$
  	\State let $R'\coloneqq X'\times Y'\subseteq R$, $|R'|\geq |R|/4$, be such that $R'\subseteq T_v$ or $R'\cap T_v=\emptyset$; see \autoref{fig:triangle}
		\vspace{2mm}
    \State apply \roundlemma to $R'_{\free(\rho)}=X'_{\free(\rho)}\times Y'_{\free(\rho)}$ to obtain an $I\subseteq\free(\rho)$
    \State \textbf{Query} the variables $z_I\in\{0,1\}^I$
		\State let $R''\subseteq R'$ be such that $R''_{\free(\rho)}$ is the subrectangle given by \roundlemma for outputs $z_I$
		\vspace{2mm}
    \State update $R \leftarrow R''$ and $\rho_I \leftarrow z_I$
  	\State update $v$ to its left child if $R'\subseteq T_v$ and right child otherwise
   \EndWhile
  \State \textbf{Output} the same value as $v$ does
	\end{algorithmic}
	\caption{}
  \label{algorithm}
\end{algorithm}

\begin{figure}[t]
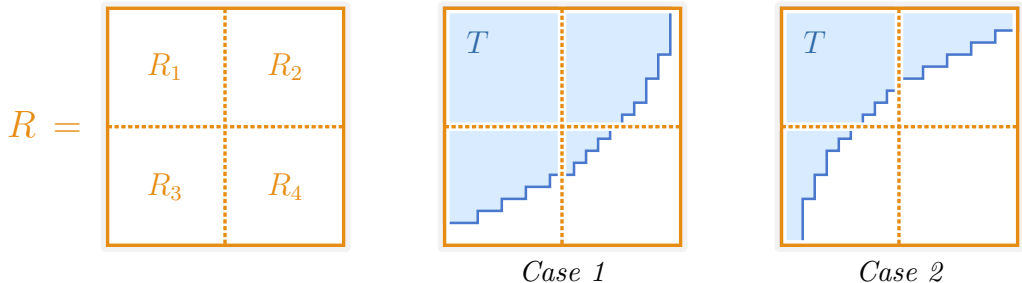

\begin{center}
\begin{lpic}[t(-3mm),b(-5mm)]{figs/quadrant(.32)}
\definecolor{myGold}{RGB}{231,141,20}
\definecolor{myBlue}{rgb}{0.19,0.41,.65}
\definecolor{myPurple}{RGB}{175,0,124}
\Large
\lbl[c]{-5,71;\color{myGold}$R\,=$}
\large
\lbl[c]{45,45;\color{myGold}$R_3$}
\lbl[c]{95,45;\color{myGold}$R_4$}
\lbl[c]{45,95;\color{myGold}$R_1$}
\lbl[c]{95,95;\color{myGold}$R_2$}

\lbl[c]{175,105;\color{myBlue}$T$}
\lbl[c]{315,105;\color{myBlue}$T$}

\normalsize
\lbl[c]{210,10;$\textit{Case 1}$}
\lbl[c]{350,10;$\textit{Case 2}$}

\end{lpic}
\end{center}
\caption{Simple fact: for every rectangle $R$ and triangle $T$ there is a subrectangle $R'\subseteq R$ with $|R'|\geq|R|/4$ that is either contained in $T$ or disjoint from it. Namely, after permuting the rows and columns of $R$ according to $T$'s orderings, take either the first quadrant $R_1$ or the fourth $R_4$.}%
\label{fig:triangle}%
\end{figure}

\begin{figure}[t]
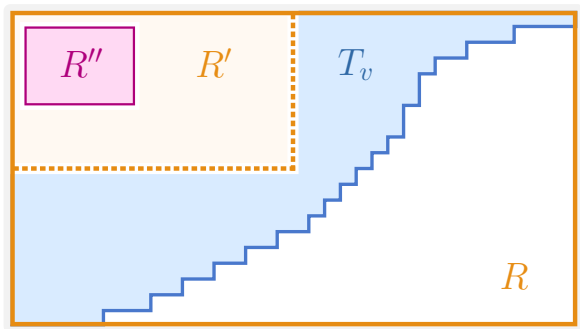

\begin{center}
\begin{lpic}[t(-3mm),b(-8mm)]{figs/tree-round(.42)}
\Large
\lbl[t]{43,108;\color{myPurple}$R''$}
\lbl[t]{85,108;\color{myGold}$R'$}
\lbl[t]{130,108;\color{myBlue}$T_v$}
\lbl[t]{180,40;\color{myGold}$R$}
\end{lpic}
\end{center}
\caption{A single iteration of \alg. The current rectangle $R$ is split by the triangle $T_v$. We choose $R'\subseteq R$ as a large rectangle either contained in $T_v$ or disjoint from it. Then we apply \roundlemma to $R'$, query relevant bits, and finally obtain $R''\subseteq R'$.}
\label{fig:tree-round}%
\end{figure}

\paragraph{Invariants and efficiency.}
We claim that $R=X\times Y$ and $\rho$ maintained by the decision-tree satisfy the following invariants at the start of the $i$-th iteration:
\begin{enumerate}[label={\itshape(\alph*')}]
\item[\emph{($\epsilon$')}] Write $I\coloneqq \fix(\rho)$ and $\bar{I}\coloneqq\free(\rho)=[n]\smallsetminus I$ for short. Then $|I|\leq O(i/\log m)$. \label{it:inv-1}
\item $X_I$ and $Y_I$ are fixed to produce a gadget output $\Ind_m^I(X_I,Y_I) = \{z_I\}$ (where $z_I=\rho_I$). \label{it:inv-2}
\item $\X_{\bar{I}}$ has blockwise min-entropy at least $0.95\log m$. \label{it:inv-3}
\item $\Dmin(\X_{\bar{I}})~ \leq~ O(i) - \Omega(|I| \log m)$. \label{it:inv-4}
\item $\Dmin(\Y_{\bar{I}})~ \leq~ O(i) + O(|I|)~\leq~O(i)$. \label{it:inv-5}
\end{enumerate}
The first item follows from \ref{it:inv-4} and the nonnegativity of deficiency. This shows that our decision-tree is efficient: it has depth $O(d/\log m)$, as desired. All the other properties are straightforward consequences of \ref{it:round-1}--\ref{it:round-4} in \roundlemma. We only need to check that the assumptions of \roundlemma are met every time it is invoked on line 5. The shrinking of $R$ down to $R'$ on line 4 can only lose at most 2 bits of min-entropy for the relevant random variables. Thus, if the blockwise min-entropy of $\X_{\free(\rho)}$ is $\geq 0.95\log m$ at the start of the iteration, then the blockwise min-entropy of $\X'_{\free(\rho)}$ is at least $\geq 0.95\log m-2\geq 0.9\log m$. Moreover, $\Pi$ has depth $d$, so \ref{it:inv-5} implies inductively that the precondition $\Hmin(\Y_{\bar{I}})\leq d\log m$ of \roundlemma is met.

\paragraph{Correctness of output.}
We finally have to argue that if we reach a leaf $v$ of $\Pi$, while maintaining $R$, $\rho$, then the solution output by $\Pi$ is also valid solution to $z$, of which the decision-tree knows that $z_{\fix(\rho)}=\rho_{\fix(\rho)}$. We need the following simple lemma (in fact, the \roundlemma will use a much stronger property of the gadgets, but we still give a short proof of the following).
\begin{lemma} \label{lem:fixing}
Consider any $R=X\times Y$ and associated $\rho$ during the execution of \alg. Then $Z\coloneqq \Ind_m^n(R)$ is such that $Z_{\fix(\rho)}=\{z_{\fix(\rho)}\}$ and $Z_i=\{0,1\}$ for every $i\in\free(\rho)$.
\end{lemma}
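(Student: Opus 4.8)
The plan is to read both statements off the invariants \ref{it:inv-2}, \ref{it:inv-3} and \ref{it:inv-5} maintained by \alg, together with the elementary fact that $R$ is a combinatorial rectangle and that $\Ind_m^n$ is evaluated block-by-block. Writing $X_J$ (resp.\ $Y_J$) for the projection of the set $X$ (resp.\ $Y$) onto a set of blocks $J\subseteq[n]$, the product form $R=X\times Y$ implies that the projection of $Z=\Ind_m^n(R)$ onto $J$ equals $\Ind_m^J(X_J\times Y_J)$ for every $J$; hence it suffices to treat $\fix(\rho)$ and each individual free block separately.

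For $J=\fix(\rho)$, invariant \ref{it:inv-2} says precisely that $X_{\fix(\rho)}$ and $Y_{\fix(\rho)}$ are singletons with $\Ind_m^{\fix(\rho)}(X_{\fix(\rho)},Y_{\fix(\rho)})=\{z_{\fix(\rho)}\}$, whence $Z_{\fix(\rho)}=\{z_{\fix(\rho)}\}$. For a single free block $i\in\free(\rho)$ we have $Z_i=\{\,b_a : a\in X_i,\ b\in Y_i\,\}\subseteq\{0,1\}$, so it is enough to exhibit a pointer value $a\in X_i$ that addresses a position which is not constant across $Y_i$: choosing $b,b'\in Y_i$ that differ at position $a$ then puts both $0$ and $1$ into $Z_i$. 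To produce such an $a$, I would argue by counting. On one hand, invariant \ref{it:inv-3} applied with the singleton $S=\{i\}$ in the definition of blockwise min-entropy gives $\Hmin(\X_i)\ge 0.95\log m$, hence $|X_i|\ge m^{0.95}$. On the other hand, if $k$ of the $m$ positions are constant over $Y_i$ then $\Hmin(\Y_i)\le\log|Y_i|\le m-k$, while a routine min-entropy estimate of the form $\Hmin(\mathbf{AB})\le\Hmin(\mathbf A)+\log|\mathrm{supp}(\mathbf B)|$ (applied with $\mathbf A=\Y_i$ and $\mathbf B=\Y_{\free(\rho)\smallsetminus\{i\}}$) together with invariant \ref{it:inv-5} yields $\Hmin(\Y_i)\ge |\free(\rho)|\,m-\Dmin(\Y_{\free(\rho)})-(|\free(\rho)|-1)m\ge m-O(d)$, and therefore $k\le O(d)$. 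Since the real protocol $\Pi$ has depth $d\le o(m^{1/1000}\log m)<m^{0.95}\le|X_i|$, the pointer set $X_i$ cannot be contained in the at most $k$ frozen positions of $Y_i$, which produces the desired $a$ and finishes the block.

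I expect the only mildly delicate point to be the bookkeeping in the last step: converting the \emph{joint} deficiency bound of invariant \ref{it:inv-5} (a statement about $\Y_{\free(\rho)}$) into a bound on the number of positions frozen inside the single-block projection $Y_i$. This is a standard sub-additivity-type fact for min-entropy and should not be a genuine obstacle. Everything else is a direct unwinding of the invariants; conceptually the lemma merely says that a rectangle over-constraining at most $O(d)$ of the $m$ bits of $Y_i$ cannot neutralize a pointer ranging over $\ge m^{0.95}\gg O(d)$ positions.
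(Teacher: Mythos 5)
Your proposal is correct and follows essentially the same route as the paper: the fixed blocks are read off invariant \ref{it:inv-2}, and for a free block $i$ you combine $\Hmin(\X_i)\geq 0.95\log m$ (invariant \ref{it:inv-3}) with $\Dmin(\Y_i)\leq O(d)\ll m^{0.95}$ (invariant \ref{it:inv-5}, after marginalizing) to conclude the pointer support cannot lie entirely in the frozen positions of $Y_i$. The only differences are presentational: the paper argues by contradiction (a constant gadget output would force $\Dmin(\Y_i)\geq m^{0.95}$), while you argue directly and spell out the standard fact that marginalization does not increase deficiency, which the paper uses implicitly.
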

\begin{proof}
The claim about the coordinates $\fix(\rho)$ is \ref{it:inv-2}. Consider any $i\in\free(\rho)$. Then $\Hmin(\X_i)\geq 0.95\log m$ by \ref{it:inv-3}, and $\Dmin(\Y_i)\leq O(d)$ by \ref{it:inv-5}. Suppose for contradiction that $\Ind_m(\X_i,\Y_i)=0$, say. Hence all the $\geq m^{0.95}$ different values that $\X_i\in[m]$ can take are fixed to $0$ in $\Y_i\in\{0,1\}^m$. This implies $\Dmin(\Y_i)\geq m^{0.95}$. But this contradicts $\Dmin(\Y_i)\leq O(d)\leq m^{1/1000}\log m$.
\end{proof}

Suppose $\Pi$ outputs a clause $C$ of $F$ at the leaf $v$. Our goal is to show that $C(z)=0$, that is, that $C$ is a valid solution for $z$. By definition of $S_F\circ \Ind_m^n$ this means that $C(z')=0$ for all $z'\in \Ind_m^n(R)$, that is, all variables appearing in $C$ are fixed in the set $\Ind_m^n(R)$. By \autoref{lem:fixing} we must have that the variables of $C$ are contained in $\fix(\rho)$. Thus $C(z)=C(z_{\fix(\rho)})=0$, as desired.

This completes the proof of the real lifting theorem, assuming the \roundlemma.


\subsection{Overview of Round Lemma} \label{sec:round-lem}

To prove the \roundlemma we follow the general approach of \cite{goos17bpp,garg18monotone}, which we recap now along with what is needed to make it work for small gadgets where $m \ll N$.

Since the goal is to bump up the blockwise min-entropy from $0.9 \log m$ to $0.95 \log m$, we start by computing a blockwise min-entropy restoring partition of $X$, which simply takes a maximal assignment
that violates $0.95 \log m$ blockwise min-entropy, makes a part with all $x$'s that have that assignment, 
and then repeats on the rest of $X$ until all $x$'s are covered. The construction will guarantee
each part in the partition will fulfill all requirements for $X$, and so then we turn our attention to finding
a part with a fixed assignment $(I, \alpha)$ such that $Y$ is roughly uniform on the locations pointed
to by $(I, \alpha)$.

In \cite{garg18monotone}, the simplest way to prove this is to show that there is some $x \in X$
such that $Y$ is roughly uniform on {\it all} locations pointed to by $x$, and then simply take the rectangle part containing $x$. Because $X$ has high blockwise min-entropy
and $Y$ has very low deficiency, a Fourier argument
directly shows that for every set $I$, the expected parity of $\PMInd_m^I({\bf x}, {\bf y})$
is close to 0, where $\PMInd_m$ is the parity analogue of $\Ind_m$.
Taking a union bound over all such sets $I$, with high probability over $x \in X$
the expected parity of $\PMInd_m^I(x, {\bf y})$ is close to 0 for {\it all} sets $I$, which is equivalent (see e.g. \cite{goos17bpp})
to $\Ind_m^I(x, {\bf y})$ being close to uniform with high probability over $x$, and
choosing any such $x$ completes the lemma as stated before.

However, this union bound over all sets $I \subseteq [N]$ only works when ``close to 0'' is $N^{-\Omega(|I|)}$,
since there will be $N^{|I|}$ sets of that size. In reality the argument only shows 
$\Exp[\PMInd_m^I({\bf x}, {\bf y})] \leq m^{-\Omega(|I|)}$, which fails
in our case where $m \ll N$. Thus instead of disregarding the fixed $(I,\alpha)$ assignments of the rectangle partition
when looking at $x$, we will use the fact that the $(I,\alpha)$'s are the only coordinates of $y$
we care. While we have no control over the number of parts in the rectangle partition,
we {\it can} say that each individual part corresponds to an assignment of at most $O(d)$ coordinates.

We group $[N]$ into $\poly(d)$ ``megacoordinates'' of size $N/\poly(d)$ such that most $(I, \alpha)$ assignments in the rectangle partition
each only point to one value per megacoordinate 
(see \autoref{fig:rect-part} for an illustration).
We use the $(I,\alpha)$'s to replace the $x$s with shorter $x'$ vectors which only point to one value per megacoordinate, 
and repeat the argument in \cite{garg18monotone} but only using sets of
megacoordinates $I \subseteq [\poly(d)]$. Since $m = \poly(d)$, $m^{-\Omega(|I|)}$ is enough to cancel out 
$(\poly(d))^{|I|}$, and so the union bound goes through, giving an $x'$ that makes $\Ind_{m \cdot N/\poly(d)}^I(x', {\bf y})$ close to uniform.
Using the way we constructed the $x'$s out of the rectangle partition, this will give us an assignment $(I,\alpha)$
which is equally close to uniform from the partition, which completes the lemma.

\begin{figure}[t]
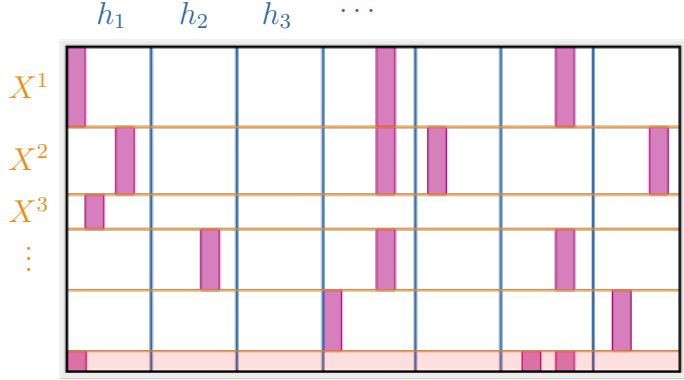

\begin{center}
\begin{lpic}[t(-2mm),b(-2mm)]{figs/partition(.5)}
\large
\lbl[t]{7,97;\color{myGold}$X^1$}
\lbl[t]{7,78;\color{myGold}$X^2$}
\lbl[t]{7,64;\color{myGold}$X^3$}
\lbl[t]{7,55;\color{myGold}\vdots}

\lbl[t]{29,116;\color{myBlue}$h_1$}
\lbl[t]{51,116;\color{myBlue}$h_2$}
\lbl[t]{73,116;\color{myBlue}$h_3$}
\lbl[t]{95,114;\color{myBlue}\ldots}
\end{lpic}
\end{center}
\caption{After partitioning $X$ into $\{X^j\}$ (purple regions are the coordinates of $I_j$, the restriction $\alpha_j$ to $I_j$ not pictured), we randomly block up the coordinate space $[N]$ into $\poly(d)$ megacoordinates (labeled $h_i$ here). With high probability only a small fraction of $X$ will be lost due to collisions.}
\label{fig:rect-part}%
\end{figure}



\subsection{Proof of Round Lemma}

We begin by performing a blockwise min-entropy restoring partition \cite{goos17bpp} on $X$.
\begin{itemize}
\item Initialize $\calF=\emptyset$. Iterate the following for $j = 1,2,\ldots$ until $X = \emptyset$:
\begin{itemize}
\item Let $I_j$ be a maximal (possibly empty) subset of $[N]$ such that 
$\X$ violates $0.95 \log m$-blockwise min-entropy on $I_j$, 
and let $\alpha_j \in [m]^{I_j}$ be an outcome witnessing this:
\[
\Pr[\X_{I_j}=\alpha_j]>2^{-0.95\log m}.
\]
\item Update ${\cal F} \leftarrow {\cal F} \cup \{(I_j, \alpha_j)\}$.
\item Define $X^j := \{x \in X:x_{I_j} = \alpha_j\}$ and update $X \leftarrow X \smallsetminus X^j$.
\end{itemize}
\item Return $X=\bigsqcup_j X^j$ and ${\cal F}$
\end{itemize}

Suppose the procedure returns $X=\bigsqcup_j X^j$ with associated ${\cal F} = \{(I_1, \alpha_1) \ldots (I_t, \alpha_t)\}$. For every $x \in X$ let $j(x)$ be the $j \in [t]$ such that $x \in X^j$.
By Lemma 5 of \cite{goos17bpp} it holds that $\Dmin(X^j_{I_j}) \leq \Dmin(\X) - 0.1|I_j|\log m + O(1)$ for all parts $X^j$ (except for some tiny parts, output late in the partitioning process, whose union covers at most 1\% of $X$; we tacitly ignore these parts).

For convenience, we assume that $N \geq d^5$. (Indeed, real lifting theorems already exist for large enough gadgets as discussed in \autoref{sec:tree-overview}; moreover, \roundlemma is only easier to prove in the regime of large $d$ and $m$.) We group the coordinates in $[N]$ into $d^3$ mega-coordinates.
Let ${\bf h}$ be a random variable which is uniform over all functions $h$ mapping $[N]\to[d^3]$ where $|h^{-1}(i^h)| = \frac{N}{d^3}$ for all $i^h \in [d^3]$.
Consider the subset of ${\cal F}$
consisting only of pairs $(I_j, \alpha_j)$ such that all coordinates in $I_j$ are
mapped to different mega-coordinates by $h$, or formally
$${\cal F}^h = \{(I_j, \alpha_j) \in {\cal F} : \forall i \neq i' \in I_j, h(i) \neq h(i')\}$$
Let $X_h \subseteq X$ be the union of all $X^j$ sets of the rectangle partition
such that $(I_j, \alpha_j) \in {\cal F}_h$.

\begin{claim}
With high probability over $h \sim {\bf h}$, we have $|X_h| \geq 0.99|X|$.
\end{claim}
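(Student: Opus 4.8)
The plan is to control the expected fraction of $X$ that is lost, and then conclude by Markov's inequality. Fix any $x \in X$ and let $(I_{j(x)},\alpha_{j(x)}) \in \calF$ be the label of the part $X^{j(x)}$ that contains $x$. By definition of $X_{\mathbf h}$ and of $\calF^{\mathbf h}$, we have $x \in X_{\mathbf h}$ if and only if ${\mathbf h}$ is injective on $I_{j(x)}$, i.e.\ no two coordinates of $I_{j(x)}$ are mapped to the same megacoordinate. So the first step is to show that $\Pr_{\mathbf h}[\,x \notin X_{\mathbf h}\,]$ is $o(1)$ \emph{uniformly in $x$}; summing this bound over all $x \in X$ gives $\Exp_{\mathbf h}[\,|X \smallsetminus X_{\mathbf h}|\,] = o(|X|)$, and Markov's inequality then yields $|X_{\mathbf h}| \ge 0.99|X|$ with high probability.

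For the uniform collision bound I would use two ingredients. First, every (non-tiny) part of the partition has small width, $|I_j| \le O(d)$: since $X^j_{I_j} = \{\alpha_j\}$ is a singleton we have $\Dmin(X^j_{I_j}) = |I_j|\log m$, and substituting this into the deficiency bound $\Dmin(X^j_{I_j}) \le \Dmin(\mathbf X) - 0.1|I_j|\log m + O(1)$ recalled above gives $|I_j| \le O(\Dmin(\mathbf X)/\log m) = O(d)$, using that $\Dmin(\mathbf X) = O(d\log m)$ whenever the \roundlemma is invoked (as guaranteed by the invariants of \alg). Second, since ${\mathbf h}$ is a uniformly random \emph{balanced} map $[N] \to [d^3]$, a fixed pair $i \neq i'$ collides with probability $\frac{N/d^3 - 1}{N-1} \le d^{-3}$: conditioning on the megacoordinate of $i$, the coordinate $i'$ occupies a uniformly random one of the $N-1$ remaining slots. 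A union bound over the $\binom{|I_{j(x)}|}{2} = O(d^2)$ pairs in $I_{j(x)}$ then gives $\Pr_{\mathbf h}[\,x \notin X_{\mathbf h}\,] \le O(d^2)\cdot d^{-3} = O(1/d)$.

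Putting these together, $\Exp_{\mathbf h}[\,|X \smallsetminus X_{\mathbf h}|\,] \le O(1/d)\cdot|X|$, so by Markov $\Pr_{\mathbf h}[\,|X \smallsetminus X_{\mathbf h}| > 0.01|X|\,] \le O(1/d)$, which is $o(1)$ in the regime of interest; hence $|X_{\mathbf h}| \ge 0.99|X|$ with high probability, as claimed. I expect the only subtle point to be the width bound $|I_j| \le O(d)$: it is exactly this that makes $\poly(d)$ (here $d^3$) megacoordinates enough to beat the $\binom{|I_j|}{2}$ possible collisions, and it is a byproduct of the deficiency accounting for the blockwise min-entropy restoring partition rather than of anything in this claim. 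Everything else is a routine first-moment argument, and the fact that ${\mathbf h}$ is balanced rather than fully random only helps — it makes pairwise collisions slightly less likely, and the union bound does not use independence in any case.
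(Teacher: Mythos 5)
Your proof is correct and follows essentially the same route as the paper: a per-part collision bound for the random balanced map $\mathbf{h}$ (using $|I_{j(x)}|\le O(d)$, which the paper asserts via the same deficiency accounting and you rightly trace to the invariants of \alg), followed by a first-moment/Markov argument over $x\in X$. The only cosmetic difference is that you bound the collision probability by a union bound over the $O(d^2)$ pairs in $I_{j(x)}$, while the paper evaluates the no-collision probability as an explicit birthday-style product; both give failure probability $O(1/d)$ per part.
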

\begin{proof}
We show that for a uniform choice of $x$ from $X$, with high probability
the unique part $X^{j(x)}$ which contains $x$ survives into $X_h$. 
See Figure~\ref{fig:rect-part} for an illustration.
Formally, $\Pr_{h \sim {\bf h}} [\Pr_{x \sim {\bf x}} (X^{j(x)} \not\subseteq X_h)] < 0.01$. First we consider the case of a fixed $x$. We will switch the calculation by treating $h$ as a fixed partition from ${\bf h}$ 
and treating $I_{j(x)}$ as a random set of size at most $10d$. To see that these are equivalent,
we can treat $h \sim {\bf h}$ as simply being a uniformly random permutation on $[N]$
with a fixed partition into $d^3$ equal sized megacoordinates, and so we can view 
$I_{j(x)}$ as a random set over $h([N])$.

Recalling that $N \geq d^5$, a straightforward calculation shows that
$$\begin{array}{rcl}
\Pr_{I_{j(x)}}(\forall i \neq i' \in I_{j(x)}: h(i) \neq h(i')) & = & \displaystyle \prod_{i = 0}^{10d} 1 - \frac{i \cdot (N/d^3 - 1)}{N - i} \\
& \geq & \displaystyle(1 - \frac{10d \cdot N/d^3}{N/2})^{10d} \\
& \geq & \displaystyle(1 - \frac{20}{d^2})^{10d} \\
& \geq & \displaystyle e^{-200/d} \geq 0.99
\end{array}$$
and so the same holds for $\Pr_{h \sim {\bf h}}(\forall i \neq i' \in I_{j(x)}: h(i) \neq h(i'))$
by our previous argument. Therefore
$$\begin{array}{rcl} \displaystyle\Pr_{h \sim {\bf h}} [\Pr_{x \sim {\bf x}} (X^{j(x)} \not\subseteq X_h)]
& = & \displaystyle\Pr_{h \sim {\bf h}} [\Pr_{x \sim {\bf x}} (\exists i \neq i' \in I_{j(x)}: h(i) = h(i'))] \\
& = & \displaystyle\Pr_{x \sim {\bf x}} [\Pr_{h \sim {\bf h}} (\exists i \neq i' \in I_{j(x)}: h(i) = h(i'))] \\
& \leq & \displaystyle\sum_{x \in X} \Pr_{x' \sim {\bf x}}(x' = x) \Pr_{h \sim {\bf h}}(\exists i \neq i' \in I_{j(x)}: h(i) = h(i')) \\
& < & \displaystyle\sum_{x \in X} \Pr_{x' \sim {\bf x}}(x' = x) \cdot 0.01 \\
& = & 0.01 \displaystyle\sum_{x \in X} \Pr_{x' \sim {\bf x}}(x' = x) = 0.01 \end{array}$$
which completes our claim.
\end{proof}

Henceforth, fix any $h$ satisfying $|X_h| \geq 0.99|X|$.
We shift to viewing each $y \in Y$ as a matrix $y^h \in Y^h$ with $m \cdot N/d^3$ rows and
$d^3$ columns in the canonical way, where each entry $((\alpha,i), i^h)$ in $y^h$ corresponds 
to the entry $(\alpha,i')$ in the original matrix $y$, where $i'$ is the $i$th element of the megacoordinate $i^h$.
Following our usual conventions let ${\bf x}_h$ be the uniform random variable
for selecting $x$ from $X_h$ and let and ${\bf y}^h$ be the uniform random variable
for selecting $y$ from $Y$ and viewing it as $y^h$ as described above.

Recall that $X$ satisfied $0.9 \log m$-blockwise min-entropy, and so for any $I \subseteq [N]$,
$H_{\infty}(X_I) \geq 0.9 \cdot |I| \log m$. Thus for all assignments $\alpha_I$,
$$\begin{array}{rcl} \displaystyle\Pr_{x \sim {\bf x}_h}(x_I = \alpha_I) 
& \leq & \displaystyle\frac{|X|}{|X_h|}\Pr_{x \sim {\bf x}}(x_I = \alpha_I) \\
& \leq & \frac{1}{0.99} \cdot 2^{-0.9 |I| \log m}  \leq 2^{-0.89 |I| \log m}
\end{array}$$
and so $X_h$ satisfies $0.89 \log m$-blockwise min-entropy.

Now we define the random variable ${\bm \alpha^h}$ on $([m] \times [\frac{N}{d^3}])^{d^3}$ 
to be a random restriction on $x$ that picks one location in each mega-coordinate and
assigns it a restriction $\alpha$. Note that this can also be viewed as choosing a location
in each column of $y^h$. The restriction will be sampled according to ${\cal F}^h$,
by first sampling $x \sim {\bf x}_h$ and taking all assignments in the corresponding pair $(I_j, \alpha_j)_{h(I_j)}$ where $j = j(x)$, and then choosing a random assignment $(i,\alpha^i)_{i^h}$ for all mega-coordinates $i^h$ left unassigned by $\alpha_j$.

\begin{figure}[H]
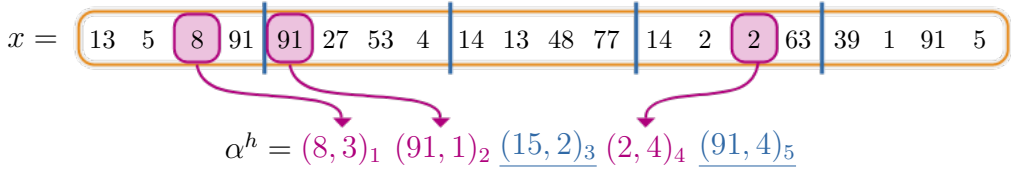

\begin{center}
\begin{lpic}[b(-7mm),t(-5mm)]{figs/mega-x(.50)}
\small
\lbl[l]{21,48;$13$}
\lbl[l]{35,48;$5$}
\lbl[l]{48,48;$8$}
\lbl[l]{58,48;$91$}
\lbl[l]{71,48;$91$}
\lbl[l]{83,48;$27$}
\lbl[l]{95,48;$53$}
\lbl[l]{108,48;$4$}
\lbl[l]{119,48;$14$}
\lbl[l]{131,48;$13$}
\lbl[l]{143,48;$48$}
\lbl[l]{155,48;$77$}
\lbl[l]{169,48;$14$}
\lbl[l]{183,48;$2$}
\lbl[l]{196,48;$2$}
\lbl[l]{206,48;$63$}
\lbl[l]{219,48;$39$}
\lbl[l]{232,48;$1$}
\lbl[l]{242,48;$91$}
\lbl[l]{256,48;$5$}

\large
\lbl[c]{88,19;\color{myPurple}$(8,3)_1$}
\lbl[c]{115,19;\color{myPurple}$(91,1)_2$}
\lbl[c]{143,19;\color{myBlue}\underline{$(15,2)_3$}}
\lbl[c]{169,19;\color{myPurple}$(2,4)_4$}
\lbl[c]{196,19;\color{myBlue}\underline{$(91,4)_5$}}

\lbl[c]{6,48;$x =$}
\lbl[c]{66,20;$\alpha^h =$}

\end{lpic}
\caption{Example of sampling $\alpha^h$ for $d^3 = 5$ megacoordinates of size $mN/d^3 = 4$.
Here $(I_j, \alpha_j)$ for $I_j = \{3, 5, 15\}$ and $\alpha_j = \{(8)_3, (91)_5, (2)_{15}\}$ is sampled. $(8)_3$ goes to $(8,3)$ in the first coordinate, $(91)_5$ goes to $(91,1)$ in the second
coordinate, and $(2)_{15}$ goes to $(2,4)$ in the fourth coordinate. For the third and fifth coordinate a pair in $[m] \times [5]$ is chosen uniformly, choosing $(15,2)$ for the third
and $(91,4)$ for the fifth.}
\end{center}
\label{fig:mega-sample}%
\end{figure}

Formally we define ${\bm \alpha^h}$ by the following procedure:
\begin{itemize}
\item sample $x \sim {\bf x}_h$ and let $j = j(x)$
\item for each $i^h \in h(I_j)$ let $i$ be the coordinate in $I_j$ mapping to $i^h$

and set $\alpha^h \leftarrow \alpha^h \cup ((\alpha_j)_i, i)_{i^h}$ 
\item for each $i^h \notin h(I_j)$ choose $i$ uniformly from $h^{-1}(i^h)$, 
choose $\alpha^i$ uniformly from $[m]$, and set $\alpha^h \leftarrow \alpha^h \cup (\alpha^i,i)_{i^h}$
\item return $\alpha^h$
\end{itemize}

Note that extending $\alpha_j$ uniformly to $\alpha^h$ does not change the min-entropy.
Thus because $X_h$ has blockwise min-entropy at least $0.89 \log m$, 
${\bm \alpha^h}$ has blockwise min-entropy at least $0.89 \log m$ as well, 
and the coordinates of every $\alpha^h$ are exactly $[d^3]$.

To proceed we now state a key lemma which is a generalized version of the Uniform Marginals Lemma of \cite{goos17bpp}. For completeness, we prove it in \autoref{sec:ind-lem}.


\begin{definition}[Multiplicative uniformity]
We say a random variable $\x\in S$ is $\epsilon$-multiplicatively uniform if $\Pr[\x=x]=(1 \pm \epsilon) \cdot \frac{1}{|S|}$ for all outcomes $x\in S$.
\end{definition}

\begin{index-lemma} \label{lem:index-lemma}
Let $\x \subseteq [\ell]^k$ and $\y \in (\{0,1\}^\ell)^k$ be random variables such that $\x$ has blockwise min-entropy $\geq 50\log k$ and $\Dmin(\y)\leq k$. Then there exists $x\in\operatorname{supp}(\x)$ such that $\Ind_{\ell}^k(x,\y)$
is $o(1)$-multiplicatively uniform.
\end{index-lemma}


We apply \indexlemma with $\x \coloneqq {\bm \alpha^h}$, $\y\coloneqq \Y^h$, $\ell \coloneqq mN/d^3$, $k\coloneqq d^3$. Note that $\Dmin(\y) \leq O(d) \leq k$ and that $\x$ has blockwise min-entropy $\geq 0.89\log m \geq 0.89\log d^{999} \geq 50\log d^3 = 50\log k$. We conclude that there is an $\alpha^h\in\operatorname{supp}{\bm \alpha^h}$ such that $\Ind_{mN/d^3}(\alpha^h, \y^h)$ is $o(1)$-multiplicatively uniform. Fix such an $\alpha^h$ and let $(I_j, \alpha_j)$ be any pair from which $\alpha^h$ can be sampled in our previous procedure.

We can now undo our grouping into mega-coordinates: Because $\Ind_{mN/d^3}(\alpha^h, {\bf y}^h)$ is $o(1)$-multiplicatively uniform, by marginalizing to $I_j$ we have that for all $x \in X^j$, $\Ind_m^{I_j}(x, {\bf y}) = \Ind_{mN/d^3}^{I_j}(\alpha_j, {\bf y}^h)$ is also $o(1)$-multiplicatively-close to uniform.

We now proceed to prove the lemma for $I \coloneqq I_j$. Hence let $z_I \in \{0,1\}^I$. We first take $X' = X^j$. For $Y'$ we first define $Y^{I,z_I} = \{y \in Y: \Ind_m^I(\alpha_j, y) = \{z_I\}\}$. We need to fix the rest of $Y_I^{I,z_I}$ in order for $Y_I'$ to be fixed, and so for $a \in \{0,1\}^{m|I|}$ we take $Y^a = \{y \in Y^{I,z_I}: y_I = a\}$.
Finally we let $Y' = Y^{\arg \max_a |Y^a|}$, or in other words we choose the largest $Y^a$
(obviously $Y^a$ is empty if $a$ is not consistent with $z_I$ in $\alpha_I$ so we can assume otherwise). We verify the properties \ref{it:round-1}--\ref{it:round-4}.

\begin{itemize}
\item[(a)] {\it $X_I'$ and $Y_I'$ are fixed and $\Ind_m^I(\alpha_j, y) = \{z_I\}$ for all $(x,y) \in R'$} \\
By definition of $X^j$, $x_{I_j} = \alpha_j$ for all $x \in X_I'$, and for $a$ such that $Y' = Y^a$ we have that $y_I = a$ for all $y \in Y' = Y^a$.
Note that $Y^a \subseteq Y^{I,z_I}$, and so by definition of $Y^{I,z_I}$ we know $y$ is fixed to $\{z_I\}$ on $\alpha_j$.
\item[(b)] {\it ${\bf X}_{\overline{I}}'$ has blockwise min-entropy at least $0.95 \log m$} \\
By the fact that all $I_j$ are maximally chosen in the rectangle partition, 
all rectangles $X^j$ have blockwise min-entropy $0.95 \log m$ on the coordinates $[N]-I_j$.
\item[(c)] {\it $\Dmin(\X'_{\bar{I}})~ \leq~ \Dmin(\X) - \Omega(|I| \log m) + O(1)$} \\
By Lemma 5 of \cite{goos17bpp}, each $X^j$ in the rectangle partition
has deficiency $\Dmin(\X) - 0.1|I|\log m + O(1)$.
\item[(d)] {\it $\Dmin(\Y'_{\bar{I}})~ \leq~ \Dmin(\Y) + O(|I|)$} \\
By the fact that $X^j \times Y$ is $o(1)$-multiplicatively-close to uniform and
by our definition of multiplicative uniformity,
$$\displaystyle\Pr_{y \sim {\bf y}}(y \in Y^{I,z_I}) \geq (1\pm o(1))2^{-|I|} \geq \frac{1}{2} \cdot 2^{-|I|}$$
and so $\Dmin(\Y_{\overline{I}}^{I,z_I})\leq \Dmin(\Y) + |I| + 1$. To move to $\Y_{\overline{I}}'$
we simply note that we chose the assignment $a$ that maximizes
$\Pr(\Y_I^{I,z_I} = a)$, which cannot increase $\Dmin(\Y_{\overline{I}}^{I,z_I})$.
\end{itemize}


\appendix

\section{Proof of \texorpdfstring{\indexlemma}{Large Index Lemma}}\label{sec:ind-lem}

We state two key lemmas before proving \indexlemma.
For convenience we shorten the base of the expectation when the variable 
in the inner expression is clear.
The first lemma is a standard application of Fourier analysis which appears
in different forms in many papers; we state the version needed to prove \indexlemma
and prove it at the end of this subsection, following the proof of \cite{vidick}.

\begin{lemma}\label{lem:fourier}
Let $\Lambda$ and $\Gamma$ be random variables on $X := [\ell]^k$ and $Y := (\{\pm 1\}^{\ell})^k$ respectively.
Assume that $\Lambda$ has blockwise min-entropy $\beta > 1/2$ and $\Gamma$ has deficiency $s$. 
Then for every $I \subseteq [k]$,
$$|\Exp_{\Lambda, \Gamma}[\chi_I(y_x)]| \leq (2^{-\beta/2 - 1}(k+s))^{|I|}$$
where $\chi_I(y_x) = \prod_{i \in I} y_i(x_i)$
\end{lemma}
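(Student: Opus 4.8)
The plan is to pass to the Fourier domain on the $\Gamma$-side. Identify $(\{\pm 1\}^\ell)^k$ with $\{\pm 1\}^{k\ell}$, with coordinates indexed by pairs $(i,a)\in[k]\times[\ell]$, and let $f$ be the density of $\Gamma$ relative to the uniform distribution, so $f\ge 0$, $\Exp[f]=1$, and the deficiency hypothesis $\Dmin(\Gamma)\le s$ becomes $\|f\|_\infty\le 2^{s}$. For a fixed $x\in[\ell]^k$, the quantity $\chi_I(y_x)=\prod_{i\in I}y_i(x_i)$ is exactly the Fourier character $\chi_{S_x}$ of the set $S_x:=\{(i,x_i):i\in I\}$, which has size exactly $|I|$; hence $\Exp_{\Gamma}[\chi_I(y_x)]=\hat f(S_x)$ and therefore $\Exp_{\Lambda,\Gamma}[\chi_I(y_x)]=\Exp_{x\sim\Lambda}[\hat f(S_x)]$, a quantity that depends on $\Lambda$ only through its marginal $\Lambda_I$ on the blocks of $I$. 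In particular we never have to condition $\Lambda$ on anything, so the usual obstruction that conditioning destroys min-entropy does not arise.

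Next I would introduce the function $g:=\Exp_{x_I\sim\Lambda_I}[\chi_{S_{x_I}}]$, so that $\Exp_{\Lambda,\Gamma}[\chi_I(y_x)]=\langle f,g\rangle$. Distinct outcomes of $\Lambda_I$ yield distinct sets $S_{x_I}$, all of size exactly $|I|$, so $g$ lies in the level-$|I|$ Fourier subspace, and Parseval gives $\|g\|_2^2=\sum_{x_I}\Pr[\Lambda_I=x_I]^2\le\max_{x_I}\Pr[\Lambda_I=x_I]\le 2^{-\Hmin(\Lambda_I)}\le 2^{-\beta|I|}$, using that blockwise min-entropy at least $\beta$ forces $\Hmin(\Lambda_I)\ge\beta|I|$. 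Since $g$ is supported on level $|I|$, we have $\langle f,g\rangle=\langle \Pi_{|I|}f,\,g\rangle$ where $\Pi_{|I|}$ is the projection onto level $|I|$, so Cauchy--Schwarz yields
\[
\bigl|\Exp_{\Lambda,\Gamma}[\chi_I(y_x)]\bigr|\ \le\ \|\Pi_{|I|}f\|_2\cdot\|g\|_2\ \le\ \sqrt{W^{|I|}(f)}\cdot 2^{-\beta|I|/2},
\]
where $W^{j}(f):=\sum_{|S|=j}\hat f(S)^2$ denotes the level-$j$ Fourier weight of $f$.

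The last step is to bound $W^{|I|}(f)$ for a density with $\|f\|_\infty\le 2^{s}$, and here I would invoke hypercontractivity. From $\|f\|_\infty\le 2^{s}$ and $\Exp[f]=1$ we get $\|f\|_{1+\delta}^{1+\delta}=\Exp[f\cdot f^{\delta}]\le\|f\|_\infty^{\delta}\,\Exp[f]\le 2^{s\delta}$, and the Bonami--Beckner inequality $\|T_{\sqrt\delta}f\|_2\le\|f\|_{1+\delta}$ gives $W^{|I|}(f)\le\delta^{-|I|}\|f\|_{1+\delta}^{2}\le\delta^{-|I|}2^{2s\delta}$. Optimizing over $\delta\in(0,1]$ produces $W^{|I|}(f)\le(O(s)/|I|)^{|I|}$ in the range $|I|\le O(s)$, and $W^{|I|}(f)\le\|f\|_2^2\le 2^{s}$ otherwise; in either case $W^{|I|}(f)\le\bigl((k+s)^2/4\bigr)^{|I|}$, the comparison being immediate once $s$ exceeds an absolute constant and otherwise following from $k$ not being tiny (which is the only regime in which \indexlemma is applied). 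Plugging this in gives $\bigl|\Exp_{\Lambda,\Gamma}[\chi_I(y_x)]\bigr|\le 2^{-\beta|I|/2}\cdot\bigl((k+s)/2\bigr)^{|I|}=\bigl(2^{-\beta/2-1}(k+s)\bigr)^{|I|}$, as desired.

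I expect the main obstacle to be obtaining the correct ``to the power $|I|$'' shape of the bound, rather than a bound of the form $2^{s/2-\beta|I|/2}$: a single Cauchy--Schwarz of $\langle f,g\rangle$ using only $\|f\|_2^2\le 2^{s}$ yields the latter, which is too weak for the subsequent union bound over the $\binom{k}{|I|}$ many level-$|I|$ characters that \indexlemma requires. The improvement rests on the fact that the $\Lambda$-averaged character $g$ is concentrated at Fourier level exactly $|I|$, so one may pair it against $\Pi_{|I|}f$ instead of $f$; controlling the level-$|I|$ weight of a bounded-density function then genuinely needs hypercontractivity rather than the trivial estimate. The remaining work --- tracking constants so that the per-level factor is exactly $2^{-\beta/2-1}(k+s)$, and handling the degenerate small-$s$ or small-$k$ cases --- is routine bookkeeping.
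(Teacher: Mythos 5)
Your proposal is correct in substance, but it takes a genuinely different route from the paper. The paper first reduces to $I=[k]$ by marginalizing $\Gamma$ (marginalization cannot increase deficiency), then applies Cauchy--Schwarz over $x\in[\ell]^k$ using the R\'enyi-entropy bound $\sum_x \Lambda(x)^2\le 2^{-\beta k}$ implied by blockwise min-entropy, and finally bounds $\sum_x |\Exp_\Gamma[\chi(y_x)]|^2$ by peeling off one coordinate at a time: for each $i$, the conditional Shannon entropy of the pointed-to bit is at least $\ell-(k+s)$, and Pinsker's inequality converts this into a contribution of at most $(k+s)/2$ per coordinate, giving $((k+s)/2)^{k}$ in total --- i.e.\ the paper bounds only the sub-family of Fourier coefficients $\{\hat f(S_x)\}$ that actually arises, by an elementary entropy argument. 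You instead stay entirely in the Fourier domain: your $\|g\|_2\le 2^{-\beta|I|/2}$ step is essentially the same R\'enyi/Cauchy--Schwarz ingredient rephrased via Parseval, but in place of the entropy-peeling you bound $\|\Pi_{|I|}f\|_2^2$, the \emph{full} level-$|I|$ weight of the bounded density, by hypercontractivity (a level-$k$ inequality). What each buys: your per-level factor $O(s)/|I|$ is independent of $k$ and can beat the paper's $(k+s)/2$ when $|I|$ is large compared to $s$, and your observation that $g$ lives at level exactly $|I|$ cleanly isolates why the naive $\|f\|_2\le 2^{s/2}$ bound is too weak; the paper's argument avoids hypercontractivity altogether and delivers the stated constant essentially uniformly. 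One caveat to note explicitly: your crude optimization gives roughly $(2e\ln 2\cdot s/|I|)^{|I|}$ for the level weight, and for $|I|$ small this exceeds $((k+s)^2/4)^{|I|}$ when $k$ and $s$ are both small constants, so as written your argument proves the lemma only outside that corner (or with a sharper level-$k$ inequality); you flag this, and it is harmless for the application where $k=d^3$, but the paper's statement is unconditional, so a sentence patching the tiny-$(k,s)$ regime (e.g.\ via $W^{1}(f)\le \|f\|_2^2-1\le 2^{s}-1$ type estimates or the sharp level-one inequality) would be needed for full fidelity.
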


The second lemma appeared in a different form in \cite{goos17bpp} as Lemma 9. We
omit the proof and defer interested readers to \cite{goos17bpp}.

\begin{lemma}\label{lem:uniform}
Let $x \in [\ell]^{k}$ and $Y \subseteq \{\pm 1\}^{\ell \times k}$ be such that 
$$|\Exp_{{\bf y}}[\chi_I(y_x)]| \leq 2^{-10|I|\log k}$$
for all $I \subseteq [k]$.
Then ${\bf y}_x$ is $1/k^3$-multiplicatively-close to uniform.
\end{lemma}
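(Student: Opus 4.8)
The final statement (\autoref{lem:uniform}, the ``Uniform Marginals'' ingredient of \cite{goos17bpp}) is a pure Fourier fact: small Fourier coefficients of a distribution on the cube force multiplicative near-uniformity. The plan is a one-shot Fourier inversion followed by a character-by-character union bound. Write ${\bf y}$ for the uniform distribution on $Y$ and regard $w\coloneqq{\bf y}_x=(y_1(x_1),\ldots,y_k(x_k))\in\{\pm 1\}^k$ as a random vector. The first step is to recall Fourier inversion on $\mathbb{Z}_2^k$: expanding the indicator $\mathbf{1}[w=z]=\prod_{i\in[k]}\tfrac12\bigl(1+z_i w_i\bigr)$ and taking expectations yields, for every $z\in\{\pm 1\}^k$,
\[
\Pr[\,{\bf y}_x=z\,]~=~2^{-k}\sum_{I\subseteq[k]}\chi_I(z)\cdot\Exp_{{\bf y}}[\chi_I(y_x)].
\]
The $I=\emptyset$ term contributes exactly $2^{-k}$, so it remains only to bound the contribution of the nonempty $I$ by $2^{-k}/k^{3}$, uniformly in $z$.

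\textbf{The union bound.} The second step is to plug in the hypothesis $|\Exp_{{\bf y}}[\chi_I(y_x)]|\le 2^{-10|I|\log k}$, rewritten base-$k$ as $k^{-10|I|}$, and note $|\chi_I(z)|=1$. Grouping the nonempty $I\subseteq[k]$ by size $s=|I|$ and using $\binom{k}{s}\le k^{s}$,
\[
\sum_{\emptyset\neq I\subseteq[k]}\bigl|\Exp_{{\bf y}}[\chi_I(y_x)]\bigr|~\le~\sum_{s=1}^{k}\binom{k}{s}\,k^{-10s}~\le~\sum_{s\ge 1}k^{-9s}~=~\frac{k^{-9}}{1-k^{-9}}~\le~\frac{1}{k^{3}}
\]
for all $k\ge 2$, with enormous slack (the true bound is $O(k^{-9})$). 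Combining the two displays gives $\Pr[\,{\bf y}_x=z\,]=(1\pm k^{-3})\,2^{-k}$ for every $z\in\{\pm 1\}^k$; in particular every outcome has strictly positive probability, so $\operatorname{supp}({\bf y}_x)=\{\pm 1\}^k$ and ${\bf y}_x$ is $1/k^{3}$-multiplicatively uniform by definition.

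\textbf{Main obstacle.} There is essentially none: the argument is routine once the hypothesis is read base-$k$. The only two points needing care are (i) the identity $2^{-10|I|\log k}=k^{-10|I|}$, which is exactly what makes the per-character decay outrun the $\binom{k}{s}\le k^{s}$ count and turn the error into a convergent geometric series, and (ii) checking that the exponent $10$ (any constant exceeding $3$ would already do, after adjusting the comparison with $k^{-3}$) leaves room for the geometric tail to fit inside the target slack $1/k^{3}$. Both are immediate. If one wanted the statement for $k=1$ as well, the hypothesis at $|I|=1$ already forces $|\Exp[\chi_{\{1\}}(y_x)]|\le 2^{-10}$, which gives slack far better than $1/k^{3}$ after adjusting constants, so the bound is robust at that edge case.
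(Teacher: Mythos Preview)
Your proof is correct. Note that the paper itself does not prove this lemma: it explicitly states ``We omit the proof and defer interested readers to \cite{goos17bpp},'' so there is no paper proof to compare against. Your argument---Fourier inversion on $\{\pm 1\}^k$ followed by grouping the nonempty characters by size and summing the geometric series $\sum_{s\ge 1}\binom{k}{s}k^{-10s}\le\sum_{s\ge 1}k^{-9s}$---is exactly the standard route and is presumably what \cite{goos17bpp} does as well; it is clean and complete as written.
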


\begin{proof}[Proof of Large Index Lemma]
We map all $y$ from elements of $\{0,1\}^{\ell \times k}$ to
$(\{\pm 1\}^{\ell})^k$ in the natural way.
Applying \autoref{lem:fourier} we get that for all $I \subseteq [k]$
$$|\Exp_{\Lambda, {\bf y}}[\chi_I(y_x)]| \leq (2^{-25 \log k - 1}(k+k))^{|I|} \leq 2^{-20|I| \log k}$$
where the second inequality is by assumption. By Markov's inequality then, for any $I \subseteq [k]$
$$\displaystyle\Pr_{x \sim \Lambda}(|\Exp_{{\bf y}}[\chi_I(y_x)]| > 2^{-10|I|\log k}) \leq 2^{-10|I|\log k}$$
We say $x$ is {\it good} if $|\Exp_{{\bf y}}[\chi_I(y_x)]| \leq 2^{-10|I|\log k}$
for all $I \subseteq [k]$.
Taking a union bound over all such $I$ we get
$$\begin{array}{rcl} \displaystyle\Pr_{x \sim \Lambda}(x \mbox{ is not good}) 
& \leq & \displaystyle\sum_{I \subseteq [k]} \displaystyle\Pr_{x \sim \Lambda}(|\displaystyle \Exp_{{\bf y}}[\chi_I(y_x)]| > 2^{-10|I|\log k}) \\
& \leq & \displaystyle\sum_{I \subseteq [k]} 2^{-10|I|\log k} \\
& \leq & \sum_{t=1}^{k} \binom{k}{t}  2^{-10 t \log k} \\
& \leq & \displaystyle\sum_{t=1}^{k} 2^{-9 t \log k} \leq 2/k^9 \end{array}$$

Hence most $x$ are good, and by Lemma \ref{lem:uniform} for any good $x$
we have that $\Ind_k(x, {\bf y})$ is $1/k^3$-multiplicatively-close to uniform.
\end{proof}


\begin{proof}[Proof of \autoref{lem:fourier}]
Because marginalizing $\Gamma$ to any $S \subseteq \ell \times k$ cannot increase the deficiency of $\Gamma_S$ in $Y_S$,
it is enough to show that
$$|\Exp_{\Lambda, \Gamma}[\chi(y_x)]| \leq (2^{-\beta/2 - 1}(k+s))^k$$
Let $\Lambda(x) = \Pr(\Lambda = x)$. Because $\Lambda$ has blockwise min-entropy $\beta$,
it has Renyi entropy at least $\beta \cdot k$, meaning $\sum_x \Lambda(x)^2 \leq 2^{-\beta \cdot k}$.
By Cauchy-Schwarz
$$\begin{array}{rcl} \displaystyle|\Exp_{\Lambda, \Gamma} [\chi(y_x)]| 
& = & \displaystyle\sum_x \Lambda(x) |\Exp_{\Gamma}[\chi(y_x)]| \\
& \leq & (\displaystyle\sum_x \Lambda(x)^2)^{1/2} (\sum_x |\Exp_{\Gamma}[\chi(y_x)]|^2)^{1/2} \\
& \leq & 2^{-(\beta/2) k} \cdot (\displaystyle\sum_x |\Exp_{\Gamma}[\chi(y_x)]|^2)^{1/2} \\
& = & 2^{-(\beta/2) k} \cdot (\displaystyle\sum_x |\Exp_{\Gamma}[\chi(y_x)]|^2)^{1/2}
\end{array}$$
We thus turn our attention to proving a bound on $\sum_x |\Exp_{\Gamma}[\chi(y_x)]|^2$. 
Let $\chi_{\geq i}(y_x) = \chi_{\{i \ldots k\}}(y_x)$. Again by Cauchy-Schwarz
$$\begin{array}{rcl}
\displaystyle\sum_x |\Exp_{\Gamma}[\chi(y_x)]|^2  & = & \displaystyle\sum_x |\prod_i \Exp_{\Gamma}[\chi_{\geq i}(y_x)]|^2 \\
& \leq & \displaystyle\sum_x \prod_i \Exp_{\Gamma}[\chi_{\geq i}(y_x)]^2 \\
& = & \displaystyle\sum_{x_2 \ldots x_k} \prod_{i \geq 2} \Exp_{\Gamma}[\chi_{\geq i}(y_x)]^2 \cdot \sum_{x_1} \Exp_{\Gamma}[\chi_{\geq 1}(y_x)]^2
\end{array}$$
Since $\Hmin(\Gamma) \geq \ell k - s$, for a fixed $x_2 \ldots x_k$
$$\Ent(\chi_{\geq 1}(y_x)) = \Ent(\Gamma_1 \mid \Gamma(x_2) \ldots \Gamma(x_k)) \geq \ell - (k + s)$$
By Pinsker's inequality $\Exp_{\Gamma}[\chi_{\geq 1}(y_x)]^2 \leq (1-\Ent(\chi_{\geq 1}(y_x)))/2$, and so by sub-additivity of the expectation
$$\displaystyle\sum_{x_1} \Exp_{\Gamma}[\chi_{\geq 1}(y_x)]^2 \leq (\ell - (\ell - (k + s))/2 = (k + s)/2$$
Plugging this back into our previous expression we get
$$\displaystyle\sum_{x_2 \ldots x_k} \prod_{i \geq 2} \Exp_{\Gamma}[\chi_{\geq i}(y_x)]^2 \cdot \sum_{x_1} \Exp_{\Gamma}[\chi_{\geq 1}^2(y_x)]^2 \leq \frac{k+s}{2}\sum_{x_2 \ldots x_k} \prod_{i \geq 2} \Exp_{\Gamma}[\chi_{\geq i}(y_x)]^2$$
Finally we repeat for all $i = 2 \ldots k$, and in the end we get
$$\begin{array}{rcl}
\displaystyle\sum_x |\Exp_{\Gamma}[\chi(y_x)]|^2 
& \leq & \displaystyle\sum_x \prod_i \Exp_{\Gamma}[\chi_{\geq i}(y_x)]^2 \\
& \leq & \frac{k+s}{2}\displaystyle\sum_{x_2 \ldots x_k} \prod_{i \geq 2} \Exp_{\Gamma}[\chi_{\geq i}(y_x)]^2 \\
& & \ldots \\
& \leq & (\frac{k+s}{2})^k \displaystyle \prod_{i > k} \Exp_{\Gamma}[\chi_{\geq i}(y_x)]^2 = (\frac{k+s}{2})^k
\end{array}$$
Putting this bound on $\sum_x |\Exp_{\Gamma}[\chi(y_x)]|^2$ together with the earlier proof completes the lemma.
\end{proof}

\section{Proof of Simplex Lemma} \label{sec:proof-sim-lemma}

The proof of \simlemma is a small modification of the proof of the Triangle Lemma (the case of 2-dimensional simplices) in~\cite{garg18monotone}. Since the proof for the latter is somewhat long, we describe here only the required modifications. Our discussion naturally assumes familiarity with the original proof~\cite{garg18monotone}, which analyzed a partitioning procedure called Triangle Scheme (with subroutines Rectangle Scheme and Column Cleanup). The basic difference between the two settings is that instead of partitioning a 2-dimensional simplex over $\calX\times \calY$, Bob's input in $\calY$ is further shared over $n\ell$ many Bobs, that is, $\calY$ is replaced with $\prod_{ij}\calY^{ij}$. In this appendix, we explain how to replace all parts involving Bob with multi-party analogs. There are two: (1) Rectangle Scheme, and (2) Column Cleanup.

\paragraph{(1) Rectangle Scheme.}
Our first observation is that the Rectangle Scheme, which partitions rectangles $R\subseteq \calX\times\calY$, works equally well to partition boxes $B\subseteq \calX\times\prod_{ij}\calY$. Indeed, each part output by Rectangle Scheme is obtained from $R\coloneqq X\times Y$ by restricting the set $X$ arbitrarily and, crucially, restricting $Y$ only via bit-wise restrictions (Round 2 of Rectangle Scheme fixes pointed-to bits in all possible ways). But such bit-wise restrictions when applied to a box $B\coloneqq X\times \prod_{ij} Y^{ij}$ still result in a box. With this understanding, we may apply Rectangle Scheme to a box.

\paragraph{(2) Column Cleanup.}
Our biggest modification is to replace the Column Cleanup procedure with a natural multi-party analog. We start with a lemma saying that either a simplex over Bobs' domains contains a box that satisfies the largeness condition of $\rho$-structuredness (\autoref{def:structured}), or the simplex can be covered with a small error set.

\begin{claim}
Let $T\subseteq \prod_{ij} \calY^{ij}$ be a simplex. Then one of the following holds.
\begin{enumerate}[label=(\roman*)]
\item $T$ contains a box $B\coloneqq\prod_{ij} Y^{ij}$ where each $Y^{ij}$ has density $\geq \smash{2^{-m^{1/2}}}$ (i.e.,~$\Dmin(\bm{Y}^{ij})\leq m^{1/2}$).
\item $T$ is covered by $\bigcup_{ij} Y^{ij,\err}\times \prod_{i'j'\neq ij} \calY^{i'j'}$ where each $Y^{ij,\err}$ has density $\leq 2^{-m^{1/2}}$.
\end{enumerate}
\end{claim}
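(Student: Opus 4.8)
The dichotomy is a purely geometric statement about a single $k$-partite simplex $T\subseteq\prod_{ij}\calY^{ij}$ (with $k=n\ell$ parts, each $\calY^{ij}=\{0,1\}^m$), so I would prove it by a greedy ``column-cleanup'' process run in parallel across all the Bob-coordinates. Recall that $T$ being a simplex means there are orderings $\preceq_{ij}$ on each $\calY^{ij}$ and a threshold structure: writing $b_{ij}\colon\calY^{ij}\to\R$ for the rank function of $\preceq_{ij}$, membership $(y^{ij})_{ij}\in T$ depends monotonically on the tuple $(b_{ij}(y^{ij}))_{ij}$, i.e. $T$ is upward-closed in the product order $\preceq=\prod_{ij}\preceq_{ij}$ after possibly reversing each order. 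The first key observation is therefore: $T$ contains the single box $\prod_{ij}\{\top_{ij}\}$ where $\top_{ij}$ is the $\preceq_{ij}$-maximum element (since $T$ is nonempty and upward-closed), so $T$ is never empty of boxes; the only question is whether it contains a box all of whose sides are \emph{large} (density $\geq 2^{-m^{1/2}}$).

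**Key steps.** First, set the threshold $\delta\coloneqq 2^{-m^{1/2}}$. For each coordinate $ij$ define $Y^{ij}_{\mathrm{big}}$ to be the $\preceq_{ij}$-upward-closed set consisting of the top $\delta$-fraction of $\calY^{ij}$ (more precisely, the smallest upward-closed set of density $\geq\delta$; since each $\calY^{ij}$ is finite we can take exactly the top $\lceil\delta 2^m\rceil$ elements). Second, ask whether $B^\ast\coloneqq\prod_{ij}Y^{ij}_{\mathrm{big}}\subseteq T$. If yes, we are in case (i): each side $Y^{ij}_{\mathrm{big}}$ has density $\geq\delta=2^{-m^{1/2}}$, i.e. $\Dmin(\bm Y^{ij})\leq m^{1/2}$, as required. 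If no, then there is a witness point $y^\ast=(y^{\ast,ij})_{ij}\in B^\ast\setminus T$; since $T$ is upward-closed and $y^\ast\notin T$, every point $\preceq y^\ast$ is also outside $T$, in particular the down-set $D(y^\ast)\coloneqq\prod_{ij}\{y\in\calY^{ij}:y\preceq_{ij}y^{\ast,ij}\}$ is disjoint from $T$. But $y^{\ast,ij}\in Y^{ij}_{\mathrm{big}}$ means the up-set strictly above $y^{\ast,ij}$ has density $<\delta$ — equivalently the down-set $\{y\preceq_{ij}y^{\ast,ij}\}$ has density $>1-\delta$. Hence the \emph{complement} of $D(y^\ast)$ is covered by $\bigcup_{ij}Y^{ij,\err}\times\prod_{i'j'\neq ij}\calY^{i'j'}$ where $Y^{ij,\err}\coloneqq\{y:y\succ_{ij}y^{\ast,ij}\}$ has density $<\delta\leq 2^{-m^{1/2}}$. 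Since $T\subseteq\calY\setminus D(y^\ast)$, this puts us in case (ii).

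**The main subtlety.** The argument above as stated gives case (i) with density $\geq\delta$ but only shows the box $B^\ast$ is a \emph{specific} candidate; the only real content is noticing that failure of ``$B^\ast\subseteq T$'' immediately exhibits the error cover with exactly the right density bound, using upward-closedness in an essential way. The one point that needs care is the direction of the orderings: a simplex is monotone only ``up to an ordering of the parts,'' so for some coordinates the relevant large set is the top $\delta$-fraction and for others the bottom $\delta$-fraction; this is handled uniformly by first reversing $\preceq_{ij}$ on those coordinates where $T$ is downward- rather than upward-closed, which is exactly what the definition of monotone/unate permits, and then running the argument above in the resulting all-upward-closed coordinate system. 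I expect this bookkeeping — making sure ``big'' and ``err'' are defined on the correct side of each coordinate's order, and that the density threshold $2^{-m^{1/2}}$ is threaded consistently through both branches — to be the only place requiring attention; the geometric core is a one-line consequence of the fact that a nonempty up-set either contains the ``all-large'' corner box or is separated from it by a thin down-set.
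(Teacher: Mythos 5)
Your proposal is correct and is essentially the paper's own argument in dual form: the paper orients the orders so that $T$ is a down-set, takes the largest corner cube $[M]^{n\ell}$, and either its sides meet the density threshold (case (i)) or the sets $[M]$ serve as the thin per-coordinate error slabs covering $T$ (case (ii)), which matches your check of whether the fixed threshold box $B^*$ lies in the (upward-closed) $T$ and, if not, covering $T$ by the strict up-sets above the witness point. The differences (up-set vs.\ down-set orientation, fixed-threshold box containment vs.\ largest corner cube) are only cosmetic, so no further comparison is needed.
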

\begin{proof}
Consider the largest \emph{cube} $B\coloneqq \prod_{ij} Y^{ij}$ contained in $T$, that is, where all the sets $Y^{ij}\subseteq\calY^{ij}$ have the same size. The largest cube can be obtained by the following process: Identify each $\calY^{ij}=\{0,1\}^m$ with $[N]$ according to the reverse of the ordering given to $\calY^{ij}$ by $T$. (Thus if $x\in T\subseteq [N]^{n\ell}$ and $x'\leq x$ coordinate-wise then $x'\in T$.) Then $B$ equals $[M]^{n\ell}$ where $M$ is the largest number such that $(M,\ldots,M)\in T$. If some (and hence every) $Y^{ij}$ has density $\geq 2^{-m^{1/2}}$ we are in case \emph{(i)}. Otherwise we claim we are in case \emph{(ii)} with $Y^{ij,\err}\coloneqq Y^{ij}$. Indeed, consider any $x\coloneqq(M_{11},\ldots,M_{n\ell})\in T$. We must have $M_{i^*j^*}\leq M$ for some $(i^*,j^*)\in[n]\times[\ell]$ since otherwise by monotonicity $(M+1,\ldots,M+1)\in T$ contradicting our choice of $M$. But then $x \in Y^{i^*,j^*}\times\prod_{i'j'\neq i^*j^*}\calY^{i'j'}$, as required.
\end{proof}

We say that a simplex $T\subseteq\prod_{ij}\calY^{ij}$ is \emph{empty-or-heavy} iff $T=\emptyset$ or $T$ satisfies case \emph{(i)} above.

\floatname{algorithm}{Bob Cleanup}
\begin{algorithm}[ht]
	\begin{algorithmic}[1]
		\vspace{1mm}
		\Input Simplex $T \subseteq \calX \times \prod_{ij}\calY^{ij}$
		\Output Error sets $Y^{ij,\err} \subseteq \calY^{ij}$ and their combination $Y^{\err}$ 
		\vspace{3mm}
  \State initialize $Y^{ij,\err} \gets \emptyset$ and write $Y^{\err} \coloneqq \bigcup_{ij} Y^{ij,\err}\times\prod_{i'j'\neq ij}\calY^{i'j'}$ as a function of the $Y^{ij,\err}$
	\State For $I \subseteq [n]$, $\alpha \in [m]^I$, $\gamma \in (\{0,1\}^\ell)^I$, define $Y_{I,\alpha,\gamma} \coloneqq \big\{y \in \prod_{ij}\calY^{ij} : g^I(\alpha, y_I) = \gamma\big\}$
  \While{there are $I, \alpha, \gamma, x\in\calX$ s.t.\ $T'\coloneqq T \cap (\{x\} \times (Y_{I,\alpha,\gamma}\smallsetminus Y_{\err}))$ is not empty-or-heavy}
Add to the $Y^{ij,\err}$ all error sets from case \emph{(ii)} for $T'$
    \EndWhile
  \State \textbf{Output} $Y^{ij,\err}$ and $Y^{\err}$
	\end{algorithmic}
  \label{alg:bob}
	\caption{}
\end{algorithm}
\newcommand{\bobclean}{\hyperref[alg:bob]{Bob Cleanup}\xspace}

The below claim is the multi-party analog of Claim 10 in~\cite{garg18monotone}. This completes the modifications needed to the proof of the Triangle Lemma to handle multiple Bobs.

\begin{claim}\label{clm:col-clean}
For a simplex $T \subseteq \calX \times \prod_{ij}\calY^{ij}$, let $Y^{ij,\err}$, $Y^{\err}$ be the outputs of \bobclean. Then:
\begin{itemize}
	\item \emph{Empty-or-heavy:}~ For every triple $(I \subseteq [n], \alpha \in [m]^I, \gamma \in (\{0,1\}^\ell)^I)$, and every $x \in \calX$, it holds that $T \cap (\set{x} \times (Y_{I,\alpha,\gamma}\smallsetminus Y_{\err}))$ is empty-or-heavy.
	\item \emph{Size bound:}~ $|Y^{ij,\err}| \le 2^{m - \Omega(m^{1/2})}$ for every $i,j$.
\end{itemize}
\end{claim}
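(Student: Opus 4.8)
The \emph{empty-or-heavy} bullet is just the exit condition of the loop in \bobclean: when \bobclean halts there is no triple $(I,\alpha,\gamma)$ and no $x\in\calX$ for which $T\cap(\{x\}\times(Y_{I,\alpha,\gamma}\smallsetminus Y_{\err}))$ fails to be empty-or-heavy, which is precisely the claimed statement. So the entire content is (i) that the loop halts and (ii) the size bound, and both will come out of one counting argument. First I would record the elementary structural fact that makes the dichotomy Claim above applicable inside the loop: since $Y_{\err}=\bigcup_{ij}Y^{ij,\err}\times\prod_{i'j'\ne ij}\calY^{i'j'}$, its complement in $\prod_{ij}\calY^{ij}$ is the box $\prod_{ij}(\calY^{ij}\smallsetminus Y^{ij,\err})$; hence $Y_{I,\alpha,\gamma}\smallsetminus Y_{\err}$ is a box, and $T\cap(\{x\}\times(Y_{I,\alpha,\gamma}\smallsetminus Y_{\err}))$ is, on the Bob coordinates, a simplex (the down-set $T_x$ restricted to a sub-box of $\prod_{ij}\calY^{ij}$ is a down-set of that sub-box). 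Therefore, whenever the loop examines such a set $T'$ and finds it not empty-or-heavy, the dichotomy Claim puts $T'$ in case~(ii): $T'\subseteq\bigcup_{ij}Y^{ij}_\ast\times\prod_{i'j'\ne ij}\calY^{i'j'}$, where each side $Y^{ij}_\ast$ of the largest cube inside $T'$ has density $<2^{-m^{1/2}}$, and these $Y^{ij}_\ast$ are exactly the sets the loop appends to the $Y^{ij,\err}$.

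The key step is the claim that \emph{each tuple $(I,\alpha,\gamma,x)$ is processed at most once}. Indeed, right after such a tuple is processed the enlarged $Y_{\err}$ contains (the Bob-part of) the very set $T'$ that was examined, by case~(ii); since $Y_{\err}$ only grows, the set $T\cap(\{x\}\times(Y_{I,\alpha,\gamma}\smallsetminus Y_{\err}))$ attached to this tuple can only shrink from now on and is already empty, so the tuple is forever ineligible. Consequently the number of iterations is at most the number of tuples, $\sum_{I\subseteq[n]}m^{|I|}\cdot 2^{\ell|I|}\cdot|\calX|\le (2m^{2}2^{\ell})^{n}=2^{O(n\ell\log(n\ell))}$; and since $m=(n\ell)^{5}$, so that $m^{1/2}=(n\ell)^{5/2}$, this is $2^{o(m^{1/2})}$.

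This bound settles both parts at once. Since there are only $2^{o(m^{1/2})}$ iterations, \bobclean halts, which yields the \emph{empty-or-heavy} bullet. For the \emph{size bound}, each iteration enlarges a given $Y^{ij,\err}$ by a set of density $<2^{-m^{1/2}}$ (case~(ii)), so a union bound over the at most $2^{o(m^{1/2})}$ iterations gives $|Y^{ij,\err}|\le 2^{o(m^{1/2})}\cdot 2^{m-m^{1/2}}=2^{m-(1-o(1))m^{1/2}}=2^{m-\Omega(m^{1/2})}$, as claimed.

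I expect the only genuine care needed is bookkeeping: checking that ``simplex'' and ``empty-or-heavy'' are used consistently with the Triangle Lemma proof — in particular that the densities in the dichotomy Claim are measured against all of $\calY^{ij}=\{0,1\}^{m}$ (up to the harmless factor-$\tfrac12$ slack from the one bit that $Y_{I,\alpha,\gamma}$ fixes in each $\calY^{ij}$) — and that a slice of a simplex intersected with a box is again a simplex over the corresponding sub-box, so that the dichotomy Claim legitimately applies to every $T'$ the loop inspects. The one quantitative input is that $m$ is a fixed polynomial in $n\ell$ of large enough degree (degree $5$ is plenty), so that the crude count $2^{O(n\ell\log(n\ell))}$ of tuples is swamped by $2^{m^{1/2}}$. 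Up to these substitutions — boxes, simplices and \bobclean in place of rectangles, triangles and Column Cleanup — this follows Claim~10 and its proof in~\cite{garg18monotone}.
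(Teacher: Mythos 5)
Your proof is correct and follows essentially the same route as the paper's: the empty-or-heavy property is read off from the loop's exit condition, and the size bound comes from the fact that each tuple $(I,\alpha,\gamma,x)$ triggers the loop at most once, so the total added to each $Y^{ij,\err}$ is (number of tuples, $2^{O(n\ell\log(n\ell))}=2^{o(m^{1/2})}$ since $m=(n\ell)^5$) times the per-iteration density $\le 2^{-m^{1/2}}$. Your justification that a processed tuple's set becomes empty (its $T'$ is covered by the newly added error boxes and $Y^{\err}$ only grows) is exactly the reason behind the paper's assertion that "the loop executes at most once for each choice."
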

\begin{proof}
The first property is immediate by definition of \bobclean. For the second property, in each while-iteration, at most $2^{m-m^{1/2}}$ elements get added to each $Y^{ij,\err}$. Moreover, there are no more than $2^n \cdot m^n \cdot 2^{n\ell} \cdot m^n = (2m)^{2n\ell}$ choices of $I$, $\alpha$, $\gamma$, $x$, and the loop executes at most once for each choice. Thus, $|Y^{ij,\err}| \le (2m)^{2n\ell} \cdot 2^{m - m^{1/2}} \le 2^{m- \Omega(m^{1/2})}$.
\end{proof}

\bigskip\bigskip
\subsection*{Acknowledgements}
We thank Robert Robere for discussions and anonymous STOC reviewers for comments. The first author was supported in part by the NSF grant No.\ CCF-1412958. The third and fourth authors were supported by NSERC, and the fourth author was supported in part by NSF grant No.\ CCF-1900460.

\pagebreak

\DeclareUrlCommand{\Doi}{\urlstyle{sf}}
\renewcommand{\path}[1]{\small\Doi{#1}}
\renewcommand{\url}[1]{\href{#1}{\small\Doi{#1}}}
\bibliographystyle{alphaurl}
\bibliography{cp-aut}
\end{document}